\newtheorem{theorem}{Theorem}
\newtheorem{lemma}{Lemma}
\newtheorem{definition}{Definition}
\newenvironment{proof}{\par\noindent{\bf Proof:}}{\qed \par}
\newcommand{\blackslug}{\mbox{\hskip 1pt \vrule width 4pt height 8pt 
depth 1.5pt \hskip 1pt}}
\newcommand{\qed}{\quad\blackslug\lower 8.5pt\null\par\noindent}
\newcommand{\eqdef}{\stackrel{\rm def}{=}}
\title{A substructural logic for quantum measurements}
\author{Daniel Lehmann\thanks{School of Computer Science and Engineering, Hebrew University, 
Jerusalem 91904, Israel email: lehmann@cs.huji.ac.il}
}
\date{July 2023}
\begin{document}
\maketitle
\begin{abstract}
This paper presents a substructural logic of sequents with very restricted exchange and
weakening rules.
It is sound with respect to sequences of measurements of a quantic system.
A sound and complete semantics is provided.
The semantic structures include a binary relation that expresses orthogonality between 
elements and enables the definition of an operation that generalizes the projection operation in 
Hilbert spaces.
The language has a unitary connective, a sort of negation,
and two dual binary connectives that are neither commutative nor associative, sorts of
conjunction and disjunction.
This provides a logic for quantum measurements whose proof theory is aesthetically pleasing.
\end{abstract}

\section{Introduction} \label{sec:intro}
\subsection{Purpose of this paper} \label{sec:purpose}
This paper tries to develop a logical framework for the understanding of measurements in 
quantum mechanics. 
In quantum mechanics the information gathered by a sequence of two measurements, one
after the other, e.g., measuring a spin as $+ \frac{1}{2}$ in the $x$ direction and then 
$+ \frac{1}{2}$ in the $y$ direction, is not equivalent to the information gathered by those 
measurements performed in the opposite order: measurements do not commute.
The purpose of this paper is to study this non-commutative logic.

\subsection{Background} \label{sec:background}
Here is a description of the background on which this work has been elaborated.
Following~\cite{BirkvonNeu:36}, many proposed to elaborate Quantum Logic in 
orthomodular lattices: see, e.g., 
\cite{Finch_lattice:69, Finch_Sasaki:69,  Clark:quantum, CutlandGibbins:sequent, 
Pavicic:Biblio, Soler:95, DallaChiara:01, Coecke_Smets_hook:2001}.
This work takes the position that lattices are not rich enough structures to
model the projection operation that is central to quantum logic.
The definition and the study of structures in which the Sasaki projection generalizes 
the projection of a subspace onto another subspace is the goal of this paper.
Projection is the natural \emph{conjunction} of quantum logic, 
not the intersection considered by~\cite{BirkvonNeu:36} and 
this is the connective studied in this paper.
A good overview of the state of the art may be found in~\cite{Handbook:EGL1} and
a thorough discussion of its philosophical aspects in~\cite{deRonde+:2022}.
The present effort differs from the extant literature (see, e.g., \cite{Nishimura:JSL, 
Tokuo:2022, Kornell:arxiv_2021, Tokuo_implicational:2022, Fazio+3:2021}) 
in two fundamental ways:
\begin{itemize}
\item
the semantics of our logic is provided by structures that are much richer than orthomodular
lattices. Our O-spaces are orthomodular lattices that enjoy additional properties that bring 
them closer to Hilbert spaces, and
\item
the central connective we consider is not interpreted as intersection but as projection. 
\end{itemize}
One may remark that conjunction is not a natural connective in quantic matters: the meaning
of \emph{ the spin of the electron is $+\frac{1}{2}$ in the $x$-direction \emph{and} 
$+\frac{1}{2}$ in the $y$-direction} is unclear.
On the contrary the claim \emph{ the spin was measured $+\frac{1}{2}$ in the $x$-direction
\emph{and then} measured $+\frac{1}{2}$ in the $y$-direction} is perfectly clear.

This work owes a lot to the pioneering work of Girard on linear logic~\cite{Girard:87}.
It tries to do for Quantum Logic what Girard did for logics that satisfy the Exchange rule.
It describes Quantum Logic as a substructural logic in which the sides of a sequent are 
ordered sequences and not multi-sets as in Linear Logic,
see~\cite{Girard:2004, Girard:McGill, Yetter:LL}.
Like Linear Logic, our Quantum Logic admits pervasive symmetries.
The proof of the completeness result in Theorem~\ref{the:omega} uses the method 
presented in~\cite{Girard:87}.
The research stream presented 
in~\cite{LEG:Malg, Qsuperp:IJTP, SP:IJTP, Lehmann_andthen:JLC} is a precursor of this paper. 
This paper manages to circumscribe the logical part of the general endeavor.

\subsection{Plan of this paper} \label{sec:plan}
Section~\ref{sec:O-spaces} presents the mathematical structures on which we want 
to interpret our logic: O-spaces, and presents examples of such spaces.
Two dual operators are defined:
they are the basis of a non-commutative, non-associative \emph{set theory}.
In a sense, this is the elementary set theory of the quantum world.
Section~\ref{sec:O-properties} studies at length the properties of O-spaces and their
associated operations.
Section~\ref{sec:syntax} defines the language we want to use, 
the interpretation of the language over O-spaces, 
the notion of a sequent and its interpretation.
Section~\ref{sec:rules} proposes a sound and complete set of deduction rules 
for the logic of O-spaces.
Section~\ref{sec:future} reviews future lines of research.

\section{O-spaces} \label{sec:O-spaces}
This section proposes O-spaces as mathematical structures generalizing Hilbert spaces.
They will provide the semantics for the quantum logic to be described later.
The definition of O-spaces consists in five conditions that embody fundamental logical
principles. 
These conditions are, essentially, properties of the Sasaki projection that generalizes 
the projection operation of Hilbert spaces to O-spaces, as will be seen 
in Section~\ref{sec:Sasaki}.
It is remarkable that O-spaces satisfy orthomodularity 
but not the MacLane-Steinitz (MLS) exchange property.
An additional condition, described in~\cite{Lehmann-metalinear1:2022}, 
implies the MLS exchange property.
The present effort should be compared to~\cite{Domenech+:2008}.

\subsection{First concepts} \label{sec:O-def}
We shall consider structures of the type \mbox{$ \langle X , \bot , {\cal F} \rangle $} where 
the carrier $X$ is any non-empty set, \mbox{$\bot \subseteq X \times X $} is a binary relation
on $X$ and $\cal F$ is a family of subsets of $X$.
The set $X$ should be viewed as the set of possible states (of the world, or of a physical 
system) and a subset $A$ of $X$ represents a proposition: \emph{the world is in one of the 
states of the set $A$}.
The nature of \emph{a possible state} is left undefined: in the quantum world one may think
of a pure state.

In classical logic or classical physics two different states of the world exclude each other:
if the system is in state $x$, it cannot be measured in a state $y$ different from $x$.
In quantum physics, a system in state $x$ can, typically, be found, 
by a measurement, in a state $y$ different from $x$. 
The exclusion relation, in quantum physics, is orthogonality: if $x$ and $y$ are orthogonal,
then a system in state $x$ cannot be measured in state $y$.
The relation $\bot$ represents this orthogonality relation: for \mbox{$ x , y \in X $},
\mbox{$ x \, \bot \, y $} should be understood as \emph{the states $x$ and $y$ are 
orthogonal}.
Different states are not, typically, orthogonal to each other and
quantum logic must, therefore, be different from classical logic.
As expected, quantum logic will not be \emph{bivalent}: 
if $\alpha$ is the proposition
\emph{ the spin of the electron is $- \frac{1}{2}$ along the x-axis } its negation will be
\emph{ the spin is $\frac{1}{2}$ along the x-axis} but both propositions may fail to be true, if
the electron is in some superposition state.

Not all subsets of $X$ correspond to possible concepts that can be referred to by propositions. 
The family $\cal F$ is the family of subsets of $X$ that can correspond to concepts.
Definition~\ref{def:O-space} will describe the properties of $\bot$ and $\cal F$ that 
characterize O-spaces.
We need, first, to put down some definitions and fix some notations concerning binary relations.
Definition~\ref{def:first-concepts} does not assume anything on the relation $\bot$, but,
it corresponds to more or less familiar concepts only when the relation $\bot$ is 
a symmetric relation.
The reader should think of $\bot$ as the relation of orthogonality 
between vectors of an inner-product vector space.
\begin{definition} \label{def:first-concepts}
Assume a structure \mbox{$ \langle X , \bot \rangle $}.
For any \mbox{$ A , B \subseteq X $} we shall say that
$A$ and $B$ are orthogonal and write \mbox{$ A \, \bot \, B $} iff 
\mbox{$ a \, \bot \, b $} for any \mbox{$ a \in A $}, \mbox{$ b \in B $}.
We shall also write \mbox{$ x \, \bot \, A $} for \mbox{$ \{ x \} \, \bot \, A $}.
We define the orthogonal complement of any subset $A$ of $X$ by:
\begin{equation} \label{eq:complement}
A^\bot \eqdef \{ b \in X \mid b \, \bot \, a , \forall a \in A \} 
\end{equation}
and the closure of any such subset by \mbox{$ \overline{A} \eqdef {A^\bot}^\bot $}.
We shall say that \mbox{$ A\subseteq X$} is a \emph{flat} iff \mbox{$ A = \overline{A}$}.
The set \mbox{$ Z \subseteq X $} is defined by:
\mbox{$ Z \eqdef \{ y \in X \mid y \, \bot \, x , \forall x \in X \} $}, the set of all states that are
orthogonal to every state.
For any \mbox{$ x , y \in X $}, we shall say that $x$ and $y$ are \emph{equivalent} and write
\mbox{$ x \sim y $} iff \mbox{$ \{ x \}^\bot = \{ y \}^\bot $}, i.e., if $x$ and $y$ are 
orthogonal to exactly the same elements.
\end{definition}

The word \emph{flat} is borrowed from matroid theory.
Let us just say that the orthogonal complement of $A$, $A^\bot$ represents the negation of the 
proposition represented by $A$ and that the closure $\overline{A}$ represents the concept
spanned by $A$. Flats represent experimental propositions.
The meaning of the equivalence relation $\sim$ is that, for all that matters, equivalent elements
play exactly the same role in the structure, in particular they are contained in exactly 
the same flats.
The set $Z$ is the \emph{zero} set of the structure, the set of elements that are orthogonal
to every element, such as the vector $\vec{0}$ in Hilbert spaces.
One may notice that, in contrast with the philosophical tradition, dating back to the dawn of 
greek philosophy, the basic notion is taken here to be, not that of identity, but that of difference. 
This can be compared to the approach taken in~\cite{Post:1963}.

\subsection{The Sasaki projection and its dual} \label{sec:Sasaki}
We shall, now, define a binary operation, in such structures, that generalizes the projection 
operation of Hilbert spaces.
It is by this operation that we want to interpret the binary connectives of quantum logic.
The remark that projection can be formalized without the inner product should probably
be credited to Usa Sasaki~\cite{Sasaki:1954}.

In physics a measurement transforms a knowledge set, i.e., a set of states, into
another knowledge set.
In classical physics a measurement only adds some new information, i.e., 
the new knowledge set is a subset of the old set.
In quantum physics, this is not the case and a measurement is modeled by a projection,
in a Hilbert space, of a set of states onto a linear subspace.
This paper tries to abstract from the Hilbert space formalism and to discover the logic
of knowledge update behind quantum physics.
Therefore we must consider an operation on two flats that models projection of subspaces
in Hilbert space. The family $\cal F$ is not involved in the following definition.
The reader should think of the projection of a singleton $\{ x \}$ on a flat $A$ as the projection
of the vector $\vec{x}$ on $A$, renormalized and up to an arbitrary phase factor.
 
\begin{definition} \label{def:projection}
Assume a structure \mbox{$ \langle X , \bot \rangle $}.
Let \mbox{$A , B \subseteq X $}.
We shall define the projection of $A$ onto $B$ and its dual by
\begin{equation} \label{eq:proj}
A \otimes B \eqdef \overline{B} \cap {( \overline{B} \cap A^\bot)}^\bot  \ \ \ , \ \ \ 
A \oplus B \eqdef {( B^\bot \otimes A^\bot)}^\bot.
\end{equation}
\end{definition}

Since those operations are not commutative, we have a choice to make when defining the dual:
keep the order of the operands or use the reverse order.
For formatting reasons, which will be clear in Section~\ref{sec:soundness} we choose the
reverse order. 
We shall assume that the operator $\otimes$ has precedence over $\oplus$ and that 
both $\otimes$ and $\oplus$ have precedence over union and intersection.

\subsection{O-spaces} \label{sec:O-space-def}
Definition~\ref{def:O-space} defines structures that generalize sets with complementation and 
intersection and also inner-product vector spaces under orthogonal completion and projection.
\begin{definition} \label{def:O-space}
A structure \mbox{$ \langle X , \bot , {\cal F} \rangle $} is said to be an \emph{O-space} iff:
\begin{enumerate}
\item \label{conditionS}
{\bf S \ \ \ } the relation $\bot$ is \emph{symmetric}: 
for any \mbox{$ x , y \in X $}, \mbox{$ x \, \bot \, y $} iff \mbox{$ y \, \bot \, x $},
\item \label{conditionZ}
{\bf Z \ \ \ } \ for any \mbox{$ x \in X $} such that \mbox{$ x \, \bot \, x $}, one has
\mbox{$ x \in Z $},
\item \label{conditionF}
{\bf F \ \ \ } every member of the family $\cal F$ is a flat, 
for any \mbox{$ x \in X $} one has \mbox{$\overline{ \{ x \} } \in {\cal F} $}, 
\mbox{$ Z \in {\cal F} $},
if \mbox{$ A \in {\cal F} $} then \mbox{$ A^\bot \in {\cal F} $}, 
if \mbox{$ A , B \in {\cal F} $} then \mbox{$ A \otimes B \in {\cal F} $},
\item \label{conditionO}
{\bf O \ \ \ } for any \mbox{$x \in X $} and any \mbox{$ A \in {\cal F} $}
\[
x \in \overline{ \{ x \} \otimes A \cup \{ x \} \otimes A^\bot }
{\rm \ and \ } \{ x \} \otimes A \subseteq 
\overline{ \{ x \} \cup \{ x \} \otimes A^\bot },
\]
\item \label{conditionA}
{\bf A \ \ \ } for any \mbox{$ A , B \in {\cal F} $}, \mbox{$ A \otimes B \subseteq $}
\mbox{$ \bigcup_{a \in A} \{ a \} \otimes B $}.
\end{enumerate}
\end{definition}

O-spaces provide an abstraction of Hilbert spaces that is devoid of numbers.
This fits the author's desire to describe the logical aspects of QM.
Each of the five defining properties of O-spaces can be straightforwardly justified on ontological
or epistemological grounds.
For {\bf S}, if $x$ cannot be confused with $y$, then $y$ cannot be confused with $x$.
{\bf Z} states that an element that is orthogonal to itself, i.e., different from itself, 
is different from any element.
Those principles parallel the law of identity, so fundamental in philosophical logic.
The requirement that $\bot$ be irreflexive, i.e., that no element be distinguished 
from itself, implies {\bf Z}.
The weaker {\bf Z} fits better both the orthogonality relation in Hilbert spaces as shown 
in Section~\ref{sec:Hilbert} and the logical space described 
in Section~\ref{sec:completeness}.
Property {\bf F} enumerates our conditions on the family $\cal F$.
Note that we do not require that $A \cap B$ be an element of $\cal F$ 
when $A$ and $B$ are. 
We cannot require Property {\bf O} from all flats, since in the model of 
Section~\ref{sec:completeness} used to prove completeness the property holds only
for flats that are defined by a single proposition.
Suppose $A$ and $B$ represent quantic properties.
A system in state $x$ can be tested for property $A$: either the answer will be 
\emph{yes} and the state will be an element of \mbox{$ \{ x \} \otimes A $} or 
the answer will be \emph{no} and the state will be an element of 
\mbox{$ \{ x \} \otimes A^\bot $}.
{\bf O} expresses the fact that state $x$ is a superposition of the two projections.
Similarly, $x \otimes A$ is a superposition of $x$ and $x \otimes A^{\bot}$.
{\bf A} expresses the idea that the projection of a set of states is the set 
of the projections of the individual states.

\subsection{Examples} \label{sec:examples}
Before proceeding with the study of O-spaces, 
we shall describe some examples of O-spaces.

\subsubsection{Inner-product spaces} \label{sec:Hilbert}
Hilbert spaces constitute our main inspiration for Definition~\ref{def:O-space}.
Since norms and completeness do not play a role here, all we are going to say holds 
for inner-product spaces in general, but we shall use the Hilbert space terminology 
since that is what is used in QM.
Let \mbox{$ X = \cal H$} be a Hilbert space and set \mbox{$ \vec{a} \, \bot \, \vec{b} $} iff 
\mbox{$ \langle \vec{a} \mid \vec{b} \rangle = 0 $} for any $\vec{a}$ and $\vec{b}$ 
in $\cal H$ and let $\cal F$ be the set of all subspaces of $X$.
For any \mbox{$ A \subseteq X $}, $A^\bot$ is the subspace including all vectors
orthogonal to all elements of $A$ and $\overline{A}$ is the linear subspace spanned
by the vectors of $A$, i.e., the minimal linear subspace including all elements of $A$.
Flats correspond exactly to linear subspaces.
{\bf S} holds true.
{\bf Z} holds because the only vector orthogonal to itself is the zero vector and the zero
vector is orthogonal to any vector.
Equations~(\ref{eq:proj}) imply that \mbox{$ A \otimes B $} is the subspace that is the
projection of the subspace spanned by $A$ onto the subspace spanned by $B$ and
therefore {\bf F} is satisfied.
{\bf O} is then seen to hold: any vector is a linear combination of its projections 
on any subspace $A$ and on its orthogonal complement $A^\bot$.
Also the projection of a vector $\vec{v}$ on $A$ is a linear combination of $\vec{v}$ 
and the projection of $\vec{v}$ on $A^\bot$.
{\bf A} is easily seen to hold.

\subsubsection{Sets} \label{sec:sets}
If Hilbert spaces form the framework in which quantum physics is described, 
elementary set theory is the framework for classical physics: the states of a system are sets
of properties.
Even though it is clear that classical systems are quantum systems with certain specific 
properties it is almost impossible to describe classical systems in the quantum framework and 
one does not know how to describe quantum physics without using, as a guide, 
the classical framework.
It must be, therefore, comforting that the logic of classical systems (elementary set theory) is
a special case of O-spaces.

Let $X$ be any non-empty set. 
Consider the structure \mbox{$ \langle X , \neq , {2}^{X} \rangle $}.
For any \mbox{$ A \subseteq X $}, \mbox{$ A^\bot =$} \mbox{$ X - A $}.
For two subsets \mbox{$ A , B \subseteq X $}, \mbox{$ A \, \bot \, B $} iff 
\mbox{$ A \cap B = \emptyset $} and \mbox{$ \overline{A} = A $}.
$\cal F$ is taken to be ${2}^{X}$.
Any subset of $X$ is a flat.
For any sets $A$ and $B$, \mbox{$ A \otimes B = A \cap B $} and 
\mbox{$ A \oplus B = A \cup B $}.
One sees easily that the conditions of Definitions~\ref{def:O-space} are satisfied.

\subsubsection{Subsets} \label{sec:subsets}
This example was suggested by Menachem Magidor.
Let $Y$ be any set. Let \mbox{$ X = {2}^{Y} $} and let, for any \mbox{$ x , y \subseteq Y $}, 
\mbox{$ x \, \bot \, y $} iff \mbox{$ x \cap y = \emptyset $}.
Let $\cal F$ be the set of all flats.
The relation $\bot$ is symmetric and, for any \mbox{$ x \in X $}, 
\mbox{$ x \cap x = \emptyset $} implies \mbox{$ x = \emptyset $} and 
\mbox{$ \emptyset \, \bot \, y $} for any \mbox{$ y \in X $}, 
which proves that {\bf Z} is satisfied.
For any \mbox{$ A \subseteq X $}, let \mbox{$ \tilde{A} = \bigcup_{a \in A} \{ a \} $} and
let \mbox{$\widehat{A} \in X $} be 
\mbox{$ \widehat{A} \eqdef $} \mbox{$ \{ B \subseteq Y \mid B \subseteq \tilde{A} \} $}.
For any \mbox{$ A \subseteq X $}, we have: 
\mbox{$ A^\bot =$} \mbox{$ \widehat{Y - \tilde{A}} $}, the set of all subsets of $Y$ 
that do not intersect any of the elements of $A$, the set of all subsets 
of the complement of $\tilde{A}$.
Then, $\overline{A}$ is $\widehat{\tilde{A}}$.
A set \mbox{$ A \subseteq X $} is a flat iff it is of the form $2^{B}$ for some 
\mbox{$B \subseteq Y $}.
For any \mbox{$ A , B $}, 
\[
\widehat{\tilde{B}} \cap \widehat{ Y - \tilde{A}} = \widehat{ \tilde{B} - \tilde{A} } 
{\rm \ and \ } A \otimes B = 
\widehat{\tilde{B}} \cap \widehat{ ( Y - \widehat{ \tilde{B} - \tilde{A} } ) } = 
\widehat{ \tilde{A} \cap \tilde{B} }.
\]
If $A$ and $B$ are flats, then \mbox{$A \otimes B = $} \mbox{$ A \cap B = $}
\mbox{$ \widehat{ \tilde{A} \cap \tilde{B} } $}.
Property {\bf O} is easily seen to hold.
Let $A$ and $B$ be flats, \mbox{$ A = 2^{A'} $} and \mbox{$ B = 2^{B'} $}.
\mbox{$ A \otimes B = 2^{A' \cap B'} $}.
For any \mbox{$ a \in A $}, \mbox{$ a \subseteq A' $}, \mbox{$ \overline{ \{ a \} } = $}
\mbox{$ 2^a$}  and \mbox{$ \{ a \} \otimes B = $} \mbox{$ \widehat{ a \cap \tilde{B} } $}.
If \mbox{$ x \in A \otimes B $}, \mbox{$ x \subseteq A' \cap B' $} and 
\mbox{$ x \subseteq x \cap B' = $}, \mbox{$ \tilde{ \{ x \} } \cap \tilde{B} $} and
\mbox{$ x \in \{ x \} \otimes B $}.
We have shown that \mbox{$ \bigcup_{a \in A} \{ a \} \otimes B = $} 
\mbox{$ \overline{B} \cap {( \overline{B} \cap A^\bot )}^\bot $}, i.e., that property 
{\bf A} holds true.

\subsubsection{Union} \label{sec:union}
Given any two O-spaces: \mbox{$ \langle X_{0} , \bot_{0} , {\cal F_{0}} \rangle $} and 
\mbox{$ \langle X_{1} , \bot_{1} , {\cal F_{1}} \rangle $} where 
\mbox{$ X_{0} \cap X_{1} = \emptyset $} one can define their sum by 
\mbox{$ X = X_{0} \cup X_{1} $} and
\begin{itemize}
\item 
\mbox{$ x \, \bot \, y $} iff \mbox{$ x \, \bot_{i} \, y $} for \mbox{$ x , y \in X_{i} $},
\mbox{$ i = 0 , 1 $}, and
\item
\mbox{$ x \, \bot \, y $} for \mbox{$ x \in X_{i} $} and \mbox{$ y \in X_{i + 1} $} where
addition is understood modulo $2$.
\item
\mbox{$ {\cal F} = \{ A \cup B \mid A \in {\cal F_{0}} , B \in {\cal F_{1}} \} $}.
\end{itemize}
For any \mbox{$ A \subseteq X $}, 
\mbox{$ A^\bot =$}
\mbox{$ {( A \cap X_{0} )}^{\bot_{0}} \cup {( A \cap X_{1} )}^{\bot_{1}} $} and therefore 
we also have \mbox{$ \overline{A} = $} 
\mbox{$ { {( A \cap X_{0} )}^{\bot_{0}}}^{\bot_{0}} \cup 
{ {( A \cap X_{1} )}^{\bot_{1}}}^{\bot_{1}} $}.

One easily sees that the sum of two O-spaces is an O-space.
In fact, one may define infinite sums in the same way and with the same property.

The union construction enables the definition of structures that are fit 
for the description of quantic systems with a superselection rule that forbids 
the existence of certain superpositions.
A particle whose total spin is unknown cannot be found in a superposition of a particle
of total spin $\frac{1}{2}$ and a particle of spin $1$.
The union of a two-dimensional and a three dimensional O-space is suitable to represent
such a system: there are no superpositions of states in different parts of the union.

\section{Properties of O-spaces} \label{sec:O-properties}
We shall present a wealth of results concerning O-spaces.
In particular we shall show that O-spaces satisfy orthomodularity. 
We shall introduce the five properties of Definition~\ref{def:O-space} in two batches.
First, we consider a structure \mbox{$ \langle X , \bot \rangle $} 
where $\bot$ satisfies {\bf S} and {\bf Z}.
\subsection{Consequences of {\bf S} and {\bf Z}} \label{sec:SZ}
Our main result is that, if {\bf S} is satisfied, the operation \mbox{$ A \mapsto \overline{A}$} 
is a closure operation,
but note that this property stems from the properties of its \emph{square root}, 
the orthogonal complement operation, so that it is not just any closure operation.
\begin{lemma} \label{the:symmetric}
Let \mbox{$ \langle X , \bot \rangle $} be a structure that satisfies {\bf S} and {\bf Z}.
For any \mbox{$ A , B , C \subseteq X $}, for any family, finite or infinite, 
\mbox{$A_{i} \subseteq X $}, \mbox{$ i \in I $}, any \mbox{$ x , y \in X $}:
\begin{enumerate}
\item \label{union-bot}
\mbox{$ {( \bigcup_{i \in I} A_{i} )}^\bot = $} \mbox{$ \bigcap_{i \in I} A_{i}^\bot $} and 
\mbox{$ ({ \bigcap_{i \in I} \overline{ A_{i} } )}^\bot = $} 
\mbox{$ \overline{ \bigcup_{i \in I} A_{i}^\bot } $},
\item \label{anti}
if \mbox{$ A \subseteq B $}, then 
\mbox{$ B^\bot \subseteq A^\bot $} and \mbox{$ \overline{A} \subseteq \overline{B} $},
\item \label{closure}
\mbox{$ \overline{A}^\bot =$} \mbox{$ A^\bot $} and the operation 
\mbox{$ A \mapsto \overline{A}$} is a closure operation, i.e., it is monotone,
\mbox{$ A \subseteq \overline{ A } $} and \mbox{$ \overline{ \overline{ A } } = $}
\mbox{$ \overline{ A } $},
\item \label{BxinA}
for any \mbox{$ A \subseteq X $} and any \mbox{$ x \in X $}, 
if \mbox{$ x \in A $}, then \mbox{$ x \in \{ x \} \otimes A $},
\item \label{flat}
A set \mbox{$ A \subseteq X $} is a flat iff there is a set \mbox{$ B \subseteq X $} such that
\mbox{$ A = B^\bot $}, the intersection of a family of flats is a flat and, for any sets 
\emph{not necessarily flats} $A$ and $B$,
\mbox{$ A^\bot $}, \mbox{$ A \otimes B $} and \mbox{$ A \oplus B $} are flats,
\item \label{bar-cup}
\mbox{$ \overline{ \bigcup_{i \in I} A_{i} } = $} 
\mbox{$ \overline{ \bigcup_{i \in I} \overline{ A_{i} } } $},
\item \label{Xempty} 
\mbox{$ Z^\bot = X $} and \mbox{$ X^\bot = Z $},
therefore $Z$ and $X$ are flats,
\item \label{empty-full}
\mbox{$ \overline{ A } \cap A^\bot = Z $} (implying \mbox{$ Z \subseteq \overline{A} $}),
\mbox{$ \overline{ A \cup A^\bot } = X $},
\item \label{OAUB} 
\mbox{$ A \otimes ( \overline{B} \cap \overline{ A \cup B^\bot } ) = $}
\mbox{$ A \otimes B $},
\item \label{bar2}
\mbox{$ A \otimes B =$} \mbox{$ A \otimes \overline{B} = $}
\mbox{$ \overline{A} \otimes B = $} \mbox{$ \overline{ A \otimes B } $} and 
\mbox{$ A \oplus B = $} \mbox{$ A \oplus \overline{B} = $}
\mbox{$ \overline{A} \oplus B = $} \mbox{$ \overline{ A \oplus B } $},
\item \label{overline}
\mbox{$ A \otimes B = \overline{B} \cap \overline{ B^\bot \cup A } = $}
\mbox{$ {( B^\bot \oplus A^\bot )}^\bot $} and
\mbox{$ A \oplus B = \overline{ A \cup ( A^\bot \cap \overline{B} ) } $},
\item \label{bot-union}
if \mbox{$ A \bot B $}, one has \mbox{$ A \oplus B = \overline{A \cup B} $},
\item \label{AinterB}
\mbox{$ A\cap B \subseteq A \otimes B $} and \mbox{$ A \oplus B \subseteq$}
\mbox{$ \overline{A \cup B} $},
\item \label{AsubB1}
if \mbox{$ A \subseteq B $}, then \mbox{$ B \otimes A =$} \mbox{$ \overline{A} $} and
\mbox{$ A \oplus B = $} \mbox{$ \overline{B} $}, 
\item \label{BbotA}
if \mbox{$ A \, \bot \, B $}, then \mbox{$ ( B \cup C ) \otimes A = $}
\mbox{$ C \otimes A $},
\item \label{plus-union}
if \mbox{$ A \, \bot \, B $}, then \mbox{$ A \oplus B =$} \mbox{$ \overline{ A \cup B } $} and
\mbox{$ A^\bot \otimes B^\bot = $} \mbox{$ A^\bot \cap B^\bot $}.
\end{enumerate}
\end{lemma}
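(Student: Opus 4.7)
The plan is to prove the sixteen claims sequentially, since each leans on earlier ones. For parts 1--3, part 1 is pure de Morgan duality for $\bot$; part 2 is antitonicity of $(-)^\bot$; part 3 combines {\bf S} (which gives $A \subseteq A^{\bot\bot}$) with antitonicity to yield $\overline{A}^\bot = A^\bot$ and closure idempotency. For parts 4--8, part 4 uses {\bf S} directly (every $y \in \{x\}^\bot$ has $x \bot y$, so $x$ is orthogonal to $\overline{A} \cap \{x\}^\bot$); part 5 characterizes flats as $B^\bot$-sets via part 3, with intersections being flats by part 1 and $\otimes, \oplus$ producing flats by inspection of the definition; parts 6 and 7 are routine consequences. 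The critical step is part 8, where {\bf Z} becomes essential: any $x \in \overline{A} \cap A^\bot$ satisfies $x \bot x$ and hence lies in $Z$; then $\overline{A \cup A^\bot} = (A^\bot \cap \overline{A})^\bot = Z^\bot = X$.

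Parts 9--13 are algebraic manipulations built on de Morgan (part 1), closure idempotency (part 3), and $\overline{A} \cap A^\bot = Z$ (part 8). I would first establish part 10, which follows from idempotency of closure together with part 5's fact that $\otimes$ and $\oplus$ always yield flats. Part 11 rewrites the inner complement $(\overline{B} \cap A^\bot)^\bot$ as $\overline{B^\bot \cup A}$ via de Morgan, and its second equation is obtained by dualizing through $A \oplus B = (B^\bot \otimes A^\bot)^\bot$. Part 9 reduces via part 11 to an iteration of $\otimes$, which collapses using part 8 since $A \otimes B$ is a flat and intersecting with $A^\bot$ lands in $Z$. Part 12 observes that $A \bot B$ forces $\overline{B} \subseteq A^\bot$, so $A^\bot \cap \overline{B} = \overline{B}$ in part 11's formula. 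Part 13's first inclusion uses the technique of part 4; the second follows by duality.

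Parts 14--16 cover special cases. Part 14's first equation $B \otimes A = \overline{A}$ when $A \subseteq B$ drops out of part 8 directly: $\overline{A} \cap B^\bot \subseteq \overline{A} \cap A^\bot = Z$ collapses the second factor in the definition of $\otimes$, so $(\overline{A} \cap B^\bot)^\bot = Z^\bot = X$ and $B \otimes A = \overline{A}$. Parts 15 and 16 specialize to orthogonal pairs via the identities already in hand. The step I expect to be the main obstacle is part 14's second equation $A \oplus B = \overline{B}$ under $A \subseteq B$: the inclusion $A \oplus B \subseteq \overline{B}$ is immediate from part 13 (since $\overline{A \cup B} = \overline{B}$), but the reverse has an orthomodular flavor. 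My plan is to first prove the symmetric companion $A \otimes B = \overline{A}$ when $A \subseteq B$ --- the inclusion $\overline{A} \subseteq A \otimes B$ is immediate because $\overline{A} \subseteq \overline{B}$ and $\overline{A} \bot A^\bot$, while the reverse inclusion should be extracted by a careful containment argument using parts 8 and 11 --- and then apply this companion to $B^\bot \subseteq A^\bot$ via the defining duality $A \oplus B = (B^\bot \otimes A^\bot)^\bot$ to obtain $B^\bot \otimes A^\bot = B^\bot$, hence $A \oplus B = \overline{B}$.
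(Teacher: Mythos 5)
Your plan for items 1 through 13, for items 15 and 16, and for the first equation of item 14 tracks the paper's proof essentially step for step. The only organizational difference is that you derive item 9 from items 10 and 11, whereas the paper proves item 9 directly from items 3, 5 and 8 and only then does 10 and 11; your reordering introduces no circularity, and the collapsing computation ($(A\otimes B)\cap A^\bot=Z$ by item 8) is the same one the paper performs.

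The genuine gap is exactly at the spot you flagged: the reverse inclusion of your ``symmetric companion'', namely $A \otimes B \subseteq \overline{A}$ for $A \subseteq B$. By item 11 this says $\overline{B} \cap \overline{B^\bot \cup A} \subseteq \overline{A}$, which, on flats, is precisely the orthomodular law for the ortholattice of flats, and it is \emph{not} a consequence of {\bf S} and {\bf Z} alone. Concretely, let $X = \{1,2,3,4,5\}$ with $i \, \bot \, j$ iff $i$ and $j$ are adjacent on a pentagon: {\bf S} holds, {\bf Z} holds vacuously (no element is self-orthogonal, so $Z = \emptyset$), and $A = \{1\}$, $B = \{1,3\}$ are flats with $A \subseteq B$; yet $B^\bot \otimes A^\bot = \{2\} \otimes \{2,5\} = \{2,5\}$, so $A \oplus B = \{1\} \neq \{1,3\} = \overline{B}$. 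Hence no ``careful containment argument using parts 8 and 11'' can deliver the missing inclusion, and the second equation of item 14 is simply unavailable at this level of generality; the paper itself only obtains the companion fact for O-spaces, in Lemma~\ref{the:1-dim}, items~\ref{small-big} and~\ref{orthomodularity}, using properties {\bf O} and {\bf A}. You should also know that the paper's own proof of this step is defective in a different way: from the first equation it correctly derives $A^\bot \otimes B^\bot = B^\bot$, but what is needed for $A \oplus B = {(B^\bot \otimes A^\bot)}^\bot$ is the product in the opposite order, so the paper's conclusion does not follow either. Your diagnosis of where the difficulty lies is in fact sharper than the paper's resolution of it; the honest repair is to restrict the second equation of item 14 to O-spaces and prove it after orthomodularity.
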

\begin{proof}
\begin{enumerate}
\item 
The first equality is obvious from Definition~\ref{def:first-concepts}.
Then, \mbox{$ {( \bigcup_{i \in I} A_{i}^\bot )}^\bot = $} 
\mbox{$ \bigcap_{i \in I} \overline{ A_{i} } $} and 
\mbox{$ \overline{ {( \bigcup_{i \in I} A_{i}^\bot )} } = $} 
\mbox{$ {( \bigcap_{i \in I} \overline{ A_{i} } )}^\bot $}.
\item 
By Definition~\ref{def:first-concepts}.
\item 
If \mbox{$ a \in A $} and \mbox{$ b \in A^\bot $}, we have
\mbox{$ b \, \bot \, a $} and, by {\bf S}, \mbox{$ a \, \bot \, b $} and therefore
\mbox{$ a \in {A^\bot}^\bot = $} $\overline{A} $.
We have shown that \mbox{$ A \subseteq \overline{A} $}.
By item~\ref{anti} above we conclude that \mbox{$ {\overline{A}}^\bot \subseteq $} 
\mbox{$ A^\bot $}.
But we also see that \mbox{$ A^\bot \subseteq $} \mbox{$ {{A^\bot}^\bot}^\bot = $}
\mbox{$ \overline{A}^\bot $}. 
We have shown that \mbox{$ {{A^\bot}^\bot}^\bot = $} \mbox{$ A^\bot $}.
Therefore \mbox{$ \overline{\overline{A}} = \overline{A} $} and, by item~\ref{anti},
we have shown that the operation \mbox{$ A \mapsto \overline{A}$} is a closure operation.
\item
Obviously \mbox{$ x \, \bot \, \overline{A} \cap \{ x \}^\bot $} and therefore \mbox{$ x \in A $}
implies \mbox{$ x \in \{x \} \otimes A $}, by Equations~(\ref{eq:proj}) and 
item~\ref{closure} above.
\item 
The \emph{only if} part is obvious from the definition of a flat.
For the \emph{if} part, assume \mbox{$ A = B^\bot $}.
By item~\ref{closure} above, 
\mbox{$ \overline{A} =$} \mbox{$ B^\bot = A $}.
For any $i$, \mbox{$ \bigcap_{i} A_{i} \subseteq A_{i} $} and,
by item~\ref{anti} above, 
\mbox{$ \overline{\bigcap_{i \in I } A_{i}} \subseteq$}
\mbox{$ \bigcap_{i \in I} \overline{A_{i}} = $} \mbox{$ \bigcap_{i \in I} A_{i} $}.
Conclude by item~\ref{closure} above.
The last claim follows from Equations~(\ref{eq:proj}).
\item 
By item~\ref{closure} above, we have
\mbox{$ \overline{ \bigcup_{i \in I} A_{i}  } \subseteq $} 
\mbox{$ \overline{ \bigcup_{i \in I} \overline{ A_{i}} } $}.
But we also have 
\mbox{$ \overline{ A_{i}} \subseteq $} \mbox{$\overline{ \bigcup_{i \in I} A_{i} } $}.
Therefore
\mbox{$ \bigcup_{i \in I} \overline{A_{i}} \subseteq $} 
\mbox{$ \overline{ \bigcup_{i \in I} A_{i} } $} and
by item~\ref{closure} above we have
\mbox{$ \overline{ \bigcup_{i \in I} \overline{A_{i}} } \subseteq$} 
\mbox{$ \overline{ \bigcup_{i \in I} A_{i} } $}.
\item 
By Definition~\ref{def:first-concepts} and {\bf Z}.
\item 
If \mbox{$ x \in A^\bot \cap \overline{A} $}, \mbox{$ x \, \bot \, x $}, which implies 
\mbox{$ x \in Z $}, by {\bf Z}. 
Then, by items~\ref{union-bot}, \ref{closure} and~\ref{Xempty}, 
\mbox{$ \overline{ A \cup A^\bot } = $} 
\mbox{$ {( A^\bot \cap \overline{A} )}^\bot = $} \mbox{$ Z^\bot = X $}.
\item 
By Equations~(\ref{eq:proj}) and items~\ref{closure}, \ref{flat} and~\ref{empty-full} above:
\[
A \otimes ( \overline{B} \cap \overline{ A \cup B^\bot } ) =
\overline{B} \cap \overline{ A \cup B^\bot } \cap 
{( \overline{B} \cap \overline{ A \cup B^\bot } \cap A^\bot )}^\bot = 
\]
\[
\overline{B} \cap \overline{ A \cup B^\bot } \cap 
{( \overline{B} \cap {( A^\bot \cap \overline{B} )}^\bot \cap A^\bot )}^\bot =
\overline{B} \cap \overline{ A \cup B^\bot } \cap Z^\bot =
\]
\[
\overline{B} \cap \overline{ A \cup B^\bot } = 
\overline{B} \cap {( A^\bot \cap \overline{B} )}^\bot = A \otimes B.
\]
\item 
By Equations~(\ref{eq:proj}) and items~\ref{union-bot}, \ref{closure} and \ref{flat} above.
\item 
By Equations~(\ref{eq:proj}) and items~\ref{union-bot} and~\ref{bar2}. 
\item
One has \mbox{$ B \subseteq A^\bot $} and, by items~\ref{anti} and~\ref{closure}, 
\mbox{$ \overline{B} \subseteq A^\bot $} and, by item~\ref{overline} just above,
\mbox{$ A \oplus B = $} \mbox{$ \overline{ A \cup \overline{B} } = $}
\mbox{$ \overline{A \cup B} $} by item~\ref{bar-cup}.
\item 
By Equations~(\ref{eq:proj}) and items~\ref{closure} and~\ref{overline} above.
\item 
Assume \mbox{$ A \subseteq B $}.
By item~\ref{empty-full}, \mbox{$ \overline{A} \cap B^\bot = Z $} and therefore,
by Equations~(\ref{eq:proj}),
\mbox{$ B \otimes A = $} \mbox{$ \overline{A} \cap Z^\bot = $}
\mbox{$ \overline{A} $} by item~\ref{empty-full}.
Also, by item~\ref{anti}, \mbox{$ B^\bot \subseteq A^\bot $} and, 
by what we have just shown, \mbox{$ A^\bot \otimes B^\bot = B^\bot $}.
We see that \mbox{$ A \oplus B = \overline{B} $}.
\item 
Since \mbox{$ \overline{A} \subseteq B^\bot $}, by Equations~(\ref{eq:proj}) 
and item~\ref{union-bot}:
\[
( B \cup C ) \otimes A = \overline{A} \cap {( \overline{A} \cap B^\bot \cap C^\bot )}^\bot = 
\overline{A} \cap {( \overline{A} \cap C^\bot )}^\bot = C \otimes A.
\]
\item 
By item~\ref{overline} above, 
\mbox{$ A \oplus B = $} \mbox{$ \overline{ A \cup ( A^\bot \cap \overline{ B } ) } $}.
But, if \mbox{$ A \, \bot \, B $} we have \mbox{$ \overline{ B } \subseteq A^\bot $}
and item~\ref{bar-cup} above implies our claim.
Then, by item~\ref{overline} \mbox{$ A^\bot \otimes B^\bot = $} 
\mbox{$ {( \overline{B} \oplus \overline{A} )}^\bot = $}
\mbox{$ {( \overline{ B \cup A } )}^\bot = $}
\mbox{$ {( A \cup B )}^\bot = $}
\mbox{$ A^\bot \cap B^\bot $}.
\end{enumerate}
\end{proof}

\subsection{More properties of O-spaces} \label{sec:more}
We shall now put to work our last assumptions.
We argued in Section~\ref{sec:O-space-def} that all requirements 
of Definition~\ref{def:O-space} were of a quasi-logical nature and justifiable on 
phenomenological or epistemological ground. 
Item~\ref{orthomodularity} below states that any O-space is orthomodular, 
a property that is central to all efforts in quantum logic that followed~\cite{BirkvonNeu:36}.
Our result provides a justification or explanation for orthomodularity on the basis of four 
ontological or epistemic assumptions.
This is in line with justifications proposed 
in~\cite{Piron:1964, Jauch:foundations, Beltrametti_Cassinelli:1981, Grinbaum_ortho:2005}.
Further results deal  with a central theme in QM: commutation.
The operations $\otimes$ and $\oplus$ do not commute in general, but we point out cases 
in which they do commute.
From now on, we assume that \mbox{$ \langle X , \bot , {\cal F} \rangle $} is an O-space.
Item~\ref{left-additivity}, describes  the dependence 
of the operations $\otimes$ and $\oplus$ on their first argument: 
$\otimes$ \emph{distributes} over union and $\oplus$ \emph{distributes} over intersection.
This will be used,for example, in the soundness proof for the rule of {\bf Left-Weakening} 
in Section~\ref{sec:soundness}.
\begin{lemma} \label{the:1-dim}
For any \mbox{$ x , y \in X $} and any \mbox{$ A , A_{i} , B \in {\cal F} $}:
\begin{enumerate}
\item \label{left-additivity}
\mbox{$ ( \bigcup_{i \in I} A_{i} ) \otimes B = $}
\mbox{$ \overline{ \bigcup_{i \in I} ( A_{i} \otimes B )} $} and
\mbox{$ B \oplus \bigcap_{i \in I} A_{i} = $}
\mbox{$ \bigcap_{i \in I} ( B \oplus A_{i} ) $},
\item \label{xA}
if \mbox{$ x \in A $}, then \mbox{$ \{ x \} \otimes A = \overline{ \{ x \} } $},
\item \label{small-big}
if \mbox{$ A \subseteq B $}, \mbox{$ A \otimes B = A $},
\item \label{pS}
if \mbox{$ A \bot B $}, then \mbox{$ \overline{ A \cup B } \cap B^\bot = A $},
\item \label{orthomodularity}
{\bf Orthomodularity} \ If \mbox{$ A \subseteq B $}, 
then we have \mbox{$ B = $} 
\mbox{$ \overline{ A \cup ( A^\bot \cap B ) } $},
\item \label{commute1}
if \mbox{$ A \subseteq B $}, then \mbox{$ A \otimes B = B \otimes A $} and
\mbox{$ B \oplus A = A \oplus B $},
\item \label{neg-o-axioms1}
\mbox{$ B \cap {( A \otimes B )}^\bot = $}
\mbox{$ B \cap A^\bot $}, equivalently 
\mbox{$ B \cap ( B^\bot \oplus A^\bot ) = $}
\mbox{$ B \cap A^\bot $},
\item \label{neg-o-axioms2}
\mbox{$ B \cap {( B \otimes A )}^\bot = $}
\mbox{$ B \cap A^\bot $}, equivalently,
\mbox{$ B \cap ( A^\bot \oplus B^\bot ) = $} \mbox{$ B \cap A^\bot $},
\item \label{dna4}
\mbox{$ A \otimes ( A^\bot \oplus B ) = $} \mbox{$ A \cap B $},
\item \label{dna5}
\mbox{$ ( A^\bot \oplus B ) \otimes A = $} \mbox{$ A \cap B $},
\item \label{perp}
if \mbox{$ y \in A $}, \mbox{$ y \, \bot \, x $} iff \mbox{$ y \, \bot \, \{ x \} \otimes A $}.
\end{enumerate}
\end{lemma}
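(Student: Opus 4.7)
My plan proves the eleven items in order, leaning on three drivers: property~A for left-additivity (item~\ref{left-additivity}), property~O for items~\ref{xA} and~\ref{neg-o-axioms2}, and the orthogonal-decomposition identity of item~\ref{pS} for almost everything else. Every other ingredient is imported from Lemma~\ref{the:symmetric}.

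For item~\ref{left-additivity}, I first note that $\otimes$ is monotone in its first argument by direct inspection of Equation~(\ref{eq:proj}): if $A \subseteq A'$, then $A^\bot \supseteq A'^\bot$ drives the two $\bot$-inversions in the right direction. Property~A then gives $(\bigcup_i A_i) \otimes B \subseteq \bigcup_{a} \{a\}\otimes B \subseteq \bigcup_i (A_i \otimes B)$, and since the left-hand side is a flat (Lemma~\ref{the:symmetric} item~\ref{flat}) we may close on the right; monotonicity alone gives the reverse inclusion. The $\oplus$ statement follows by De Morgan via Lemma~\ref{the:symmetric} items~\ref{union-bot} and~\ref{bar2}. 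For item~\ref{xA}, if $x \in A$ then $\{x\}\,\bot\,A^\bot$ forces $\{x\}\otimes A^\bot = A^\bot \cap A = Z$ (Lemma~\ref{the:symmetric} item~\ref{empty-full}), so the second clause of property~O gives $\{x\}\otimes A \subseteq \overline{\{x\}\cup Z} = \overline{\{x\}}$, with the reverse inclusion from Lemma~\ref{the:symmetric} item~\ref{BxinA}. Item~\ref{small-big} then follows from item~\ref{xA} plus property~A, since $\{a\}\otimes B = \overline{\{a\}} \subseteq A$ for each $a \in A$ whenever $A \subseteq B$.

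Item~\ref{pS} is the crux of the middle block. For orthogonal flats $A, B$ I compute $(A \cup B) \otimes B^\bot$ two ways: direct expansion through Equation~(\ref{eq:proj}) gives $B^\bot \cap \overline{A \cup B}$, while item~\ref{left-additivity} expands it as $\overline{(A \otimes B^\bot) \cup (B \otimes B^\bot)} = \overline{A \cup Z} = A$, using item~\ref{small-big} (since $A \subseteq B^\bot$) and the routine $B \otimes B^\bot = Z$. Orthomodularity (item~\ref{orthomodularity}) is then immediate by equating $A \oplus B$ with $\overline{B}$ (Lemma~\ref{the:symmetric} item~\ref{AsubB1}) and with $\overline{A \cup (A^\bot \cap B)}$ (Lemma~\ref{the:symmetric} item~\ref{overline}). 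Item~\ref{commute1} follows from item~\ref{small-big} and Lemma~\ref{the:symmetric} item~\ref{AsubB1}.

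For item~\ref{neg-o-axioms1}, De Morgan gives $(A \otimes B)^\bot = \overline{B^\bot \cup (B \cap A^\bot)}$, and since $B^\bot \,\bot\, (B \cap A^\bot)$ a direct application of item~\ref{pS} with those two flats yields the identity. Item~\ref{neg-o-axioms2} is the subtlest step, because the analogous De Morgan trick does not land inside $B$: instead, given $x \in B \cap (B \otimes A)^\bot$, I use monotonicity to get $\{x\}\otimes A \subseteq B \otimes A$ and hence $x \,\bot\, \{x\}\otimes A$; property~O then places $x$ in $\overline{\{x\}\otimes A \cup \{x\}\otimes A^\bot}$, and item~\ref{pS} applied to the orthogonal flats $\{x\}\otimes A \subseteq A$ and $\{x\}\otimes A^\bot \subseteq A^\bot$ forces $x \in \{x\}\otimes A^\bot \subseteq A^\bot$. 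Items~\ref{dna4} and~\ref{dna5} reduce to item~\ref{pS} and item~\ref{neg-o-axioms1} respectively via Lemma~\ref{the:symmetric} item~\ref{overline} for the $\oplus$ expansion, using the flatness of $B$ to identify $A \cap \overline{B}$ with $A \cap B$. Item~\ref{perp} is item~\ref{neg-o-axioms1} applied to $\overline{\{x\}}$ and $A$, using Lemma~\ref{the:symmetric} items~\ref{closure} and~\ref{bar2} to replace $\overline{\{x\}}$ by $\{x\}$ on both sides. The principal obstacle is item~\ref{neg-o-axioms2}, where property~O must be deployed at exactly the right moment and item~\ref{pS} re-invoked on the resulting superposition; item~\ref{pS} itself is also a small trick, since the key is to recognise that $(A \cup B)\otimes B^\bot$ admits two completely different expansions.
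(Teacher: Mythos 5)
Your proof is correct and, for most items, follows the same skeleton as the paper's: property {\bf A} plus monotonicity in the first argument for item~\ref{left-additivity}, property {\bf O} for item~\ref{xA}, and item~\ref{pS} as the workhorse for the middle block. It diverges genuinely in four places, all defensible. For item~\ref{orthomodularity} you bypass item~\ref{pS} entirely and just equate the two expressions for \mbox{$ A \oplus B $} given by Lemma~\ref{the:symmetric}, items~\ref{AsubB1} and~\ref{overline}, which is shorter than the paper's dualization of item~\ref{pS}. For item~\ref{neg-o-axioms2} the paper expands \mbox{$ {( B \otimes A )}^\bot $} by De Morgan and applies item~\ref{pS} after noting \mbox{$ B \subseteq {( A \cap B^\bot )}^\bot $}, whereas you argue pointwise through property {\bf O}: from \mbox{$ x \, \bot \, \{ x \} \otimes A $} and the superposition \mbox{$ x \in \overline{ \{ x \} \otimes A \cup \{ x \} \otimes A^\bot } $} you extract \mbox{$ x \in \{ x \} \otimes A^\bot \subseteq A^\bot $} via item~\ref{pS}; this is a more ``physical'' argument and it works, but you only establish the inclusion \mbox{$ B \cap {( B \otimes A )}^\bot \subseteq B \cap A^\bot $} --- the converse is immediate from \mbox{$ B \otimes A \subseteq A $} and should be stated. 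You route item~\ref{dna4} through item~\ref{pS} directly rather than the paper's detour via item~\ref{neg-o-axioms1}, and you observe that item~\ref{perp} is literally item~\ref{neg-o-axioms1} specialized to \mbox{$ \overline{ \{ x \} } $} and $A$, which is slicker than the paper's from-scratch computation. One shared point of informality: applying property {\bf A} to \mbox{$ \bigcup_{i \in I} A_{i} $} in item~\ref{left-additivity} presupposes that this union (or its closure) lies in $\cal F$, which is not given when the $A_{i}$ are merely individual members of $\cal F$; the paper's own ``follows easily'' conceals exactly the same step, so this is not a defect specific to your argument.
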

\begin{proof}
\begin{enumerate}
\item 
For any \mbox{$ a \in A $}, by Lemma~\ref{the:symmetric}, item~\ref{anti} and 
Equations~(\ref{eq:proj}), \mbox{$ \{ a \} \otimes B \subseteq A \otimes B $}.
By condition {\bf A} in Definition~\ref{def:O-space}, then
\mbox{$ \bigcup_{a \in A} ( \{ a \} \otimes B ) = $} \mbox{$ A \otimes B $}.
Our first claim now follows easily.
The last claim follows by duality, using item~\ref{bar2} in Lemma~\ref{the:symmetric}.
\item
if \mbox{$ x \in A $}, then \mbox{$ A^\bot \subseteq \{ x \}^\bot $}
and, by Equations~(\ref{eq:proj}), \mbox{$ \{ x \} \otimes A^\bot = $}
\mbox{$ A^\bot \cap {( A^\bot \cap \{ x\}^\bot )}^\bot = $}
\mbox{$ A^\bot \cap A = Z $}
by Lemma~\ref{the:symmetric}, item~\ref{empty-full}.
By {\bf O}, \mbox{$ \{ x \} \otimes A \subseteq $} 
\mbox{$ \overline { \overline{ \{ x \} } \cup \{ x \} \otimes A^\bot } = $}
\mbox{$ \overline { \overline{ \{ x \} } \cup Z } = $}
\mbox{$ \overline{ \{ x \} } $}.
By Equations~\ref{eq:proj}, \mbox{$ x \in \{ x \} \otimes A $}.
Therefore \mbox{$ \overline{ \{ x \} } \subseteq $} \mbox{$ \overline{ \{ x \} \otimes A } = $}
\mbox{$ \{ x \} \otimes A $}.
\item 
By item~\ref{xA} above, Lemma~\ref{the:symmetric}, item~\ref{bar-cup} and
Lemma~\ref{the:1-dim}, item~\ref{left-additivity}.
\item 
Assume \mbox{$ A \bot B $}. We have \mbox{$ A \subseteq B^\bot $} and, 
by item~\ref{small-big} above, \mbox{$ A \otimes B^\bot = A $}.
But, by Equations~(\ref{eq:proj}) and Lemma~\ref{the:symmetric}, items~\ref{union-bot}
and~\ref{bar-cup} we have
\[
A \otimes B^\bot = B^\bot \cap {( B^\bot \cap A^\bot )}^\bot = 
B^\bot \cap \overline{ B \cup A }.
\]
\item 
By item~\ref{pS} above, since \mbox{$ B^\bot \bot \, A $} we have
\mbox{$ \overline{ B^\bot \cup A } \cap A^\bot = B^\bot $},
\mbox{$ \overline{ {( B^\bot \cup A )}^\bot \cup A } = $} $B$ and
\mbox{$ \overline{ ( B \cap A^\bot ) \cup A } = B $}.
\item 
By Lemma~\ref{the:symmetric}, item~\ref{AsubB1} and item~\ref{small-big} above.
\item 
By Equations~(\ref{eq:proj}) and Lemma~\ref{the:symmetric}, item~\ref{union-bot},
\[
B \cap {( A \otimes B )}^\bot =
B \cap {( B \cap {( B \cap A^\bot )}^\bot )}^\bot =
B \cap \overline{ B^\bot \cup ( B \cap A^\bot ) } 
\]
One concludes by item~\ref{pS} above.
\item 
Similarly,
\[
B \cap {( B \otimes A )}^\bot =
B \cap {( A \cap {( A \cap B^\bot )}^\bot )}^\bot =
B \cap \overline{ A^\bot \cup ( A \cap B^\bot ) } 
\]
One concludes by item~\ref{pS} above.
\item 
Since, by Lemma~\ref{the:symmetric}, item~\ref{overline}, \mbox{$ A \subseteq A \oplus B $},
we have \mbox{$ ( A^\bot \oplus B ) \cap A^\bot = $} \mbox{$ A^\bot $}
and therefore, by Equations~(\ref{eq:proj}):
\[
A \otimes ( A^\bot \oplus B ) = 
( A^\bot \oplus B ) \cap {( ( A^\bot \oplus B ) \cap A^\bot )}^\bot =
(A^\bot \oplus B ) \cap A = 
A \cap {( B^\bot \otimes A )}^\bot 
\]
By item~\ref{neg-o-axioms1} above, 
\mbox{$ A \cap {( B^\bot \otimes A )}^\bot = $} 
\mbox{$ A \cap B $}.
\item
Similarly:
\[
( A^\bot \oplus B ) \otimes A = 
A \cap {(A \cap {( A^\bot \oplus B )}^\bot )}^\bot = 
A \cap {( A \cap ( B^\bot \otimes A ) )}^\bot = 
\]
\[
A \cap {( B^\bot \otimes A )}^\bot = A \cap ( A^\bot \oplus B ) = 
A \cap \overline{ A^\bot \cup ( A \cap B ) } = 
A \cap ( A \cap B ) = A \cap B.
\]
\item 
Let \mbox{$ y \in A $} and \mbox{$ x \bot y $}. 
By Equations~(\ref{eq:proj}), \mbox{$ \{ x \} \otimes A \subseteq $}
\mbox{$ {( A \cap \{ x \}^\bot )}^\bot $}, but 
\mbox{$ y \in A \cap \{ x \}^\bot $} and therefore 
\mbox{$ y \bot \{ x \} \otimes A $}.
Let now \mbox{$ y \in A $} and \mbox{$ \{ x \} \otimes A \, \bot \, y $}. 
We have, by Equations~(\ref{eq:proj}) and Lemma~\ref{the:symmetric}, item~\ref{union-bot}, 
\mbox{$ y \in A \cap \overline{ A^\bot \cup ( A \cap \{ x \}^\bot ) } $}.
By item~\ref{pS} above, we conclude that \mbox{$ y \bot x $}.
\end{enumerate}
\end{proof}

Our next result will be used to prove the soundness of the {\bf Exchange}, 
the {\bf $\neg$Atomic} and the {\bf $\neg\vee$1} rules in Section~\ref{sec:soundness}.
\begin{lemma} \label{the:commute}
For any \mbox{$A , B \in {\cal F} $}, \mbox{$ A \, \bot \, B $} iff
\mbox{$ A \otimes B = Z $}.
Therefore, if  \mbox{$ A \, \bot \, B $}, one has 
\mbox{$ A \otimes B = $} \mbox{$ B \otimes A $} and 
\mbox{$ A \oplus B =$} \mbox{$ B \oplus A $}.
\end{lemma}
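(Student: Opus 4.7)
The plan is to prove the equivalence $A \bot B \iff A \otimes B = Z$ by unpacking the definition of $\otimes$ from Equations~(\ref{eq:proj}) and exploiting the orthomodularity result just established in Lemma~\ref{the:1-dim}, item~\ref{orthomodularity}. The commutation consequences will then fall out almost immediately.

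For the easy direction ($\Rightarrow$), I would observe that if $A \bot B$ then $\overline{B} \subseteq A^{\bot}$ (by items~\ref{anti} and~\ref{flat} of Lemma~\ref{the:symmetric}), so the intersection $\overline{B} \cap A^{\bot}$ simplifies to $\overline{B}$. Plugging into the definition, $A \otimes B = \overline{B} \cap \overline{B}^{\bot} = \overline{B} \cap B^{\bot}$, which is $Z$ by item~\ref{empty-full} of Lemma~\ref{the:symmetric}.

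The more substantive direction ($\Leftarrow$) is where orthomodularity does the real work. Set $C \eqdef \overline{B} \cap A^{\bot}$, which is a flat contained in $\overline{B}$. Then by definition $A \otimes B = \overline{B} \cap C^{\bot}$, and the assumption $A \otimes B = Z$ reads $\overline{B} \cap C^{\bot} = Z$. Applying orthomodularity to the inclusion $C \subseteq \overline{B}$ yields $\overline{B} = \overline{C \cup (\overline{B} \cap C^{\bot})} = \overline{C \cup Z}$; since $Z \subseteq C$ (item~\ref{empty-full} of Lemma~\ref{the:symmetric}) and $C$ is already a flat, this collapses to $\overline{B} = C = \overline{B} \cap A^{\bot}$, so $\overline{B} \subseteq A^{\bot}$, which is $A \bot B$ using condition {\bf S}. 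I expect this ``collapse via orthomodularity'' step to be the only conceptually delicate point; everything else is bookkeeping with the identities of Lemma~\ref{the:symmetric}.

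Finally, assuming $A \bot B$, the first part gives $A \otimes B = Z$, and by symmetry (condition~{\bf S}) also $B \bot A$, hence $B \otimes A = Z$, yielding $A \otimes B = B \otimes A$. For the $\oplus$ equality, I would invoke item~\ref{plus-union} of Lemma~\ref{the:symmetric}, which states $A \oplus B = \overline{A \cup B}$ whenever $A \bot B$; since set-theoretic union is commutative, $\overline{A \cup B} = \overline{B \cup A} = B \oplus A$.
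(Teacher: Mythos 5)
Your proof is correct and takes essentially the same route as the paper: the forward direction is the same computation via Lemma~\ref{the:symmetric}, item~\ref{empty-full}, and your orthomodularity step in the converse is a repackaging of the paper's appeal to Lemma~\ref{the:1-dim}, item~\ref{pS} (from which orthomodularity is itself derived), both collapsing $\overline{B}$ to $\overline{B} \cap A^{\bot}$. The one caveat --- shared by the paper's own proof, which applies item~\ref{pS} to the pair $(B \cap A^{\bot}, B^{\bot})$ --- is that Lemma~\ref{the:1-dim} is stated for members of ${\cal F}$, whereas $\overline{B} \cap A^{\bot}$ is only known to be a flat.
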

\begin{proof}
For the \emph{only if} part, assume that \mbox{$ A \, \bot \, B $}, i.e., 
\mbox{$ B \subseteq A^\bot $}. 
By Lemma~\ref{the:symmetric}, item~\ref{empty-full}, one has 
\mbox{$ B \cap {( B \cap A^\bot )}^\bot = $}
\mbox{$ B \cap B^\bot = Z $}.
For the \emph{if} part, note that, by Lemma~\ref{the:symmetric}, item~\ref{union-bot}:
\mbox{$ A \otimes B = B \cap {( B \cap A^\bot )}^\bot = $}
\mbox{$ {( B^\bot \cup ( B \cap A^\bot ) )}^\bot $}.
By Lemma~\ref{the:symmetric}, item~\ref{empty-full}, then, 
\mbox{$ A \otimes B = Z $} implies 
\mbox{$ X = \overline{ B^\bot \cup ( B \cap A^\bot ) } $}.
On the one hand, then, 
\mbox{$ B \cap \overline{ B^\bot \cup ( B \cap A^\bot ) } = $} $B$.
But, on the other hand, since \mbox{$ B^\bot \, \bot \, B \cap A^\bot $},
Lemma~\ref{the:1-dim}, item~\ref{pS} implies that we have 
\mbox{$ B \cap \overline{ B^\bot \cup ( B \cap A^\bot ) } = $}
\mbox{$ B \cap A^\bot $}.
We conclude that \mbox{$ B \subseteq A^\bot $}, i.e., \mbox{$ A \, \bot \, B $}.
\end{proof}
Next is an important technical result: any element $C$ of $\cal F$ 
defines an O-subspace and the closure operation on the subspace is identical 
to the original closure operation.
\begin{lemma} \label{the:O-subspace}
Let \mbox{$ C \in {\cal F}$}.
The structure \mbox{$ \langle C , \bot_{C} , {\cal F_{C}} \rangle $}, where the relation 
$\bot_{C}$ is the restriction of $\bot$ to the set $C$ and \mbox{$ {\cal F_{C}} = $}
\mbox{$ {\cal F} \cap {2}^{C} $} is an O-space.
For any \mbox{$ A , B \subseteq C $}, \mbox{$ A^{\bot_{C}} = A^\bot \cap C $},
\mbox{$ {A^{\bot_{C}}}^{\bot_{C}} = $} 
\mbox{$ \overline{A} $}, \mbox{$ A \otimes_{C} B = $}
\mbox{$ A \otimes B $} and \mbox{$ A \oplus_{C} B =$} \mbox{$ A \oplus B $}.
\end{lemma}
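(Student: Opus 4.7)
The strategy is to first establish two identities that reduce subspace operations to operations in the ambient O-space—namely $A^{\bot_C} = A^\bot \cap C$ and ${A^{\bot_C}}^{\bot_C} = \overline{A}$ for $A \subseteq C$—and then verify each of the five O-space axioms for the substructure by reduction to the corresponding property in $\langle X, \bot, \mathcal{F}\rangle$.

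The first identity is immediate from the definition of $\bot_C$. For the second, the key computation is $A^\bot \cap C = C \otimes A^\bot$ whenever $A \subseteq C$: one has $C^\bot \subseteq A^\bot$, so $A^\bot \cap C^\bot = C^\bot$, and unfolding Definition~\ref{def:projection} gives the identity. Dualizing via $(C \otimes A^\bot)^\bot = A \oplus C^\bot$ (from Definition~\ref{def:projection} and item~\ref{bar2} of Lemma~\ref{the:symmetric}), and noting $A \,\bot\, C^\bot$ to invoke item~\ref{plus-union}, yields $(A^\bot \cap C)^\bot = \overline{A \cup C^\bot}$; applying item~\ref{pS} of Lemma~\ref{the:1-dim} to the orthogonal pair $\overline{A}, C^\bot$ then gives ${A^{\bot_C}}^{\bot_C} = \overline{A \cup C^\bot} \cap C = \overline{A}$. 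In particular, the closure operation in the subspace coincides on subsets of $C$ with the ambient closure. From this, $A \otimes_C B = A \otimes B$ follows by unfolding~(\ref{eq:proj}): after substituting $A^{\bot_C} = A^\bot \cap C$, the factor $\overline{B} \cap A^\bot \cap C$ collapses to $\overline{B} \cap A^\bot$ because $\overline{B} \subseteq C$, and the trailing intersection with $C$ is automatically absorbed since $A \otimes B \subseteq \overline{B} \subseteq C$. The identity $A \oplus_C B = A \oplus B$ is dual.

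With these reductions in hand, axioms \textbf{S} and \textbf{A} transfer directly from $X$. For \textbf{Z}, the subspace zero set is $Z_C = C^\bot \cap C$, which equals $Z$ by Lemma~\ref{the:symmetric} item~\ref{empty-full}; any $x \in C$ with $x \,\bot_C\, x$ satisfies $x \,\bot\, x$ and thus lies in $Z = Z_C$. Axiom \textbf{F} is verified clause by clause: flatness of members of $\mathcal{F}_C$ is inherited from flatness in $X$ (via the agreement of closures); $\overline{\{x\}}$ and $Z$ lie inside $C$ by monotonicity; closure under $(\cdot)^{\bot_C}$ follows from $A^{\bot_C} = C \otimes A^\bot \in \mathcal{F}$ by \textbf{F} in $X$; and closure under $\otimes_C$ from $A \otimes_C B = A \otimes B \in \mathcal{F}$ together with $A \otimes B \subseteq B \subseteq C$.

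The main obstacle is axiom \textbf{O}, because in $X$ it mentions $A^\bot$, while the subspace version requires the strictly smaller set $A^{\bot_C} = A^\bot \cap C$. The pivotal auxiliary fact I would prove is the inclusion $\{x\} \otimes A^\bot \subseteq \{x\} \otimes (A^\bot \cap C)$ for $x \in C$ and $A \in \mathcal{F}_C$. To obtain it, I would first use \textbf{O} in $X$ to deduce $\{x\} \otimes A^\bot \subseteq \overline{\{x\} \cup \{x\} \otimes A} \subseteq \overline{C} = C$; combined with $\{x\} \otimes A^\bot \subseteq A^\bot$, this yields $\{x\} \otimes A^\bot \subseteq A^\bot \cap C$. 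Second, the anti-monotonicity of $(\cdot)^\bot$ applied to $A^\bot \cap C \cap \{x\}^\bot \subseteq A^\bot \cap \{x\}^\bot$ supplies the other defining factor of $\{x\} \otimes (A^\bot \cap C)$, completing the inclusion. Both clauses of \textbf{O} in the subspace then follow from \textbf{O} in $X$ by monotonicity of closure, since enlarging $\{x\} \otimes A^\bot$ to $\{x\} \otimes_C A^{\bot_C}$ only enlarges the closures appearing on the right-hand sides.
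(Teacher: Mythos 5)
Your proposal is correct and follows the same overall skeleton as the paper's proof: establish \mbox{$ A^{\bot_{C}} = A^\bot \cap C $}, show the relative closure agrees with the ambient one via \mbox{$ {(A^\bot \cap C)}^\bot = \overline{A \cup C^\bot} $} and Lemma~\ref{the:1-dim}, item~\ref{pS}, deduce \mbox{$ \otimes_{C} = \otimes $}, and then transfer the axioms. The one place where you genuinely diverge is axiom {\bf O}, which is also the crux. The paper proves the full \emph{equality} \mbox{$ \{ x \} \otimes A^\bot = \{ x \} \otimes ( A^\bot \cap C ) $} by a computation through Lemma~\ref{the:symmetric}, item~\ref{OAUB} and item~\ref{pS}; you instead prove only the inclusion \mbox{$ \{ x \} \otimes A^\bot \subseteq \{ x \} \otimes ( A^\bot \cap C ) $} (getting \mbox{$ \{ x \} \otimes A^\bot \subseteq C $} from the second clause of {\bf O} in $X$, and the second defining factor from anti-monotonicity of $(\cdot)^\bot$) and observe that the inclusion already suffices, since in both clauses of {\bf O} the term \mbox{$ \{ x \} \otimes A^{\bot_{C}} $} occurs only under a closure on the large side of an inclusion. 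That is a correct and slightly more economical argument than the paper's. Two minor points where your write-up is thinner than it should be: (i) the claim that \mbox{$ A \oplus_{C} B = A \oplus B $} ``is dual'' hides a real step --- one must compute \mbox{$ B^{\bot_{C}} \otimes_{C} A^{\bot_{C}} = ( B^\bot \otimes A^\bot ) \cap C $} and then apply $\bot_{C}$, again invoking item~\ref{pS}, because the relative complements are not the ambient ones; (ii) on the credit side, you verify {\bf F} explicitly (e.g.\ via \mbox{$ A^{\bot_{C}} = C \otimes A^\bot \in {\cal F} $}), which the paper leaves implicit.
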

\begin{proof}
Obviously, the relation $\bot_{C}$ satisfies {\bf S} and {\bf Z} and therefore satisfies 
Lemma~\ref{the:symmetric}.
The set $Z_{C}$ is equal to \mbox{$ Z \cap C $}.
For any \mbox{$ A , B \subseteq C $}:
\begin{enumerate}
\item
\mbox{$ A^{\bot_{C}} = $} \mbox{$ \{ x \in C \mid x \, \bot \, A \} = $}
\mbox{$ A^\bot \cap C $}.
\item  \label{same}
\mbox{$ {A^{\bot_{C}}}^{\bot_{C}} = $} 
\mbox{$ {( A^\bot \cap C )}^\bot \cap C = $}
\mbox{$ \overline{ A \cup C^\bot } \cap C $}.
Since \mbox{$ A \, \bot \, C^\bot $}, by Lemma~\ref{the:1-dim}, item~\ref{pS},  we have
\mbox{$ \overline{ A \cup C^\bot } \cap C = $} \mbox{$ \overline{A} $}.
\item \label{same-otimes}
By the above, \mbox{$ A \otimes_{C} B = $}
\mbox{$ \overline{B} \cap {( \overline{B} \cap A^\bot \cap C )}^\bot \cap C = $}
\mbox{$ \overline{B} \cap {( \overline{B} \cap A^\bot )}^\bot = $}
\mbox{$ A \otimes B $}.
\item 
\mbox{$ A \oplus_{C} B = $} 
\mbox{$ {( B^{\bot_{C}} \otimes_{C} A^{\bot_{C}} )}^{\bot_{C}} $}.
But 
\[
B^{\bot_{C}} \otimes_{C} A^{\bot_{C}} = 
A^{\bot_{C}} \cap {( A^{\bot_{C}} \cap \overline{B} )}^{\bot_{C}} = 
A^\bot \cap C \cap {( A^\bot \cap C \cap \overline{B} )}^\bot \cap C = 
\]
\[
A^\bot \cap {( A^\bot \cap \overline{B} )}^\bot \cap C = 
( B^\bot \otimes A^\bot ) \cap C.
\] 
Therefore, by Lemma~\ref{the:1-dim}, item~\ref{pS}, \mbox{$ A \oplus_{C} B = $} 
\mbox{$ {( ( B^\bot \otimes A^\bot ) \cap C )}^{\bot_{C}} = $}
\mbox{$ \overline{ A \oplus B \cup C^\bot } \cap C = $}
\mbox{$ A \oplus B $}.
\item
Let us deal with property {\bf O}:
By {\bf O} in $X$, 
\mbox{$ x \in \overline{ \{ x \} \otimes A \cup \{ x \} \otimes A^\bot } $} and
\mbox{$ \{ x \} \otimes A \subseteq $} 
\mbox{$ \overline{ \{ x\} \cup \{ x \} \otimes A^\bot } $}.
By item~\ref{same-otimes} above, since \mbox{$ x \in C $} and \mbox{$ A \subseteq C $}, 
\mbox{$ \{ x \} \otimes A = $} \mbox{$ \{ x \} \otimes_{C} A $}
and therefore it is enough to show that \mbox{$ \{ x \} \otimes A^\bot = $} 
\mbox{$ \{ x \} \otimes_{C} ( A^\bot \cap C ) $}. Property {\bf O} for the subspace will 
then follow from item~\ref{same} above.
But \mbox{$ \{ x \} \otimes_{C} ( A^\bot \cap C ) = $} 
\mbox{$ \{ x \} \otimes ( A^\bot \cap C ) $}.
By Lemma~\ref{the:symmetric}, items~\ref{OAUB} and~\ref{union-bot}:
\[
\{ x \} \otimes ( A^\bot \cap C ) = 
\{ x \} \otimes ( A^\bot \cap C \cap \overline{ \{ x \} \cup {( A^\bot \cap C )}^\bot } ) = 
\{ x \} \otimes ( A^\bot \cap ( C \cap \overline{ \{ x \} \cup \overline{A} \cup C^\bot } ) ).
\]
Since \mbox{$ ( \{ x \} \cup A ) \bot \, C^\bot $}, Lemma~\ref{the:1-dim}, item~\ref{pS} and 
Lemma~\ref{the:symmetric}, item~\ref{OAUB} imply
\[
\{ x \} \otimes ( A^\bot \cap C ) = 
\{ x \} \otimes ( A^\bot \cap \overline{ \{ x \} \cup A } ) = \{ x \} \otimes A^\bot.
\]
\item
For property {\bf A}, let \mbox{$ A , B \in {\cal F_{C}} $}.
By {\bf A} in the original structure, we have \mbox{$ A \otimes B \subseteq $}
\mbox{$ \bigcup_{a \in A} \{ a \} \otimes B $}.
By item~\ref{same-otimes} above, \mbox{$ A \otimes_{C} B \subseteq $}
\mbox{$ \bigcup_{a \in A} \{ a \} \otimes_{C} B $}.
\end{enumerate}
\end{proof}

We proceed to prove the properties of O-spaces that are crucial for the deduction system
to be presented in Section~\ref{sec:rules}.
Our next result shows that, in some respects, $\otimes$ works as a conjunction and $\oplus$ 
as a disjunction.
Item~\ref{associative} below is particularly interesting.
It expresses that, if two measurements $A$ and $B$ commute, 
the order in which they are performed is unimportant.
This may sound as a tautology, but, for an operation such as $\otimes$ that is not associative,
as already noted in~\cite{Lehmann_andthen:JLC} it is, in fact, highly significant.
The importance of item~\ref{moving} will be seen in Lemma~\ref{the:passing}:
it justifies moving a proposition over the turnstile.
Item~\ref{T} will be used to prove the soundness of {\bf Right-$\wedge$}.
\begin{theorem} \label{the:various}
For any \mbox{$ A , B , C \in {\cal F} $},
\begin{enumerate}
\item \label{CBA}
if \mbox{$ A \subseteq B $}, then
\mbox{$ ( C \otimes B ) \otimes A =$} \mbox{$ C \otimes A = $} 
\mbox{$ ( C \otimes A ) \otimes B $},
\item \label{associative}
if \mbox{$ A \otimes B =$} \mbox{$ B \otimes A $}, then 
\mbox{$ ( C \otimes A ) \otimes B =$} \mbox{$ ( C \otimes B ) \otimes A $},
\item \label{T}
\mbox{$ ( A \otimes B ) \otimes C = Z $} iff 
\mbox{$ A \otimes ( C \otimes B ) = Z $},
\item \label{moving}
\mbox{$ ( A \otimes B^\bot ) \subseteq C $} iff \mbox{$ A \subseteq ( B \oplus C ) $}.
\end{enumerate}
\end{theorem}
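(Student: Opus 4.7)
I would prove the four items in the order~\ref{CBA}, \ref{moving}, \ref{T}, \ref{associative}, since items~\ref{T} and~\ref{associative} rest on the earlier ones.

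For item~\ref{CBA}, the second equality $C \otimes A = (C \otimes A) \otimes B$ is immediate from Lemma~\ref{the:1-dim} item~\ref{small-big} applied to $C \otimes A \subseteq A \subseteq B$. For the first equality, I would unfold $(C \otimes B) \otimes A$ and $C \otimes A$ via Equations~(\ref{eq:proj}); they agree provided $A \cap (C \otimes B)^\bot = A \cap C^\bot$. By Lemma~\ref{the:symmetric} item~\ref{union-bot}, $(C \otimes B)^\bot = \overline{B^\bot \cup (B \cap C^\bot)}$, and since $A \subseteq B$ and $(B \cap C^\bot) \perp B^\bot$, Lemma~\ref{the:1-dim} item~\ref{pS} collapses the intersection to $A \cap (B \cap C^\bot) = A \cap C^\bot$.

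For item~\ref{moving}, I would treat the two directions separately. For~($\Rightarrow$), assume $A \otimes B^\bot \subseteq C$. For each $a \in A$, property~{\bf O} gives $a \in \overline{\{a\} \otimes B \cup \{a\} \otimes B^\bot}$; monotonicity of $\otimes$ in its first argument yields $\{a\} \otimes B^\bot \subseteq A \otimes B^\bot \subseteq C$, which together with $\{a\} \otimes B^\bot \subseteq B^\bot$ gives $\{a\} \otimes B^\bot \subseteq B^\bot \cap C$. Hence $a \in \overline{B \cup (B^\bot \cap C)} = B \oplus C$ by Lemma~\ref{the:symmetric} item~\ref{overline}. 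For~($\Leftarrow$), monotonicity again gives $A \otimes B^\bot \subseteq (B \oplus C) \otimes B^\bot$; unfolding via Equations~(\ref{eq:proj}) together with Lemma~\ref{the:symmetric} items~\ref{union-bot} and~\ref{bar-cup}, the right-hand side equals $B^\bot \cap \overline{B \cup (B^\bot \cap C)}$, which Lemma~\ref{the:1-dim} item~\ref{pS} (with $(B^\bot \cap C) \perp B$) collapses to $B^\bot \cap C \subseteq C$.

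Item~\ref{T} drops out as a corollary. By Lemma~\ref{the:commute}, $(A \otimes B) \otimes C = Z$ iff $C \perp (A \otimes B)$, i.e., $C \subseteq (A \otimes B)^\bot = B^\bot \oplus A^\bot$. Item~\ref{moving}, instantiated with $C$, $B^\bot$, $A^\bot$ in the roles of $A$, $B$, $C$, rephrases this as $C \otimes B \subseteq A^\bot$; one more application of Lemma~\ref{the:commute} turns this into $A \otimes (C \otimes B) = Z$.

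The main work is in item~\ref{associative}; set $D := A \otimes B = B \otimes A$. The crucial observation is that both sides funnel into $D$: since $C \otimes A \subseteq A$, monotonicity of $\otimes$ yields $(C \otimes A) \otimes B \subseteq A \otimes B = D$, so $(C \otimes A) \otimes B \subseteq D \subseteq B$. Lemma~\ref{the:1-dim} item~\ref{small-big} then rewrites $(C \otimes A) \otimes B$ as $((C \otimes A) \otimes B) \otimes D$; item~\ref{CBA} (with $D \subseteq B$) rewrites this as $(C \otimes A) \otimes D$; and item~\ref{CBA} (with $D \subseteq A$) finally yields $C \otimes D$. The mirror computation, using $C \otimes B \subseteq B$ together with $B \otimes A = D$, gives $(C \otimes B) \otimes A = C \otimes D$, so the two sides coincide. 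The difficulty is precisely in spotting this tunneling through $D$; a direct unfolding of the two triple products quickly gets tangled in the non-associativity of $\otimes$.
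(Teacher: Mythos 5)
Your proof is correct, and for items~\ref{CBA} and~\ref{associative} it takes essentially the paper's route: your verification that $A\cap{(C\otimes B)}^\bot=A\cap C^\bot$ is just the paper's appeal to Lemma~\ref{the:1-dim}, item~\ref{neg-o-axioms1} with that item's own pS-based proof inlined, and your ``tunneling through $D$'' is exactly the paper's chain $((C\otimes A)\otimes B)\otimes(A\cap B)=(C\otimes A)\otimes(A\cap B)=C\otimes(A\cap B)$, except that the paper first identifies $A\otimes B$ with $A\cap B$ via Lemma~\ref{the:symmetric}, item~\ref{AinterB}, whereas you work with $D=A\otimes B$ directly --- a mild streamlining. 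The genuine divergence is in items~\ref{T} and~\ref{moving}: the paper proves item~\ref{T} first, by a somewhat opaque chain of inclusions resting on items~\ref{pS} and~\ref{neg-o-axioms1} of Lemma~\ref{the:1-dim}, and then obtains item~\ref{moving} as a corollary via Lemma~\ref{the:commute}; you reverse the dependency, proving item~\ref{moving} directly and deriving item~\ref{T} from it. Your direct argument for item~\ref{moving} is sound and arguably more illuminating, since the forward direction makes visible the role of the superposition axiom {\bf O}, applied pointwise to the elements of $A$, which the paper's route buries inside the proof of item~\ref{T}; the paper's order, on the other hand, keeps item~\ref{T} (needed earlier, for the soundness of {\bf Right-$\wedge$}) independent of item~\ref{moving}. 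One caveat applies equally to both texts: Lemma~\ref{the:1-dim}, item~\ref{pS} is stated for members of ${\cal F}$, and you apply it to $B^\bot\cap C$, which need not lie in ${\cal F}$; since the paper itself applies the same item to $B\cap A^\bot$ in the proofs of items~\ref{neg-o-axioms1} and~\ref{T}, this is not a gap in your argument relative to the paper's own standard of rigor.
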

\begin{proof}
\begin{enumerate}
\item 
Assume \mbox{$ A \subseteq B $}.
By Lemma~\ref{the:1-dim}, item~\ref{neg-o-axioms1}, 
\mbox{$ B \cap {( C \otimes B )}^\bot = $} \mbox{$ B \cap C^\bot $}. 
By intersecting both terms with $A$ we obtain 
\mbox{$ A \cap {( C \otimes B )}^\bot = $} \mbox{$ A \cap C^\bot $} 
and, by Equations~(\ref{eq:proj}), we conclude that 
\[
( C \otimes B ) \otimes A = A \cap {( A \cap {( C \otimes B )}^\bot )}^\bot = 
A \cap {( A \cap C^\bot )}^\bot = C \otimes A.
\]
But, we also have \mbox{$ C \otimes A \subseteq $} \mbox{$ A \subseteq B $}.
By Lemma~\ref{the:1-dim}, item~\ref{small-big}, 
\mbox{$ ( C \otimes A ) \otimes B = $} \mbox{$ C \otimes A $}.
\item 
Assume \mbox{$ A \otimes B = B \otimes A $}.
By Equations~(\ref{eq:proj}), \mbox{$ A \otimes B \subseteq $} 
\mbox{$ A \cap B $}.
By Lemma~\ref{the:symmetric}, item~\ref{AinterB}, \mbox{$ A \otimes B = $} 
\mbox{$ A \cap B $}.
By Equations~(\ref{eq:proj}) \mbox{$ C \otimes A \subseteq $} \mbox{$ A $} 
and by Lemma~\ref{the:1-dim}, item~\ref{left-additivity}, 
\mbox{$ ( C \otimes A ) \otimes B \subseteq $}
\mbox{$ A \otimes B = A \cap B $}.
By Lemma~\ref{the:1-dim}, item~\ref{small-big} and item \ref{CBA} just above:
\[
( C \otimes A ) \otimes B = 
( ( C \otimes A ) \otimes B ) \otimes ( A \cap B ) = 
( C \otimes A ) \otimes ( A \cap B ) = 
C \otimes ( A \cap B ).
\]
Similarly, one shows that \mbox{$ ( C \otimes B ) \otimes A = $}
\mbox{$ C \otimes ( B \cap A ) $}.
\item
By Lemma~\ref{the:commute} our claim is equivalent to:
\mbox{$ ( A \otimes B ) \, \bot \, C $} iff \mbox{$ ( C \otimes B ) \, \bot \, A $}.
But \mbox{$ B \cap {( B \cap A^\bot )}^\bot \subseteq C^\bot $} implies
\mbox{$ B \cap {( B \cap A^\bot )}^\bot \subseteq $} \mbox{$ B \cap C^\bot $}
and also \mbox{$ {( B \cap C^\bot )}^\bot \subseteq $}
\mbox{$ {( B \cap {( B \cap A^\bot )}^\bot )}^\bot $}
and therefore, by Lemma~\ref{the:1-dim}, item~\ref{pS}:
\[
B \cap {( B \cap C^\bot )}^\bot \subseteq 
B \cap {( B \cap {( B \cap A^\bot )}^\bot )}^\bot =
B \cap \overline{ B^\bot \cup ( B \cap A^\bot ) } = 
B \cap A^\bot \subseteq A^\bot
\]
Conversely, if \mbox{$ ( C \otimes B ) \, \bot \, A $}, by Lemma~\ref{the:1-dim}, 
item~\ref{neg-o-axioms1},
\mbox{$ A \otimes B = $} 
\mbox{$ B \cap {( B \cap A^\bot )}^\bot \subseteq $}
\mbox{$ B \cap {( B \cap ( C \otimes B ) )}^\bot = $}
\mbox{$ B \cap {( C \otimes B )}^\bot = $}
\mbox{$ B \cap C^\bot $}, implying \mbox{$ ( A \otimes B ) \, \bot \, C $}.
\item
\mbox{$ ( A \otimes B^\bot ) \subseteq C $} iff
\mbox{$ ( A \otimes B^\bot ) \, \bot \, C^\bot $} iff, by Lemma~\ref{the:commute},
\mbox{$ ( A \otimes B^\bot) \otimes C^\bot = Z $} iff, by item~\ref{T} just above,
\mbox{$ A \otimes ( C^\bot \otimes B^\bot ) = Z $} iff, 
by Lemma~\ref{the:commute}, \mbox{$ A \, \bot \, C^\bot \otimes B^\bot  $} iff
\mbox{$ A \subseteq $} \mbox{$ B \oplus C $}.
\end{enumerate}
\end{proof}

\section{Language and interpretation} \label{sec:syntax}
Now we want to define the formulas of our logic, its sequents~\cite{Gent:32, Gent:69} 
and their semantics.
We shall do that in two stages: first we define an extended language ($\cal L'$) 
and then we shall show that we can restrict our attention to a restricted language ($\cal L$) 
and restricted sequents that will be easier for us to study.
This section is fairly pedestrian: its only purpose is to prepare the ground 
for Section~\ref{sec:rules} and all the technical preparatory work has been done 
in Section~\ref{sec:O-properties}.
The only point worth noticing is our interpretation of the sequents: 
association to the left on the left of the turnstile and to the right on its right.

\subsection{Extended language} \label{sec:language-ext}
\subsubsection{Propositions and Sequents} \label{sec:propositions-ext}
We consider a set $AT$ of atomic propositions, 
one unary and two binary connectives.
We shall represent propositions by small greek letters.
\begin{definition} \label{def:propositions-ext}
\ 
\begin{itemize}
\item
An atomic proposition is a proposition.
\item
If $\alpha$ is a proposition then \mbox{$ \neg \alpha$} is a proposition.
\item
If $\alpha $ and $\beta$ are propositions, then \mbox{$ \alpha \wedge \beta $} is a proposition.
\item
If $\alpha $ and $\beta$ are propositions, then \mbox{$ \alpha \vee \beta $} is a proposition.
\item
There are no other propositions.
\end{itemize}
\end{definition}
The set of propositions on $AT$ will be denoted by $\cal L'$.

\begin{definition} \label{def:sequents}
A sequent is constituted by two finite sequences of propositions, 
separated by the turnstile symbol, that may be $\models$ or $\vdash$.
\end{definition} 
Here is a typical sequent:
\mbox{$ \alpha , \neg \beta \vee \gamma \models \neg \neg ( \delta \wedge \epsilon ) $}.
In the representation of sequents we shall use greek capital letters to represent finite sequences
of propositions.
The sequent \mbox{$ \Gamma \models \Gamma $}
is a sequent in which the same sequence $\Gamma$ of propositions appears on both sides 
of the turnstile.

\subsubsection{Interpretation} \label{sec:interpretation}
In an O-space \mbox{$ \langle X , \bot , {\cal F} \rangle $}, 
a proposition is interpreted as a member of $\cal F$ with the help of an assignment 
function for atomic propositions.
\begin{definition} \label{def:interpretation}
Given an O-space \mbox{$ \langle X , \bot , {\cal F}\rangle $}.
A consequence of property {\bf F} in Definition~\ref{def:O-space} is that 
any assignment \mbox{$ v : AT \longrightarrow {\cal F} $} of elements of $\cal F$
to the atomic propositions can be extended to a function 
\mbox{$ v : {\cal L'} \longrightarrow {\cal F } $} by
\begin{itemize}
\item
\mbox{$ v( \neg \alpha ) = v(\alpha)^\bot $} for any \mbox{$\alpha \in {\cal L'}$},
\item
\mbox{$ v( \alpha \wedge \beta ) = v(\alpha) \otimes v(\beta) $} for any 
\mbox{$\alpha , \beta \in {\cal L'}$},
\item
\mbox{$ v( \alpha \vee \beta ) = v(\alpha) \oplus v(\beta) $} for any 
\mbox{$\alpha , \beta \in {\cal L'}$}.
\end{itemize}
\end{definition}
\begin{definition} \label{def:equivalence}
For any \mbox{$ \alpha , \beta \in {\cal L'} $} we shall say that $\alpha$ and $\beta$
are {\em semantically equivalent} and write \mbox{$ \alpha \equiv \beta $} iff 
\mbox{$ v(\alpha) = v(\beta) $} for any O-space and any assignment of flats to the atomic 
propositions.
\end{definition}

\begin{lemma} \label{the:equivalence}
For any \mbox{$ \alpha , \beta \in {\cal L'} $}
\begin{itemize}
\item
\mbox{$ \neg ( \alpha \wedge \beta ) \equiv \neg \beta \vee \neg \alpha $},
\item
\mbox{$ \neg ( \alpha \vee \beta ) \equiv \neg \beta \wedge \neg \alpha $},
\item \label{double-negation}
\mbox{$ \neg ( \neg \alpha ) \equiv \alpha $}.
\end{itemize}
\end{lemma}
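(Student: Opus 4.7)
The plan is to unfold Definition~\ref{def:interpretation} on both sides of each equivalence and reduce each claim to an algebraic identity about $\otimes$, $\oplus$, and $^\bot$ in an arbitrary O-space. Since $\equiv$ quantifies over every O-space and every assignment $v$, it suffices to verify the corresponding set-theoretic identity for arbitrary flats $A, B \in \cal F$ (setting $A = v(\alpha)$, $B = v(\beta)$). In each case the result will follow from Lemma~\ref{the:symmetric} together with the definition of $\oplus$ in Equation~(\ref{eq:proj}).

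I would take the double negation law first, as it is the key ingredient for the others. Here $v(\neg\neg\alpha) = {v(\alpha)^\bot}^\bot = \overline{v(\alpha)}$ by definition of the closure. But $v(\alpha) \in \cal F$ is a flat by property {\bf F}, so $\overline{v(\alpha)} = v(\alpha)$, by item~\ref{closure} of Lemma~\ref{the:symmetric}. This gives $\neg\neg\alpha \equiv \alpha$.

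Next I would handle the De~Morgan law for $\vee$. By definition, $v(\neg(\alpha \vee \beta)) = (v(\alpha) \oplus v(\beta))^\bot$, and unfolding $\oplus$ via Equation~(\ref{eq:proj}) gives $(v(\alpha) \oplus v(\beta))^\bot = {( (v(\beta)^\bot \otimes v(\alpha)^\bot)^\bot )}^\bot = \overline{v(\beta)^\bot \otimes v(\alpha)^\bot}$. Since $v(\beta)^\bot \otimes v(\alpha)^\bot$ is a flat (item~\ref{flat} of Lemma~\ref{the:symmetric}), the closure is trivial and the expression equals $v(\beta)^\bot \otimes v(\alpha)^\bot = v(\neg\beta \wedge \neg\alpha)$, as required.

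For the De~Morgan law for $\wedge$, I would run the same computation in reverse: by the definition of $\oplus$, $v(\neg\beta \vee \neg\alpha) = v(\beta)^\bot \oplus v(\alpha)^\bot = ({v(\alpha)^\bot}^\bot \otimes {v(\beta)^\bot}^\bot)^\bot$; then the double negation law already established collapses the inner double complements to give $(v(\alpha) \otimes v(\beta))^\bot = v(\neg(\alpha \wedge \beta))$. No genuine obstacle is expected; the only subtlety is remembering at each step that the operands of $\otimes$ and $\oplus$ are flats, so that closures may be dropped silently, which is exactly what items~\ref{closure} and~\ref{flat} of Lemma~\ref{the:symmetric} provide.
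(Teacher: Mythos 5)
Your proposal is correct and takes essentially the same route as the paper: unfold Definition~\ref{def:interpretation}, rewrite one connective in terms of the other via the definition of $\oplus$ in Equations~(\ref{eq:proj}) (the paper cites this in its repackaged form, Lemma~\ref{the:symmetric}, item~\ref{overline}), and discard the resulting double complements because values of $v$ and outputs of $\otimes$ and $\oplus$ are flats. The only differences are the cosmetic reordering of the three bullets and your choice to unfold $\oplus$ directly rather than quote the derived identity.
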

\pagebreak[2]
\begin{proof}
\begin{itemize}
\item
By Lemma~\ref{the:symmetric}, item~\ref{overline} and the fact that 
\mbox{$ v(\beta)^\bot \otimes v(\alpha^\bot) $} is a flat.
\[
v( \neg ( \alpha \wedge \beta ) ) =  {( v(\alpha) \otimes v(\beta))}^\bot =
{ {( v(\beta)^\bot \oplus v(\alpha)^\bot )}^\bot }^\bot =
v(\beta)^\bot \oplus v(\alpha)^\bot  = v ( \neg \beta \vee \neg \alpha ).
\]
\item
Similarly.
\item
\mbox{$ v ( \neg ( \neg \alpha ) ) = $}
\mbox{$ { v( \alpha)^\bot}^\bot =$}
\mbox{$ v( \alpha) $} since $v(\alpha)$ is a flat.
\end{itemize}
\end{proof}
We must now interpret sequents. 
We shall, as expected, interpret the left-hand side and the right-hand side as flats and
the turnstile $\models$ as set inclusion.
In the literature, following Gentzen~\cite{Gent:32} the comma on the left-hand side is interpreted as a conjunction and the
comma on the right-hand side as a disjunction.
Theorem~\ref{the:various} shows that $\wedge$ is some kind of conjunction and $\vee$ some
kind of disjunction, it is therefore natural to interpret the comma on the left-hand side as
$\wedge$ and the comma on the right-hand side as $\vee$.
Since those operations are not associative, we must decide how to associate the elements
of the left-hand side and how to associate those of the right-hand side.
Since \mbox{$A \wedge B$} denotes the result of measuring $B$ \emph{after} $A$, it is
natural to decide that the elements of the left-hand side associate to the left.
To keep the action close to the turnstile, we decide that the elements of the right-hand side  
associate to the right.

Our interpretation, in an O-space \mbox{$\langle X , \bot , {\cal F} \rangle $}, 
for assignment \mbox{$ v: {\cal L} \rightarrow {\cal F} $}, of sequent:
\begin{equation} \label{eq:sequent}
\alpha_{0} , \alpha_{1} \ldots \alpha_{n - 1} \models \beta_{0} , \beta_{1}  \ldots 
\beta_{m - 1}
\end{equation}
is therefore
\begin{equation} \label{eq:seq-interpretation}
v ( ( \ldots ( \alpha_{0} \wedge \alpha_{1} ) \wedge \ldots ) \wedge \alpha_{n - 1} ) 
\subseteq v ( \beta_{0} \vee ( \beta_{1} \vee ( \dots \vee \beta_{m - 1}) \ldots ) )
\end{equation}
If the right-hand side of the turnstile is empty its interpretation is $Z$.
If the left-hand side of the turnstile is empty its interpretation is $X$.

\begin{definition} \label{def:valid-sequent}
A sequent is \emph{valid in an O-space \mbox{$ \langle X , \bot , {\cal F} \rangle $} }
iff its interpretation holds for every assignment.
It is \emph{valid} iff it is valid in every O-space.
\end{definition}

For example, the sequent \mbox{$ \alpha \models \alpha , \alpha $} is valid 
for any $\alpha$ since \mbox{$ A = A \oplus A $} for any flat $A$ 
by Lemma~\ref{the:symmetric}, item~\ref{AsubB1}.
Our next lemma shows that propositions can jump over the turnstile in both direction at the cost
of an added negation. 
\begin{lemma} \label{the:passing}
For any O-space \mbox{$ O = \langle X , \bot , {\cal F} \rangle $}, 
\mbox{$ \alpha_{0} , \ldots , \alpha_{n} \models \beta_{0} , \beta_{1} , \ldots , \beta_{m} $} 
is valid in $O$ iff 
\mbox{$  \alpha_{0} , \ldots , \alpha_{n} , \neg \beta_{0} \models 
\beta_{1} , \ldots , \beta_{m} $} is valid in $O$.
Similarly,
\mbox{$ \alpha_{0} , \ldots , \alpha_{n} \models \beta_{0} , \beta_{1} , \ldots , \beta_{m} $} 
is valid in $O$ iff 
\mbox{$  \alpha_{0} , \ldots , \alpha_{n - 1} \models \neg \alpha_{n} , \beta_{0} ,
\beta_{1} , \ldots , \beta_{m} $} is valid in $O$.
\end{lemma}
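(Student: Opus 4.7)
The plan is to reduce both equivalences to a single application of Theorem~\ref{the:various}, item~\ref{moving}, which states that $(A \otimes B^\bot) \subseteq C$ iff $A \subseteq (B \oplus C)$ for flats $A, B, C$. Since $v$ maps into $\mathcal F$ and $\otimes, \oplus, (\cdot)^\bot$ all produce flats (Lemma~\ref{the:symmetric}, item~\ref{flat}), every intermediate quantity I form will be a flat, so the theorem applies.

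For the first equivalence, I unpack interpretations using the left/right association conventions. Set $A = v\bigl((\cdots(\alpha_{0}\wedge\alpha_{1})\wedge\cdots)\wedge\alpha_{n}\bigr)$, $B=v(\beta_{0})$, and $C=v\bigl(\beta_{1}\vee(\beta_{2}\vee\cdots\vee\beta_{m})\bigr)$. Then the original sequent's interpretation is $A \subseteq B \oplus C$. Adding $\neg\beta_{0}$ at the end of the left-hand side, with left-association, yields the new LHS interpretation $A \otimes v(\beta_{0})^\bot = A \otimes B^\bot$, while the new RHS is just $C$. Item~\ref{moving} directly gives the equivalence. I would handle the degenerate case $m=0$ separately: then $C$ becomes $Z$ (by the convention that an empty RHS is $Z$), and I would invoke Lemma~\ref{the:commute} together with Lemma~\ref{the:symmetric}, item~\ref{AsubB1}, to see that $A\otimes B^\bot \subseteq Z$ collapses to $A \subseteq B$, matching the original.

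For the second equivalence, I similarly split the LHS: write $A' = v\bigl((\cdots\wedge\alpha_{n-1})\bigr)$, $D=v(\alpha_{n})$, and $E = v\bigl(\beta_{0}\vee(\beta_{1}\vee\cdots\vee\beta_{m})\bigr)$. By left-association the original LHS evaluates to $A'\otimes D$, so the original sequent is $A'\otimes D \subseteq E$. The new sequent has LHS $A'$ and RHS $v(\neg\alpha_{n})\oplus E = D^\bot \oplus E$ (using right-association), i.e.\ it reads $A' \subseteq D^\bot \oplus E$. Applying item~\ref{moving} with $B := D^\bot$ and observing $B^\bot = D^{\bot\bot} = D$ via Lemma~\ref{the:symmetric}, item~\ref{closure}, concludes the equivalence. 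The degenerate case $n=0$ I would dispatch by noting that an empty LHS is interpreted as $X$, so the equivalence becomes $D \subseteq E$ iff $X \subseteq D^\bot \oplus E$, which follows from item~\ref{moving} combined with $X \otimes D = D$ (Lemma~\ref{the:1-dim}, item~\ref{small-big}, since $D \subseteq X$).

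No serious obstacle is expected: the main content is already packaged in Theorem~\ref{the:various}, item~\ref{moving}, and the only care needed is bookkeeping around the left/right association conventions and the boundary cases where a side of the turnstile becomes empty after moving a formula. The proof should therefore be short and essentially notational.
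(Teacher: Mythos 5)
Your proof is correct and takes essentially the same route as the paper's (one-line) proof: both equivalences are reduced to Theorem~\ref{the:various}, item~\ref{moving}, with the second one using $(D^\bot)^\bot = D$ (the paper's ``$\neg\neg\alpha\equiv\alpha$''). The only quibble is a mis-citation in your $n=0$ boundary case: $X\otimes D = D$ follows from Lemma~\ref{the:symmetric}, item~\ref{AsubB1} (which gives $B\otimes A=\overline{A}$ when $A\subseteq B$), not from Lemma~\ref{the:1-dim}, item~\ref{small-big}, which yields $D\otimes X=D$, i.e.\ the operands in the wrong order for the non-commutative $\otimes$.
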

\begin{proof}
By Theorem~\ref{the:various}, item~\ref{moving} and the fact that 
\mbox{$ \neg \neg \alpha \equiv \alpha $}.
\end{proof}
The nature of the quantum negation has been discussed in the literature, e.g., 
in~\cite{Connectives:Fest}.
On one hand since orthogonal ($\bot$) is stronger than distinct ($\neq$) one may be tempted 
to conclude that quantum negation ($\neg$) is a strong negation, possibly akin to
an intuitionistic negation. 
Lemma~\ref{the:passing} above shows that this is not so: \emph{quantum negation is classical}.

\subsection{Restricted language} \label{sec:language-restricted}
In view of Lemma~\ref{the:equivalence}, we can restrict the set of propositions we are 
interested in by pushing all negations inside the connectives $\wedge$ and $\vee$, and,
since double negations can be eliminated, we can assume that negations apply only
to atomic propositions.

\begin{definition} \label{def:propositions-restricted}
The set $\cal L$ of propositions is defined in the following way.
\begin{itemize}
\item
An atomic proposition is a proposition.
\item
If $\alpha$ is an \emph{atomic} proposition then \mbox{$ \neg \alpha$} is a proposition.
\item
If $\alpha $ and $\beta$ are propositions, then \mbox{$ \alpha \wedge \beta $} is a proposition.
\item
If $\alpha $ and $\beta$ are propositions, then \mbox{$ \alpha \vee \beta $} is a proposition.
\item
There are no other propositions.
\end{itemize}
\end{definition}

In view of Lemma~\ref{the:passing}, we can also restrict our attention to sequents 
with an empty right-hand side, sequents of the form
\mbox{$ \alpha_{0} , \ldots , \alpha_{n} \models $}, whose intuitive meaning is:
the sequence of measurements $\alpha_{0}$, $\alpha_{1}$, \ldots $\alpha_{n}$ 
\emph{in this order} will never be observed.

\section{Deduction rules} \label{sec:rules}
We shall describe a system $\cal R$ of ten deduction rules 
and prove it is sound and complete for the logic of O-spaces.
In Section~\ref{sec:soundness} the rules are described and proved to be valid:
a classical Cut Rule, an Exchange rule limited to a sequence of two propositions, 
three different Weakening rules, a logical axiom that is the negation introduction rule,
two $\otimes$ introduction rules and two $\oplus$ introduction rules.
The system provides many symmetries and is aesthetically pleasant.
The question whether a Cut elimination theorem can be proved is open.
The rules are not claimed to be independent.
Section~\ref{sec:derived} proves the validity of a number of derived rules, in particular some
elimination rules.
Section~\ref{sec:completeness} proves the deductive system is complete: there is some
O-space for which the valid sequents are exactly the sequents provable in the deductive system.

\subsection{A sound deductive system} \label{sec:soundness}
In deduction rules we use the symbol $\vdash$ to separate the left side 
from the right side of a sequent and not $\models$ as above.
A deduction rule consists of a finite set of sequents, \emph{the assumptions} 
and a sequent, \emph{the conclusion} separated by a horizontal line, called the inference line.
For example, consider the following Cut rule, which is the classical Cut rule.

\( \begin{array} {lc} \\
{R1  \ \bf Cut} &
\begin{array}{c}
\Gamma , \alpha , \Delta \vdash \ \ \ \ \Gamma , \neg \alpha , \Delta \vdash \\
\hline
\Gamma , \Delta \vdash
\end{array} 
\end{array} \)

Such a rule is meant to be part of a set of deduction rules and its meaning is: 
if one has already established the two sequents above the inference line, then, one is
entitled to establish the sequent below the inference line.
We are interested in two properties of such rules and systems of rules.
\begin{definition} \label{def:sound-complete}
An deduction rule is said to be \emph{sound} iff, for any O-space and any assignment $v$, 
for which all the assumptions are valid, the conclusions are also valid 
(in the specified O-space, with the specified assignment $v$).
A set of deduction rules is \emph{complete} iff any sequent that is \emph{valid} 
(in all O-spaces, for all assignments) can be derived using only the rules in the set.
\end{definition}

Let's see that {\bf Cut} is indeed sound.
For any sequence of propositions: 
\mbox{$ \Sigma = $} \mbox{$ \sigma_{0} , \sigma_{1} , \ldots , \sigma_{n} $}
let us denote by \mbox{$ \phi(\Sigma) $} the proposition 
\mbox{$ ( \ldots ( \sigma_{0} \wedge \sigma_{1} ) \wedge \ldots ) \wedge \sigma_{n} $}.
Our claim follows from Lemma~\ref{the:widehat} below and the fact that the assumptions 
of the {\bf Cut} rule imply that \mbox{$ v ( \phi( \Gamma , \alpha , \Delta ) ) =$}
\mbox{$ Z =$} \mbox{$ v ( \phi ( \Gamma , \neg \alpha , \Delta ) ) $}.

\begin{lemma} \label{the:widehat}
For any sequences $\Gamma$, $\Delta$ and any proposition $\alpha$,
\[
v ( \phi ( \Gamma , \Delta ) ) \subseteq 
\overline{ v ( \phi ( \Gamma , \alpha , \Delta ) ) \cup 
v ( \phi ( \Gamma , \neg \alpha , \Delta ) ) }.
\]
\end{lemma}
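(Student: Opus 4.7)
The plan is to reduce the statement to a single application of property \textbf{O} followed by a straightforward induction on the length of $\Delta$.

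Set $G \eqdef v(\phi(\Gamma))$ (with the convention $G = X$ if $\Gamma$ is empty) and $A \eqdef v(\alpha)$. Both $G$ and $A$ lie in $\cal F$ by property \textbf{F} together with Definition~\ref{def:interpretation}, so $A^\bot \in \cal F$ as well. Write $\Delta = \delta_1, \ldots, \delta_m$ and abbreviate, for any set $H$,
\[
H \otimes \Delta \eqdef (\ldots((H \otimes v(\delta_1)) \otimes v(\delta_2)) \ldots) \otimes v(\delta_m).
\]
Then $v(\phi(\Gamma, \Delta)) = G \otimes \Delta$, $v(\phi(\Gamma, \alpha, \Delta)) = (G \otimes A) \otimes \Delta$, and $v(\phi(\Gamma, \neg\alpha, \Delta)) = (G \otimes A^\bot) \otimes \Delta$. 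So the goal becomes
\[
G \otimes \Delta \ \subseteq \ \overline{((G \otimes A) \otimes \Delta) \cup ((G \otimes A^\bot) \otimes \Delta)}.
\]

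First I would establish the base case $G \subseteq \overline{(G \otimes A) \cup (G \otimes A^\bot)}$. Pick any $g \in G$. By \textbf{O}, $g \in \overline{\{g\} \otimes A \cup \{g\} \otimes A^\bot}$. Since $\{g\} \subseteq G$, left-additivity (Lemma~\ref{the:1-dim}, item~\ref{left-additivity}) gives $\{g\} \otimes A \subseteq G \otimes A$ and similarly on the right, so $\overline{\{g\} \otimes A \cup \{g\} \otimes A^\bot} \subseteq \overline{(G \otimes A) \cup (G \otimes A^\bot)}$, and the containment follows.

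The induction step is driven by the following propagation claim: for any sets $H, P, Q$ and any $B \in {\cal F}$, if $H \subseteq \overline{P \cup Q}$, then $H \otimes B \subseteq \overline{(P \otimes B) \cup (Q \otimes B)}$. Indeed, monotonicity of $\otimes$ in its first argument (immediate from Equations~(\ref{eq:proj}) via Lemma~\ref{the:symmetric}, items~\ref{union-bot} and~\ref{anti}) yields $H \otimes B \subseteq \overline{P \cup Q} \otimes B$. By Lemma~\ref{the:symmetric}, item~\ref{bar2}, this equals $(P \cup Q) \otimes B$, which by Lemma~\ref{the:1-dim}, item~\ref{left-additivity}, is exactly $\overline{(P \otimes B) \cup (Q \otimes B)}$. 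Iterating this claim $m$ times, starting from the base $G \subseteq \overline{(G \otimes A) \cup (G \otimes A^\bot)}$ and projecting successively onto $v(\delta_1), \ldots, v(\delta_m)$, delivers the desired inclusion.

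The only nontrivial ingredient is the pointwise use of \textbf{O} in the base case; once one observes that left-additivity allows the projection to be pulled inside a closure and a union, the induction is mechanical. No subtlety arises for empty $\Gamma$ or $\Delta$ provided one reads $G = X$ when $\Gamma$ is empty (matching the convention for sequents) and treats the case $m=0$ as the base step itself.
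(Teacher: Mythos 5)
Your proof is correct and follows essentially the same route as the paper's: induction on the length of $\Delta$, with the base case obtained by applying property \textbf{O} pointwise and lifting it through monotonicity, and the inductive step carried by left-additivity (Lemma~\ref{the:1-dim}, item~\ref{left-additivity}). You merely make explicit the propagation claim and the monotonicity of $\otimes$ in its first argument, which the paper leaves terse.
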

\begin{proof}
We reason by induction on the size of the sequence $\Delta$.
If $\Delta$ is empty, then the claim follows directly from {\bf O} and 
Lemma~\ref{the:symmetric}, item~\ref{bar-cup}:
for any \mbox{$ A , B \in {\cal F} $}, \mbox{$ B \subseteq $}
\mbox{$ \overline{ B \otimes A \cup B \otimes A^\bot } $}.
Assume we have proved the result for $\Delta$ and let us prove it for the extended 
sequence \mbox{$ \Delta , \delta $}.
By Lemma~\ref{the:1-dim}, item~\ref{left-additivity} and the induction hypothesis
\[
v ( \phi ( \Gamma , \Delta , \delta ) ) =
v ( \phi ( \Gamma , \Delta ) \wedge \delta ) \subseteq
\overline { v ( \phi ( \Gamma , \alpha , \Delta ) \wedge \delta ) \cup
v ( \phi ( \Gamma , \neg \alpha , \Delta ) \wedge \delta ) } =
\]
\[
\overline { v ( \phi ( \Gamma , \alpha , \Delta , \delta ) ) \cup 
v ( \phi ( \Gamma , \neg \alpha , \Delta , \delta ) ) }.
\]
\end{proof}
Let us, first, consider structural rules, i.e., rules that do not involve the connectives.
There is no valid general {\bf Exchange} rule: one cannot modify the order of the propositions 
in the left-hand side of a sequent, but there is a very limited exchange rule: if a sequent
has only two propositions in its left part, those two propositions may be exchanged.

\( \begin{array} {lc} \\
{R2 \ \bf Exchange} &
\begin{array}{c}
\alpha , \beta \vdash \\
\hline
\beta , \alpha \vdash
\end{array} 
\end{array} \)

The soundness of {\bf Exchange} follows from Lemma~\ref{the:commute}
and the fact that \mbox{$ v(\alpha) \, \bot \, v(\beta) $} iff 
\mbox{$ v(\beta) \, \bot \, v(\alpha) $}.

Concerning {\bf Weakening} the situation is more complex: one cannot add a formula
anywhere in a sequent, but one can do this in three different situations:
\begin{itemize}
\item
in the leftmost position:

\( \begin{array} {lc} \\
{R3 \ \bf Left-Weakening} &
\begin{array}{c}
\Gamma \vdash \\
\hline
\alpha , \Gamma \vdash
\end{array} 
\end{array} \)

\item
in the rightmost position:

\( \begin{array} {lc} \\
{R4 \ \bf Right-Weakening} &
\begin{array}{c}
\Gamma \vdash \\
\hline
\Gamma , \alpha \vdash
\end{array} 
\end{array} \)
\item
one can insert a formula in any position, as long as it already holds at this point.
This is an introduction rule for negation.

\( \begin{array} {lc} \\
{R5 \ \bf Stuttering} &
\begin{array}{c}
\Gamma , \Delta \vdash \ \ \ \ \Gamma , \alpha \vdash \\
\hline
\Gamma , \neg \alpha , \Delta \vdash
\end{array} 
\end{array} \)
\end{itemize}

The soundness of {\bf Left-Weakening} follows from the following remark.
Let \mbox{$ O = \langle X , \bot , {\cal F} \rangle $} be an O-space.
Let \mbox{$ \{ A_{i} \}, 0 \leq i \leq n - 1 $} be any sequence of elements of $\cal F$.
If \mbox{$ x_{n} \in ( \ldots ( A_{0} \otimes A_{1} ) \otimes \ldots ) \otimes A_{n} $},
then, by Lemma~\ref{the:1-dim}, item~\ref{left-additivity}, there is some 
\mbox{$ x_{n - 1} \in \overline{A_{n - 1 }} $} such that 
\mbox{$ x_{n} \in \{ x_{n - 1} \} \otimes A_{n} $}.
Similarly, for any $i$, \mbox{$ 0 \leq  i \leq n $}, there is some 
\mbox{$ x_{i - 1} \in \overline{A_{i  - 1} } $} such that 
\mbox{$ x_{i} \in \{ x_{i - 1} \} \otimes A_{i} $}.
Therefore 
\mbox{$ x_{n} \in \{ x_{0} \} \otimes ( \ldots ( A_{1} \otimes A_{2} ) \ldots ) 
\otimes A_{n} $} by Lemma~\ref{the:1-dim}, item~\ref{left-additivity} and
\mbox{$ ( \ldots ( A_{0} \otimes A_{1} ) \otimes \ldots ) \otimes A_{n} \subseteq $}
\mbox{$ ( \ldots ( A_{1} \otimes A_{2} ) \otimes \ldots ) \otimes A_{n} $}.
If the right-hand side is equal to $Z$, so is the left-hand side.

The {\bf Right-Weakening} rule is seen to be sound since 
\mbox{$ Z \otimes A = Z $} by Equations~(\ref{eq:proj}) and 
Lemma~\ref{the:symmetric}, item~\ref{empty-full}.

The {\bf Stuttering} rule is sound since, for any \mbox{$ A , B \in {\cal F} $} such that 
\mbox{$ A \otimes B = Z $} one has \mbox{$ A \otimes B^\bot = A $}.
Indeed, by Lemma~\ref{the:commute}, \mbox{$ A \otimes B = Z $} implies
\mbox{$ A \, \bot \, B $} and \mbox{$ A \subseteq B^\bot $}.
Therefore, by Lemma~\ref{the:1-dim}, item~\ref{small-big},
\mbox{$ A \otimes B^\bot = A $}.

Let us proceed, now with rules that deal with each of our three connectives.
All the rules of our system are introduction rules: the propositions that appear in the assumptions 
also appear in the conclusion.
First, a rule for negation.
It is an axiom, i.e., a deduction rule with no assumptions.
For any \emph{atomic} proposition $\sigma$:

\( \begin{array} {lc} \\
{R6 \ \bf \neg Atomic} &
\begin{array}{c}
\\
\hline
\sigma , \neg \sigma \vdash
\end{array} 
\end{array} \)

It is sound, since, in any O-space, for any element $A$ of $\cal F$ one has
\mbox{$ A \, \bot \, A^\bot $} and, by Lemma~\ref{the:commute}, 
\mbox{$ A \otimes A^\bot = Z $} and therefore
for any proposition $\alpha$ one has \mbox{$ \alpha , \neg \alpha \models $}.
Note that we have shown that the rule is sound for any proposition $\alpha$, but we restrict
the rule to atomic propositions because we prefer to work with a weaker set of deduction rules
in order to prove stronger results.
Let us now consider rules for the connectives $\wedge$ and $\vee$.
For $\wedge$ we have two deduction rules, one on the leftmost part of the sequent and one
on the rightmost part.

\(\begin{array}{lc} \\
{R7 \ \bf Left-\wedge} &
\begin{array}{c}
\alpha , \beta , \Delta \vdash \\
\hline
\alpha \wedge \beta , \Delta \vdash
\end{array} 
\end{array} \)

\(\begin{array}{lc} \\
{R8 \ \bf Right-\wedge}  &
\begin{array}{c}
\Gamma , \alpha , \beta \vdash \\
\hline
\Gamma , \beta \wedge \alpha \vdash
\end{array} 
\end{array} \)

Note the change of order between $\alpha$ and $\beta$ in {\bf Right-$\wedge$}.
The soundness of {\bf Left-$\wedge$} follows directly from our interpretation of 
sequents in Equation~(\ref{eq:seq-interpretation}).
The soundness of {\bf Right-$\wedge$} follows from Theorem~\ref{the:various}, 
item~\ref{T}.
Similarly, we have two deduction rules for the connective $\vee$.
The first one is an axiom: for any propositions \mbox{$\alpha , \beta$}:

\( \begin{array} {lc} \\
{R9 \ \bf \neg\vee1} &
\begin{array}{c}
\\
\hline
\neg \alpha , \neg \beta , \, \beta \vee \alpha \vdash 
\end{array} 
\end{array} \)

The soundness of {\bf $\neg\vee$1} follows from the fact that, in any O-space, for any 
\mbox{$ A , B \in {\cal F} $} one has 
\mbox{$ A \otimes B \, \perp \, \neg ( A \otimes B ) $},
and by Lemma~\ref{the:commute}, 
\mbox{$ ( A \otimes B ) \otimes ( B^\bot \oplus A^\bot ) = Z $}.
Therefore, for any propositions $\alpha$ and $\beta$, we have
\mbox{$ \alpha , \beta , \neg \beta \vee \neg \alpha \models $}.

\(\begin{array}{lc} \\
{R10 \ \bf \vee-Intro} &
\begin{array}{c}
\Gamma , \alpha , \Delta \vdash \ \ \ \ \ \Gamma , \neg \alpha , \beta , \Delta \vdash \\
\hline
\Gamma , \alpha \vee \beta , \Delta \vdash
\end{array} 
\end{array} \)

The soundness of {\bf $\vee$-Intro} is proved by the following lemma.
\begin{lemma} \label{the:Left-oplus}
For any sequence $\Delta$ and any propositions $\alpha$, $\beta$, $\gamma$,
\[
v ( \phi ( \gamma \wedge ( \alpha \vee \beta ) , \Delta ) ) \subseteq 
\overline{ v ( \phi ( \gamma \wedge \alpha , \Delta ) ) \cup 
v ( \phi ( \gamma \wedge \neg \alpha , \beta , \Delta ) ) }.
\]
\end{lemma}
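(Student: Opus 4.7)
My plan is to prove the inequality by induction on the length of $\Delta$, in the same spirit as the proof of Lemma~\ref{the:widehat}. For the inductive step, suppose the result holds for $\Delta$, and abbreviate the three flats appearing on the two sides as $E \subseteq \overline{F \cup G}$. Applying the operation $\otimes\, v(\delta)$ to both sides, Lemma~\ref{the:symmetric} item~\ref{bar2} absorbs the outer closure so that $\overline{F \cup G} \otimes v(\delta) = (F \cup G) \otimes v(\delta)$, and Lemma~\ref{the:1-dim} item~\ref{left-additivity} distributes the product, giving $(F \cup G) \otimes v(\delta) = \overline{(F \otimes v(\delta)) \cup (G \otimes v(\delta))}$. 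Composing these identities yields exactly the inclusion for the extended sequence $\Delta, \delta$.

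The nontrivial content is therefore the base case $\Delta = \emptyset$: writing $C = v(\gamma)$, $A = v(\alpha)$, $B = v(\beta)$, I need
\[
C \otimes (A \oplus B) \;\subseteq\; \overline{(C \otimes A) \cup ((C \otimes A^\bot) \otimes B)}.
\]
I would first reduce to the singleton case. Condition \textbf{A} of Definition~\ref{def:O-space} yields $C \otimes (A \oplus B) \subseteq \bigcup_{c \in C} \{c\} \otimes (A \oplus B)$, while the monotonicity of $\otimes$ in its first argument (immediate from item~\ref{left-additivity}) gives $\{c\} \otimes A \subseteq C \otimes A$ and $(\{c\} \otimes A^\bot) \otimes B \subseteq (C \otimes A^\bot) \otimes B$. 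It therefore suffices to prove the inclusion when $C$ is the singleton flat $\{c\}$.

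For fixed $c$, take any $c' \in \{c\} \otimes (A \oplus B)$ and apply condition \textbf{O} to $c'$ and the flat $A$, obtaining $c' \in \overline{\{c'\} \otimes A \cup \{c'\} \otimes A^\bot}$. It then remains to establish two claims: (a) $\{c'\} \otimes A \subseteq \{c\} \otimes A$ and (b) $\{c'\} \otimes A^\bot \subseteq (\{c\} \otimes A^\bot) \otimes B$. Claim (a) is direct from Theorem~\ref{the:various} item~\ref{CBA}: since $A \subseteq A \oplus B$, one has $(\{c\} \otimes (A \oplus B)) \otimes A = \{c\} \otimes A$, and monotonicity gives $\{c'\} \otimes A \subseteq \{c\} \otimes A$.

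The real challenge lies in claim (b). My approach would be to apply Lemma~\ref{the:O-subspace} to the flat $A \oplus B$ that contains $c'$; unfolding the definition of $\otimes$ shows $A^\bot \otimes (A \oplus B) = (A \oplus B) \cap A^\bot$, and item~\ref{dna4} of Lemma~\ref{the:1-dim} (applied with $A$ replaced by $A^\bot$) identifies this with $A^\bot \cap B$. This places $\{c'\} \otimes A^\bot$ inside $A^\bot \cap B \subseteq B$, so by item~\ref{small-big} of Lemma~\ref{the:1-dim} the additional projection onto $B$ is inert on it. What remains is an orthogonality transfer: every $z \in B$ with $z \bot (\{c\} \otimes A^\bot)$ must be shown to satisfy $z \bot (\{c'\} \otimes A^\bot)$, and I would try to drive this using item~\ref{perp} of Lemma~\ref{the:1-dim} together with items~\ref{pS} and~\ref{neg-o-axioms1}, exploiting the defining property that $c'$ is orthogonal to $(A \oplus B) \cap \{c\}^\bot$. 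Arranging this orthogonality transfer correctly in the absence of commutativity of $\otimes$ is, I expect, the main obstacle in the argument.
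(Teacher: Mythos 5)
Your induction step is fine and is essentially the paper's own. The gap is in the base case, and it is fatal to the approach. First, as a matter of form, claim (b) is never proved: you list the lemmas you ``would try to drive this with'' and identify the orthogonality transfer as the main obstacle, but you do not carry it out, so the argument is incomplete exactly where it is hardest. Second, and more seriously, claim (b) is false, so the transfer cannot be arranged. Take \mbox{$X=\mathbb{R}^{3}$} with the usual inner product, \mbox{$A=\langle e_{1}\rangle$}, \mbox{$B=\langle e_{1}+e_{2},\,e_{3}\rangle$}, \mbox{$c=e_{1}+e_{2}+e_{3}$}. Then \mbox{$A\oplus B=\langle e_{1},e_{3}\rangle$}, so \mbox{$c'=e_{1}+e_{3}$} and \mbox{$\{c'\}\otimes A^{\bot}=\langle e_{3}\rangle$}, whereas \mbox{$\{c\}\otimes A^{\bot}=\langle e_{2}+e_{3}\rangle$} and \mbox{$(\{c\}\otimes A^{\bot})\otimes B=\langle e_{1}+e_{2}+2e_{3}\rangle$}, which does not contain $e_{3}$. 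The same data in fact refute the inclusion you set yourself as the base case: here \mbox{$C\otimes(A\oplus B)=\langle e_{1}+e_{3}\rangle$} while \mbox{$\overline{(C\otimes A)\cup((C\otimes A^{\bot})\otimes B)}=\langle e_{1},\,e_{2}+2e_{3}\rangle$}, and $e_{1}+e_{3}$ does not lie in the latter. Since Hilbert spaces are O-spaces (Section~\ref{sec:Hilbert}), no rearrangement of the orthogonality bookkeeping will rescue the argument: the target of your base case is not a theorem about O-spaces.

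This also locates the divergence from the paper. The paper's base case does not establish the displayed inclusion for empty $\Delta$; it proves only the weaker implication that \mbox{$C\otimes A=Z$} and \mbox{$(C\otimes A^{\bot})\otimes B=Z$} together force \mbox{$C\otimes(A\oplus B)=Z$}, which is exactly what the soundness of {\bf $\vee$-Intro} requires and which does not follow from (nor imply) the pointwise inclusion. Your proposal takes the lemma's statement literally and therefore attempts something strictly stronger than what the paper actually proves there --- and, per the counterexample, too strong. A repairable version of your plan would replace the set inclusion by the $Z$-preservation statement throughout and run the induction on that instead; the reduction to singletons via {\bf A} and the appeal to {\bf O} are then unnecessary, and the base case becomes the paper's orthogonality computation.
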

\begin{proof}
By induction on the size of the sequence $\Delta$.
Assume, first that the sequence $\Delta$ is empty.
We want to prove that, for any \mbox{$ A , B , C \in {\cal F} $} such that 
\mbox{$ C \otimes A = Z $} and \mbox{$ ( C \otimes A^\bot ) \otimes B = Z $} 
one has \mbox{$ C \otimes ( A \oplus B ) = Z $}.
By Theorem~\ref{the:various}, item~\ref{T}, the second assumption implies that
\mbox{$ C \otimes ( B \otimes A^\bot ) = Z $}.
By Lemma~\ref{the:commute}, then, the assumptions imply that \mbox{$ C \, \bot \, A $}
i.e., \mbox{$ C \subseteq A^\bot $}  and
\mbox{$ C \, \bot \, B \otimes A^\bot $}, i.e., 
\mbox{$ C \subseteq {( B \otimes A^\bot )}^\bot $}.
We see that \mbox{$ C \subseteq $}
\mbox{$ A^\bot \cap {( B \otimes A^\bot )}^\bot \subseteq $}
\mbox{$ A^\bot \cap {( B \cap A^\bot )}^\bot =$}
\mbox{$ B^\bot \otimes A^\bot $}.
One concludes that \mbox{$ C \, \bot \, A \oplus B $}.
We have proved the soundness of the rule when $\Delta$ is empty.
For the induction step, by Lemma~\ref{the:1-dim}, item~\ref{left-additivity}, 
and the induction hypothesis:
\[
v ( \phi ( \gamma\wedge ( \alpha \vee \beta ) , \Delta ) \wedge \delta ) =
v ( \delta ) \cap 
{( v ( \delta ) \cap {( v ( \phi ( \gamma \wedge ( \alpha \vee \beta ) , \Delta ) ) )}^\bot )}^\bot \subseteq 
\]
\[
v ( \delta ) \cap 
{( v ( \delta ) \cap {( v ( \phi ( \gamma \wedge \alpha , \Delta ) ) \cup 
v ( \phi ( \gamma \wedge \neg \alpha , \beta , \Delta ) ) )}^\bot )}^\bot = 
\]
\[
( v ( \phi ( \gamma \wedge \alpha , \Delta ) ) \cup 
v ( \phi ( \gamma \wedge \neg \alpha , \beta , \Delta ) ) ) \otimes \delta =
( v ( \phi ( \gamma \wedge \alpha , \Delta ) ) ) \otimes \delta \cup 
( v ( \phi ( \gamma \wedge \neg \alpha , \beta , \Delta ) ) ) \otimes \delta = 
\]
\[
v ( \phi ( \gamma \wedge \alpha , \Delta , \delta ) ) \cup 
v ( \phi ( \gamma \wedge \neg \alpha , \beta , \Delta , \delta ) ). 
\]
\end{proof}

It is worth noticing that, even though quantum logic is not bivalent as explained 
in Section~\ref{sec:O-def}, the sequent \mbox{$ \vdash \alpha \vee \neg \alpha $} 
is valid, i.e., quantum logic satisfies the Law of Excluded Middle.
One also easily sees that if the standard unlimited Exchange rule is added to our system, 
one obtains classical propositional logic where $\wedge$ is conjunction and 
$\vee$ is disjunction. 
We can now state a soundness theorem: its proof has been provided above.
\begin{theorem} \label{the:soundness}
Each of the ten deductive rules of the system $\cal R$ described above is sound.
\end{theorem}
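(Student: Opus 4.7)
The plan is to verify soundness rule by rule, leveraging the fact that essentially all of the work has already been packaged into the earlier results of Sections~\ref{sec:O-properties} and~\ref{sec:syntax}. The common reduction, by Equation~(\ref{eq:seq-interpretation}) together with Lemma~\ref{the:commute}, is that validity of a sequent \mbox{$\sigma_{0},\ldots,\sigma_{n}\vdash$} amounts to the identity \mbox{$v(\phi(\sigma_{0},\ldots,\sigma_{n}))=Z$}, so that throughout the proof I am only manipulating flats under $\otimes$, $\oplus$ and ${}^{\bot}$.

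I would first handle the structural rules. For \textbf{Cut}, I would invoke Lemma~\ref{the:widehat} directly: the two hypotheses force \mbox{$v(\phi(\Gamma,\alpha,\Delta))=Z$} and \mbox{$v(\phi(\Gamma,\neg\alpha,\Delta))=Z$}, whose join is $Z$, so \mbox{$v(\phi(\Gamma,\Delta))\subseteq Z$}. For \textbf{Exchange}, Lemma~\ref{the:commute} gives \mbox{$v(\alpha)\otimes v(\beta)=Z\Leftrightarrow v(\alpha)\,\bot\,v(\beta)\Leftrightarrow v(\beta)\otimes v(\alpha)=Z$}. For \textbf{Left-Weakening}, I would use the telescoping argument already spelled out after rule R3, built from Lemma~\ref{the:1-dim}, item~\ref{left-additivity}, to show the value of the left-hand-side sequence shrinks when a proposition is prepended. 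For \textbf{Right-Weakening}, the fact that \mbox{$Z\otimes A=Z$} (from Equations~(\ref{eq:proj}) and Lemma~\ref{the:symmetric}, item~\ref{empty-full}) suffices. For \textbf{Stuttering}, the first hypothesis gives \mbox{$A\otimes B=Z$}, i.e.\ \mbox{$A\,\bot\,B$} by Lemma~\ref{the:commute}, hence \mbox{$A\subseteq B^{\bot}$} and \mbox{$A\otimes B^{\bot}=A$} by Lemma~\ref{the:1-dim}, item~\ref{small-big}, reducing the conclusion to the other hypothesis.

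Next I would treat the logical rules. \textbf{$\neg$Atomic} is immediate from \mbox{$A\,\bot\,A^{\bot}$} and Lemma~\ref{the:commute}. \textbf{Left-$\wedge$} is the very definition of $v\circ\phi$, since the comma on the left associates to the left. \textbf{Right-$\wedge$} requires swapping the last two operands: this is exactly the content of Theorem~\ref{the:various}, item~\ref{T}, applied with \mbox{$A=v(\phi(\Gamma))$}, \mbox{$B=v(\alpha)$}, \mbox{$C=v(\beta)$}. \textbf{$\neg\vee$1} follows from Lemma~\ref{the:equivalence} (De Morgan) combined with \mbox{$A\otimes B\,\bot\,(A\otimes B)^{\bot}$} and Lemma~\ref{the:commute}, giving \mbox{$(v(\alpha)\otimes v(\beta))\otimes(v(\beta)^{\bot}\oplus v(\alpha)^{\bot})=Z$}. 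Finally \textbf{$\vee$-Intro} is handled by Lemma~\ref{the:Left-oplus}, which is exactly tailored to this purpose: the two hypotheses give that both unioned terms on the right equal $Z$, forcing the conclusion.

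The main obstacles are already encapsulated in Lemmas~\ref{the:widehat} and~\ref{the:Left-oplus}, both of whose proofs proceed by induction on the length of the trailing sequence $\Delta$, using property \textbf{O} (for \textbf{Cut}) or the commutation item~\ref{T} of Theorem~\ref{the:various} together with Lemma~\ref{the:commute} (for \textbf{$\vee$-Intro}) in the base case, and the distributivity of $\otimes$ over unions in the closure (Lemma~\ref{the:1-dim}, item~\ref{left-additivity}) to propagate the hypothesis through $\Delta$. Since these two lemmas are the only genuinely non-trivial ingredients and they are already available, the overall soundness theorem reduces to a short bookkeeping proof that strings together one earlier result per rule; no new technical obstruction is expected.
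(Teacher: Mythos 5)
Your proposal is correct and follows essentially the same route as the paper, which likewise proves soundness rule by rule using exactly the ingredients you cite: Lemma~\ref{the:widehat} for \textbf{Cut}, Lemma~\ref{the:commute} for \textbf{Exchange}, \textbf{$\neg$Atomic} and \textbf{$\neg\vee$1}, the telescoping argument via Lemma~\ref{the:1-dim} for \textbf{Left-Weakening}, Theorem~\ref{the:various} item~\ref{T} for \textbf{Right-$\wedge$}, and Lemma~\ref{the:Left-oplus} for \textbf{$\vee$-Intro}. The only cosmetic difference is that the paper interleaves these arguments with the statement of each rule rather than collecting them after the theorem.
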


\subsection{Derived rules} \label{sec:derived}
To prepare the completeness result of Theorem~\ref{the:completeness} we need
to put in evidence the power of the deductive system presented above.
The rules that we shall derive are the counterparts of the properties of O-spaces 
mentioned in Section~\ref{sec:O-properties}.
By Theorem~\ref{the:soundness}, the rules presented below are sound, but 
our purpose is to show that they can be derived, \emph{not that they are sound}.

\subsubsection{A first batch of derived rules} \label{sec:first-batch}
\begin{itemize}
\item
First, a generalization of the {\bf $\neg$ Atomic} rule.
For \emph{any} proposition $\alpha$:

\( \begin{array} {lc} \\
{\bf Logical \ Axiom} &
\begin{array}{c}
\\
\hline
\alpha , \neg \alpha \vdash
\end{array} 
\end{array} \)

Derivation:
If \mbox{$ \alpha = \sigma $} is atomic, the rule is {\bf $\neg$Atomic}.
If \mbox{$ \alpha = \neg \sigma $}, the negation of an atomic proposition:

\(
\begin{array}{lc}
{\bf \neg Atomic} & \sigma , \neg \sigma \vdash \\
\cline{2-2}
{\bf Exchange} & \neg \sigma , \sigma \vdash \\
\cline{2-2}
& \neg \sigma , \neg \neg \sigma \vdash 
\end{array}
\)

If \mbox{$ \alpha = \beta \wedge \gamma $}:

\(
\begin{array}{lc}
{\bf \neg\vee1}  & \beta , \gamma , \neg \beta \vee \neg \gamma \vdash \\
\cline{2-2}
{\bf Left-\wedge} & \beta \wedge \gamma , \neg \beta \vee \neg \gamma \vdash \\
\cline{2-2}
& \beta \wedge \gamma , \neg ( \gamma \wedge \beta ) \vdash
\end{array}
\)

If \mbox{$ \alpha = \beta \vee \gamma $}:

\(
\begin{array}{lc}
{\bf \neg\vee1}  & \neg \gamma , \neg \beta , \beta \vee \gamma \vdash \\
\cline{2-2}
{\bf Left-\wedge} & \neg \gamma \wedge \neg \beta , \beta \vee \gamma \vdash \\
\cline{2-2}
{\bf Exchange} & \beta \vee \gamma , \neg \gamma \wedge \neg \beta \vdash \\
\cline{2-2}
& \beta \vee \gamma , \neg ( \beta \vee \gamma ) \vdash
\end{array}
\)
\item
Now, a repetition rule:

\( \begin{array} {lc} \\
{\bf Repetition} &
\begin{array}{c}
\Gamma , \alpha , \Delta \vdash \\
\hline
\Gamma , \alpha , \alpha , \Delta \vdash
\end{array} 
\end{array} \)

Derivation:

\(
\begin{array}{lcrc}
{\bf Logical Axiom} & \alpha , \neg \alpha \vdash & & \\
\cline{2-2}
{\bf Left-Weakening} & \Gamma , \alpha , \neg \alpha \vdash 
& {\bf Assumption} & \Gamma , \alpha , \Delta \vdash \\
\cline{2-4}
& {\bf Stuttering} & \Gamma , \alpha , \alpha , \Delta \vdash
\end{array}
\)
\item
A contraction rule:

\( \begin{array} {lc} \\
{\bf Contraction} &
\begin{array}{c}
\Gamma , \alpha , \alpha , \Delta \vdash \\
\hline
\Gamma , \alpha , \Delta \vdash
\end{array} 
\end{array} \)

Derivation:

\(
\begin{array}{lcrc}
{\bf Logical Axiom} & \alpha , \neg \alpha \vdash \\
\cline{2-2}
{\bf Left-Weakening} & \Gamma , \alpha , \neg \alpha \vdash \\
\cline{2-2}
{\bf Right-Weakening} & \Gamma , \alpha , \neg \alpha , \Delta \vdash  
& {\bf Assumption} & \Gamma , \alpha , \alpha , \Delta \vdash \\
\cline{2-4}
& {\bf Cut} & \Gamma , \alpha , \Delta \vdash
\end{array}
\)
\item
Now, some elimination rules.
First, the converse to {\bf Left-$\wedge$}

\( \begin{array} {lc} \\
{\bf \wedge Left-Elim} &
\begin{array}{c}
\alpha \wedge \beta , \Delta \vdash \\
\hline
\alpha , \beta , \Delta \vdash
\end{array} 
\end{array} \)

Derivation:

\(
\begin{array}{lcrc}
{\bf Assumption} & \alpha \wedge \beta , \Delta \vdash 
& {\bf \neg\vee1} & \alpha , \beta , \neg ( \alpha \wedge \beta ) \vdash \\
\cline{2-2} \cline{4-4}
{\bf Left-Weakening} & \alpha , \beta , \alpha \wedge \beta , \Delta \vdash 
& {\bf Right-Weakening} & \alpha , \beta , \neg ( \alpha \wedge \beta ) , \Delta \vdash \\
\cline{2-4}
& {\bf Cut} & \alpha , \beta , \Delta \vdash
\end{array}
\)
\item
Some exchange rule among three propositions

\( \begin{array} {lc} \\
{\bf Circ} &
\begin{array}{c}
\alpha , \beta , \gamma \vdash \\
\hline
\gamma , \beta , \alpha \vdash 
\end{array} 
\end{array} \)

Derivation:

\(
\begin{array}{lc}
{\bf Assumption} & \alpha , \beta , \gamma \vdash \\
\cline{2-2}
{\bf Right-\wedge} & \alpha , \gamma \wedge \beta \vdash \\
\cline{2-2}
{\bf Exchange} & \gamma \wedge \beta , \alpha \vdash \\
\cline{2-2}
{\bf \wedge Left-Elim} & \gamma , \beta , \alpha \vdash
\end{array}
\)
\item
The converse to {\bf Right-$\wedge$}

\( \begin{array} {lc} \\
{\bf \wedge Right-Elim} &
\begin{array}{c}
\Gamma , \alpha \wedge \beta \vdash \\
\hline
\Gamma , \beta , \alpha \vdash
\end{array} 
\end{array} \)

Derivation:
\nopagebreak

\(
\begin{array}{lcrc}
& & {\bf \neg\vee1} & \alpha , \beta , \neg \beta \vee \neg \alpha \vdash \\
\cline{4-4}
{\bf Assumption} & \Gamma , \alpha \wedge \beta \vdash 
& {\bf Circ} & \neg \beta \vee \neg \alpha , \beta , \alpha \vdash \\
\cline{2-2} \cline{4-4}
{\bf Right-Weakening} & \Gamma , \alpha \wedge \beta , \beta , \alpha \vdash 
& {\bf Left-Weakening}  & \Gamma , \neg ( \alpha \wedge \beta ) , \beta , \alpha \vdash \\
\cline{2-4}
& {\bf Cut} & \Gamma , \beta , \alpha \vdash
\end{array}
\)
\item
An elimination rule for $\vee$.

\( \begin{array} {lc} \\
{\bf \vee Left-Elim} &
\begin{array}{c}
\alpha \vee \beta , \Delta \vdash \\
\hline
\alpha , \Delta \vdash 
\end{array}
\end{array} \)

Derivation:

\( \begin{array}{lcrc}
& & {\bf Logical Axiom} & \alpha , \neg \alpha \vdash \\
\cline{4-4}
& & {\bf Right-Weakening} & \alpha , \neg \alpha , \neg \beta \vdash \\
\cline{4-4}
{\bf Assumption}  & \neg ( \neg \beta \wedge \neg \alpha ) , \Delta \vdash
& {\bf Right-\wedge} &  \alpha , \neg \beta \wedge \neg \alpha \vdash \\
\cline{2-2} \cline{4-4}
{\bf Left-Weakening} & \alpha , \neg ( \neg \beta \wedge \neg \alpha ) , \Delta \vdash 
& {\bf Right-Weakening} & \alpha , \neg \beta \wedge \neg \alpha , \Delta \vdash \\
\cline{2-4}
& {\bf Cut} & \alpha , \Delta \vdash 
\end{array} \)
\item
Another one.

\( \begin{array} {lc} \\
{\bf \vee Right-Elim} &
\begin{array}{c}
\Gamma , \alpha \vee \beta \vdash \\
\hline
\Gamma , \alpha \vdash 
\end{array}
\end{array} \)

Derivation:

\( \begin{array}{lcrc}
& & {\bf Logical Axiom} & \alpha , \neg \alpha \vdash \\
\cline{4-4}
& & {\bf Exchange} & \neg \alpha , \alpha \vdash \\
\cline{4-4}
& & {\bf Left-Weakening} & \neg \beta , \neg \alpha , \alpha \vdash \\
\cline{4-4}
{\bf Assumption}  & \Gamma , \neg ( \neg \beta \wedge \neg \alpha ) \vdash
& {\bf Left-\wedge} &  \neg \beta \wedge \neg \alpha , \alpha \vdash \\
\cline{2-2} \cline{4-4}
{\bf Right-Weakening} & \Gamma, \neg ( \neg \beta \wedge \neg \alpha ) , \alpha \vdash 
& {\bf Left-Weakening} & \Gamma , \neg \beta \wedge \neg \alpha , \alpha \vdash \\
\cline{2-4}
& {\bf Cut} & \Gamma , \alpha \vdash 
\end{array} \)
\end{itemize}

\subsubsection{Implication and logical equivalence} \label{sec:implication}
We shall begin, now, an in-depth study of the deductive system $\cal R$. 
\begin{definition} \label{def:implication}
Let \mbox{$ \alpha , \beta \in {\cal L} $}. We shall say that $\beta$ \emph{implies} $\alpha$ 
and write \mbox{$ \beta \rightarrow \alpha $}
iff the sequent \mbox{$ \beta , \neg \alpha \vdash $} is derivable in the system $\cal R$.
We shall say that $\alpha$ and $\beta$ are \emph{logically equivalent} and write 
\mbox{$ \alpha \simeq \beta $} iff \mbox{$ \alpha \rightarrow \beta $} and 
\mbox{$ \beta \rightarrow \alpha $}.
\end{definition}

The following holds as expected and confirms the classical flavor of our system.
\begin{lemma} \label{the:implication}
\ 
\begin{enumerate}
\item \label{ref-trans}
The implication relation is reflexive and transitive,
\item \label{equivalence-relation}
logical equivalence is an equivalence relation, 
\item \label{equiv-sequent}
if \mbox{$ \alpha \simeq \beta $}, then,for any sequences $\Gamma$ and $\Delta$, 
one has \mbox{$ \Gamma , \alpha , \Delta \vdash $} iff 
\mbox{$ \Gamma , \beta , \Delta \vdash $},
\item \label{congruence}
logical equivalence is a congruence for $\neg$, $\wedge$ and $\vee$,
\item \label{imp}
if \mbox{$ \beta \rightarrow \alpha $}, \mbox{$ \neg \alpha \rightarrow \neg \beta $}
and for any \mbox{$ \gamma \in {\cal L} $}, 
\mbox{$ \beta \wedge \gamma \rightarrow \alpha \wedge \gamma $} and
\mbox{$ \gamma \vee \beta \rightarrow \gamma \vee \alpha $}.
\end{enumerate}
\end{lemma}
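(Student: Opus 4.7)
The plan is to establish the six clauses in the order 1, 2, the contrapositive clause of 5, 3, then items 4 and the remaining two clauses of 5. Items 1, 2 and the $\neg$-clause of 5 are essentially bookkeeping, item 3 uses one clever Cut, and the $\wedge$- and $\vee$-preservation clauses of item 5 are the only place that calls for real ingenuity.

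\textbf{Items 1, 2, and the $\neg$-clause of 5.} Reflexivity $\alpha\to\alpha$ is the generalised Logical Axiom of Section~\ref{sec:first-batch}. For transitivity, given $\gamma,\neg\beta\vdash$ and $\beta,\neg\alpha\vdash$, one Right-Weakening on the first and one Left-Weakening on the second yield $\gamma,\neg\beta,\neg\alpha\vdash$ and $\gamma,\beta,\neg\alpha\vdash$, and Cut on $\beta$ delivers $\gamma,\neg\alpha\vdash$. Item 2 then follows, symmetry being built into the definition of $\simeq$. The $\neg$-clause of item 5 collapses once the outer negations are pushed in via Lemma~\ref{the:equivalence}: $\neg\alpha\to\neg\beta$ unfolds to the sequent $\neg\alpha,\beta\vdash$, and this is nothing but the two-proposition Exchange applied to $\beta,\neg\alpha\vdash$.

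\textbf{Item 3.} Suppose $\alpha\simeq\beta$ and $\Gamma,\alpha,\Delta\vdash$. From $\alpha\to\beta$, i.e.\ $\alpha,\neg\beta\vdash$, iterated Left-Weakening produces $\Gamma,\alpha,\neg\beta\vdash$. Now apply Stuttering with left part $\Gamma,\alpha$, inserted formula $\neg\beta$, trailing sequence $\Delta$, using the given $\Gamma,\alpha,\Delta\vdash$ as the other premise; since $\neg\neg\beta$ is identified with $\beta$, this yields $\Gamma,\alpha,\beta,\Delta\vdash$. Independently, $\beta\to\alpha$ gives $\beta,\neg\alpha\vdash$; Exchange followed by Left- and Right-Weakening produces $\Gamma,\neg\alpha,\beta,\Delta\vdash$. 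A single Cut on $\alpha$, with left context $\Gamma$ and right context $\beta,\Delta$, delivers $\Gamma,\beta,\Delta\vdash$. The reverse implication $\Gamma,\beta,\Delta\vdash\Rightarrow\Gamma,\alpha,\Delta\vdash$ is obtained by swapping the roles of $\alpha$ and $\beta$.

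\textbf{Item 4 and the $\wedge$- and $\vee$-clauses of 5.} The $\neg$-congruence in item 4 is the $\neg$-clause of item 5 applied to each half of the given equivalence. For the $\wedge$-clause of item 5 the target $\beta\wedge\gamma,\neg(\alpha\wedge\gamma)\vdash$ is read, via Lemma~\ref{the:equivalence}, as $\beta\wedge\gamma,\neg\gamma\vee\neg\alpha\vdash$. A direct $\vee$-Intro is not adequate because the would-be second premise need not even be sound, so the derivation instead routes through Cut against the generalised Logical Axiom $\alpha\wedge\gamma,\neg(\alpha\wedge\gamma)\vdash$, using $\alpha\wedge\gamma$ as a pivot and then applying Right-$\wedge$, $\wedge$Left-Elim, Exchange, and Stuttering with the hypothesis $\beta,\neg\alpha\vdash$ to discharge the remaining sequents. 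The $\vee$-clause is handled dually, with Right-$\wedge$ producing $\neg\alpha\wedge\neg\gamma$ from a three-proposition sequent, and $\vee$-Intro splitting $\gamma\vee\beta$. Once these one-sided monotonicities are in hand, item 3 supplies congruence on the other argument of each binary connective, giving the full congruence claims of item 4. The main obstacle throughout is that, because Exchange is confined to two-element sequents, every internal substitution of a proposition must be mediated by a carefully positioned Cut; the role of the Stuttering rule is precisely to make this Cut available by inserting the desired formula at the right position in the context.
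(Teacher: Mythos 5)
Your treatment of items 1, 2, 3, the $\neg$-clause of item 5, and item 4 is correct and matches the paper's own derivations (transitivity via two weakenings and a Cut on $\beta$; item 3 via \textbf{Left-Weakening}, \textbf{Stuttering} to insert $\beta$, then \textbf{Cut} on $\alpha$; item 4 by substitution inside $\gamma,\alpha,\neg(\gamma\wedge\alpha)\vdash$ using item 3). The gap is in the one genuinely nontrivial clause, the $\wedge$-monotonicity of item 5. Your stated route --- a \textbf{Cut} with pivot $\alpha\wedge\gamma$ against the axiom $\alpha\wedge\gamma,\neg(\alpha\wedge\gamma)\vdash$ --- cannot produce the target $\beta\wedge\gamma,\neg(\alpha\wedge\gamma)\vdash$: whichever way you split the context, one of the two required premises is $\beta\wedge\gamma,\neg(\alpha\wedge\gamma),\neg(\alpha\wedge\gamma)\vdash$ (or a permutation thereof), which is, via \textbf{Repetition}/\textbf{Contraction}, equivalent to the target itself, so the argument is circular; the other placements make the second premise unsound. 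The subsequent list of rules (``\textbf{Right-$\wedge$}, \textbf{$\wedge$Left-Elim}, \textbf{Exchange}, \textbf{Stuttering}'') is not an actual derivation, and plain \textbf{Exchange} cannot even act on the three-element sequents that arise. The working derivation cuts on $\alpha$, not on $\alpha\wedge\gamma$: start from the \textbf{$\neg\vee$1} instance $\alpha,\gamma,\neg\gamma\vee\neg\alpha\vdash$, rotate with \textbf{Circ} to $\neg\gamma\vee\neg\alpha,\gamma,\alpha\vdash$, \textbf{Right-Weaken} to $\neg\gamma\vee\neg\alpha,\gamma,\alpha,\beta\vdash$, obtain $\neg\gamma\vee\neg\alpha,\gamma,\neg\alpha,\beta\vdash$ from the hypothesis $\beta,\neg\alpha\vdash$ by \textbf{Exchange} and \textbf{Left-Weakening}, \textbf{Cut} on $\alpha$ to get $\neg\gamma\vee\neg\alpha,\gamma,\beta\vdash$, and finish with \textbf{Right-$\wedge$}.

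Your $\vee$-clause is likewise only gestured at, and the direct $\vee$-Intro split you describe runs into the same difficulty (one of its premises requires inserting $\neg\gamma$ into the interior of a three-element sequent, which \textbf{Stuttering} only licenses if $\neg\gamma,\beta,\gamma\vdash$ were derivable, and it is not). The clean way, and the paper's, is pure duality: from $\beta\rightarrow\alpha$ get $\neg\alpha\rightarrow\neg\beta$ by the $\neg$-clause, apply the already-established $\wedge$-clause to get $\neg\alpha\wedge\neg\gamma\rightarrow\neg\beta\wedge\neg\gamma$, negate again, and use Lemma~\ref{the:equivalence} to read the result as $\gamma\vee\beta\rightarrow\gamma\vee\alpha$. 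You should either supply explicit derivations for these two clauses or adopt this duality reduction; as written, the core of item 5 is not proved.
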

\begin{proof}
\begin{enumerate}
\item
Reflexivity follows from {\bf Logical Axiom}.
The following derivation proves transitivity:

\( \begin{array}{rlrl}
{\bf Assumption} & \neg \beta , \alpha \vdash 
& {\bf Assumption} & \neg \gamma , \beta \vdash \\
\cline{2-2} \cline{4-4} 
{\bf Left-Weakening} & \neg \gamma , \neg \beta , \alpha \vdash & 
{\bf Right-Weakening} & \neg \gamma , \beta , \alpha \vdash \\
\cline{2-4}
& {\bf Cut} & \neg \gamma , \alpha \vdash
\end{array} \)
\item
Obvious.
\item
Here is the derivation:

\( \begin{array}{rrrrrl}
{\bf Assumption} & \alpha , \neg \beta \vdash & &
& {\bf Assumption} &  \beta , \neg \alpha \vdash \\
\cline{2-2} \cline{6-6}
{\bf LWeakening} & \Gamma , \alpha , \neg \beta \vdash 
& {\bf Assumption} & \Gamma , \alpha , \Delta \vdash 
&  {\bf Exchange} &  \neg \alpha , \beta \vdash \\
\cline{2-4} \cline{6-6}
& {\bf Stuttering} & \Gamma , \alpha , \beta , \Delta \vdash & 
& {\bf LRWeakening} & \Gamma , \neg \alpha , \beta , \Delta \vdash \\
\cline{3-6}
& & {\bf Cut} & \Gamma , \beta , \Delta \vdash
\end{array} \)
\item
Assume that \mbox{$ \alpha \simeq \beta $}.
By item~\ref{equiv-sequent}, \mbox{$ \alpha , \neg \alpha \vdash $} implies
\mbox{$ \beta , \neg \alpha \vdash $} and, by {\bf Exchange} 
\mbox{$ \neg \alpha , \beta \vdash $},
i.e., \mbox{$ \neg \alpha \rightarrow \neg \beta $}
and similarly one shows \mbox{$ \neg \beta \rightarrow \neg \alpha $}.
Similarly, \mbox{$ \gamma , \alpha , \neg ( \gamma \wedge \alpha ) \vdash $} implies
\mbox{$ \gamma , \beta , \neg ( \gamma \wedge \alpha ) \vdash $} and 
\mbox{$ \gamma \wedge \beta \rightarrow \gamma \wedge \alpha \vdash $}.
The other cases are treated similarly.
\item
Assume \mbox{$ \beta \rightarrow \alpha $}.
For $\neg$: one has \mbox{$ \beta , \neg \alpha \vdash $}, 
\mbox{$ \neg \neg \beta , \neg \alpha \vdash $} and, by {\bf Exchange}, 
\mbox{$ \neg \alpha , \neg \neg \beta $}, i.e.,
\mbox{$ \neg \alpha \rightarrow \neg \beta $}.
For $\wedge$, consider the following derivation:

\( \begin{array}{rlrl}
{\bf \neg\vee 1} & \alpha , \gamma , \neg \gamma \vee \neg \alpha \vdash \\
\cline{2-2}
{\bf Circ} & \neg \gamma \vee \neg \alpha , \gamma , \alpha \vdash
& {\bf Assumption} & \neg \alpha , \beta \vdash \\
\cline{2-2} \cline{4-4}
{\bf Right-Weakening} & \neg \gamma \vee \neg \alpha , \gamma , \alpha , \beta \vdash 
& {\bf Left-Weakening} 
& \neg \gamma \vee \neg \alpha , \gamma , \neg \alpha , \beta  \vdash \\
\cline{2-3}
{\bf Cut} & \neg \gamma \vee \neg \alpha , \gamma , \beta \vdash \\
\cline{2-2}
{\bf Right-\wedge} & \neg ( \alpha \wedge \gamma ) , \beta \wedge \gamma \vdash
\end{array} \)

For $\vee$, the claim follows by duality from the two previous claims:
\[
\beta \rightarrow \alpha \Rightarrow \neg \alpha \rightarrow \neg \beta \Rightarrow 
\neg \alpha \wedge \neg \gamma \rightarrow \neg \beta \wedge \neg \gamma 
\Rightarrow 
\neg ( \neg \beta \wedge \neg \gamma ) \rightarrow \neg ( \neg \beta \wedge
\neg \gamma ) \Rightarrow 
\gamma \vee \beta \rightarrow \gamma \vee \alpha.
\]
\nopagebreak
\end{enumerate}
\end{proof}

The following lemma presents interesting properties of the implication and equivalence relations.
\begin{lemma} \label{the:equiv-ppties}
for any \mbox{$ \alpha , \beta \in {\cal L} $}:
\begin{enumerate}
\item \label{commute-ded}
Disjunction between orthogonal propositions is commutative: 
if \mbox{$ \alpha , \beta \vdash $}, then, 
\mbox{$ \alpha \vee \beta \simeq \beta \vee \alpha $},
\item \label{imp-and1}
\mbox{$ \beta \wedge \alpha \rightarrow \alpha $},
\item \label{imp-and2}
if \mbox{$ \beta \rightarrow \alpha $}, then, 
\mbox{$ \alpha \wedge \beta \simeq \beta \wedge \alpha \simeq \beta $}.
\end{enumerate}
\end{lemma}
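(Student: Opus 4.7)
The plan is to tackle the three items in order; the argument for item~\ref{commute-ded} reuses a Stuttering trick developed for item~\ref{imp-and2}. Item~\ref{imp-and1} is a three-line derivation: start from the Logical Axiom $\alpha, \neg \alpha \vdash$, apply {\bf Left-Weakening} by $\beta$ to obtain $\beta, \alpha, \neg\alpha \vdash$, and collapse the first two propositions with {\bf Left-$\wedge$}.

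For item~\ref{imp-and2}, the central intermediate sequent is $\beta, \alpha, \neg\beta \vdash$, obtained from the hypothesis $\beta, \neg\alpha \vdash$ by instantiating {\bf Stuttering} with $\Gamma = \beta$, $\alpha' = \neg\alpha$, $\Delta = \neg\beta$: the first premise is the Logical Axiom $\beta, \neg\beta \vdash$, and the conclusion $\beta, \neg\neg\alpha, \neg\beta \vdash$ is $\beta, \alpha, \neg\beta \vdash$ after eliminating the double negation. One direction of $\beta \simeq \beta \wedge \alpha$ is then {\bf Left-$\wedge$}, and the converse $\beta, \neg\alpha \vee \neg\beta \vdash$ is obtained by {\bf $\vee$-Intro} applied to the hypothesis and to the just-derived sequent. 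The equivalence $\beta \simeq \alpha \wedge \beta$ is proved similarly: one direction is Logical Axiom $+$ {\bf Left-Weakening} $+$ {\bf Left-$\wedge$}, and the other uses {\bf $\vee$-Intro} on the Logical Axiom $\beta, \neg\beta \vdash$ and on $\beta, \beta, \neg\alpha \vdash$ (the hypothesis followed by the derived {\bf Repetition} rule). Transitivity of $\simeq$ (Lemma~\ref{the:implication}, item~\ref{equivalence-relation}) links the three propositions.

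For item~\ref{commute-ded}, the hypothesis $\alpha, \beta \vdash$ together with {\bf Exchange} gives $\beta, \alpha \vdash$, so by symmetry in $\alpha$ and $\beta$ it suffices to derive $\alpha \vee \beta \rightarrow \beta \vee \alpha$, i.e., the sequent $\alpha \vee \beta, \neg\alpha \wedge \neg\beta \vdash$ (using De Morgan, Lemma~\ref{the:equivalence}). Applying {\bf $\vee$-Intro} on the leftmost connective leaves two subgoals: (A) $\alpha, \neg\alpha \wedge \neg\beta \vdash$ and (B) $\neg\alpha, \beta, \neg\alpha \wedge \neg\beta \vdash$. Subgoal (B) is unconditional, dispatched by Logical Axiom $\beta, \neg\beta \vdash$, one {\bf Left-Weakening}, one {\bf Right-Weakening}, and {\bf Right-$\wedge$}. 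Subgoal (A) is where the hypothesis enters, via Stuttering with $\Gamma = \alpha$, $\alpha' = \beta$, $\Delta = \neg\alpha$: the two premises are the Logical Axiom $\alpha, \neg\alpha \vdash$ and the hypothesis $\alpha, \beta \vdash$, the conclusion is $\alpha, \neg\beta, \neg\alpha \vdash$, and {\bf Right-$\wedge$} then yields (A).

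The delicate point is locating where the hypothesis enters subgoal (A): without it, (A) is not even valid (a two-dimensional Hilbert space with a pair of non-orthogonal lines furnishes a counterexample), so the hypothesis must be used exactly once. {\bf Stuttering} is the only structural rule whose second premise has the shape $\Gamma, \alpha' \vdash$ that is directly matched by $\alpha, \beta \vdash$, which pinpoints where in the derivation it must appear; once this is recognized the rest is routine bookkeeping with the introduction rules exercised in Section~\ref{sec:first-batch}.
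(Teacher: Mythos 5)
Your proposal is correct and follows essentially the same route as the paper: the same Stuttering applications produce the key intermediate sequents ($\beta, \alpha, \neg\beta \vdash$ for item~3 and $\alpha, \neg\beta, \neg\alpha \vdash$ for item~1), and the same combinations of {\bf $\vee$-Intro}, {\bf Repetition}, {\bf Right-$\wedge$} and the weakening rules finish each item. The only differences are cosmetic reorderings (e.g., you obtain $\beta\wedge\alpha \rightarrow \beta$ by a direct {\bf Left-$\wedge$} where the paper detours through {\bf Circ}, {\bf Right-$\wedge$} and {\bf Exchange}, and you push {\bf Right-$\wedge$} into the two $\vee$-Intro subgoals rather than applying it after), which do not change the substance of the argument.
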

Item~\ref{commute-ded} is a parallel to the last claim of Lemma~\ref{the:commute}.
\begin{proof}
\begin{enumerate}
\item
Derivation1:

\( \begin{array}{rlrl}
{\bf Logical Axiom} & \alpha , \neg \alpha \vdash & {\bf Assumption} & \alpha , \beta \vdash \\
\cline{2-4} 
{\bf Stuttering} & \alpha , \neg \beta , \neg \alpha \vdash \\
\end{array} \)

Derivation2:

\( \begin{array}{rlrl}
{\bf Logical Axiom} & \beta , \neg \beta \vdash \\
\cline{2-2}
{\bf Left-Right-Weakening} & \neg \alpha , \beta , \neg \beta , \neg \alpha \vdash 
& {\bf Derivation1} & \alpha , \neg \beta , \neg \alpha \vdash \\
\cline{2-4}
& {\bf \vee-Intro} & \alpha \vee \beta , \neg \beta , \neg \alpha \vdash \\
\cline{3-3}
& {\bf Right-\wedge} & \alpha \vee \beta , \neg \alpha \wedge \neg \beta \vdash \\
\cline{3-3}
& & \alpha \vee \beta , \neg ( \beta \vee \alpha ) \vdash
\end{array} \)
\item
Derivation:

\( \begin{array}{lr}
{\bf Logical Axiom} & \alpha , \neg \alpha \vdash \\
\cline{2-2}
{\bf Left-Weakening} & \beta , \alpha , \neg \alpha \vdash \\
\cline{2-2}
{\bf Left-\wedge} & \beta \wedge \alpha , \neg \alpha \vdash
\end{array} \)
\item
First, let us show that \mbox{$ \beta \wedge \alpha \rightarrow \beta $}:

\( \begin{array}{rlrl}
{\bf Assumption} & \beta , \neg \alpha \vdash 
& {\bf Logical Axiom} & \beta , \neg \beta \vdash \\
\cline{2-4}
& {\bf Stuttering} & \beta , \alpha , \neg \beta \vdash \\
\cline{3-3}
& {\bf Circ} & \neg \beta , \alpha , \beta \vdash \\
\cline{3-3}
& {\bf Right-\wedge} & \neg \beta , \beta \wedge \alpha \vdash \\
\cline{3-3}
& {\bf Exchange} & \beta \wedge \alpha , \neg \beta \vdash
\end{array} \)

Now, we show that \mbox{$ \beta \rightarrow \beta \wedge \alpha $}:

\( \begin{array}{rlrl}
{\bf Logical Axiom} & \beta , \neg \beta \vdash 
& {\bf Assumption} & \beta , \neg \alpha \vdash \\
\cline{2-4}
{\bf Stuttering} & \beta , \alpha , \neg \beta \vdash 
& {\bf Assumption} & \beta , \neg \alpha \vdash \\
\cline{2-4}
& {\bf \vee-Intro} & \beta , \neg \alpha \vee \neg \beta \vdash \\
\cline{3-3}
& & \beta , \neg ( \beta \wedge \alpha ) \vdash 
\end{array} \)
We have shown that \mbox{$ \beta \wedge \alpha \simeq \beta $}.
Let us show that \mbox{$ \alpha \wedge \beta \simeq \beta $}.
By item~\ref{imp-and1} above \mbox{$ \alpha \wedge \beta \rightarrow \beta $}.
We shall show now that \mbox{$ \beta \rightarrow \alpha \wedge \beta $}.

\( \begin{array}{rlrl}

& & {\bf Assumption} & \beta , \neg \alpha \vdash \\
\cline{4-4} 
{\bf Logical Axiom} & \beta , \neg \beta \vdash 
& {\bf Repetition} & \beta , \beta , \neg \alpha \vdash \\
\cline{2-4}
& {\bf \vee-Intro} & \beta , \neg \beta \vee \neg \alpha \vdash \\
\cline{3-3}
& & \beta , \neg ( \alpha \wedge \beta ) \vdash 
\end{array} \)
\end{enumerate}
\end{proof}

\subsection{Completeness} \label{sec:completeness}
We consider the deductive system $\cal R$ consisting of the ten rules described in 
Section~\ref{sec:soundness}.
From now on, \mbox{$ \Gamma \vdash $} will mean that the sequent can be derived 
by use of those ten rules only and \mbox{$ \Gamma \not \vdash $} 
will mean it cannot be obtained in this way.
The soundness claim of Theorem~\ref{the:soundness} guarantees that every derivable sequent
is valid in any O-space under any assignment.
Our goal is now to define an O-space and an assignment $v$ such that 
\mbox{$ v( \alpha ) = Z $} \emph{only if} the sequent 
\mbox{$ \alpha \vdash $} is derivable in the deductive system of the ten rules above.
Section~\ref{sec:logical-O-space} defines a structure on propositions that is an O-space.
Section~\ref{sec:assignment} defines an assignment of flats to propositions and 
presents the completeness result.

\subsubsection{The logical O-space} \label{sec:logical-O-space}
First, a notation.
\begin{definition} \label{def:D-def}
For any \mbox{$ \alpha \in {\cal L} $}, let 
\mbox{$ D_{\alpha} \eqdef \{ \beta \in {\cal L} \mid \beta , \neg \alpha \vdash \} $}. 
\end{definition}
We can now define the structure we are interested in.
\begin{definition} \label{def:top}
The structure  \mbox{$ \Omega = \langle {\cal L} , \bot , {\cal F} \rangle $} 
has carrier $\cal L$, the set of propositions, the relation $\bot$ is defined by 
\mbox{$ \alpha \, \bot \, \beta $} iff \mbox{$ \alpha , \beta \vdash $}, 
for any \mbox{$ \alpha , \beta \in {\cal L} $} and \mbox{$ \cal F = $} 
\mbox{$ \{ D_{\alpha} \mid \alpha \in {\cal L} \} $}.
\end{definition}

Our goal is now to show that $\Omega$ is an O-space.
Properties {\bf S} and {\bf Z} can be proved without ado.
\begin{lemma} \label{the:pre-space}
The structure $\Omega$ satisfies {\bf S} and {\bf Z} 
and \mbox{$ Z = \{ \alpha \in {\cal L} \mid \alpha \vdash \} $}.
\end{lemma}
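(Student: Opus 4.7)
The plan is to unpack each of the three claims directly from the deduction rules and the derived rules already established in Section~\ref{sec:derived}.

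For \textbf{S}, the relation $\bot$ on $\Omega$ is defined by $\alpha \bot \beta$ iff $\alpha,\beta \vdash$, so symmetry is nothing more than the \textbf{Exchange} rule (R2), which allows one to swap the two propositions on the left-hand side of a two-element sequent. This step is a one-liner.

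For the characterization $Z = \{\alpha \in {\cal L} \mid \alpha \vdash \}$, I would argue both inclusions. If $\alpha \vdash$ is derivable, then for every $\beta \in {\cal L}$ the rule \textbf{Right-Weakening} yields $\alpha,\beta \vdash$, i.e.\ $\alpha \bot \beta$ for every $\beta$, so $\alpha \in Z$. Conversely, if $\alpha \in Z$ then in particular $\alpha \bot \alpha$, giving the derivation $\alpha,\alpha \vdash$, and the derived rule \textbf{Contraction} (proved in Section~\ref{sec:first-batch}) immediately gives $\alpha \vdash$.

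For \textbf{Z}, I would simply use the characterization just established: if $\alpha \bot \alpha$ holds in $\Omega$ then $\alpha,\alpha \vdash$ is derivable, so by \textbf{Contraction} $\alpha \vdash$ is derivable, hence $\alpha \in Z$.

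I do not expect any real obstacle here: the rules have been tailored so that each property is the syntactic counterpart of a single deduction rule or of the trivially-derived \textbf{Contraction}. The only mild point to keep in mind is that \textbf{Contraction} is not a primitive rule of $\cal R$ but a derived one, so one must explicitly cite its derivation from \textbf{Logical~Axiom}, \textbf{Left-Weakening}, \textbf{Right-Weakening} and \textbf{Cut} given in Section~\ref{sec:first-batch}.
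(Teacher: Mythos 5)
Your proposal is correct and follows essentially the same route as the paper: \textbf{Exchange} for \textbf{S}, \textbf{Right-Weakening} for the inclusion $\{\alpha \mid \alpha \vdash\} \subseteq Z$, and the derived \textbf{Contraction} rule (together with weakening) to show that a self-orthogonal $\alpha$ satisfies $\alpha \vdash$, which yields both property \textbf{Z} and the characterization of $Z$. Your explicit remark that \textbf{Contraction} is derived rather than primitive is a point the paper also implicitly relies on, so nothing is missing.
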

\begin{proof}
\begin{enumerate}
\item \label{S-ded}
The relation $\bot$ is symmetric, since, by {\bf Exchange},
for any \mbox{$ \alpha , \beta \in {\cal L} $},
\mbox{$ \alpha , \beta \vdash $} iff \mbox{$ \beta , \alpha \vdash $}.
\item \label{Z-ded}
By definition, \mbox{$ Z =  \{ \alpha \in {\cal L} \mid \alpha , \alpha \vdash \} $}.
Since, by {\bf Contraction} and {\bf Left} (or {\bf Right}) {\bf-Weakening} 
\mbox{$ \alpha , \alpha \vdash $} iff \mbox{$ \alpha \vdash $}, we have 
\mbox{$ Z = \{ \alpha \in {\cal L} \mid \alpha \vdash \} $}.
By {\bf Right-Weakening}, if \mbox{$ \alpha \in Z $}, 
one has \mbox{$ \alpha , \beta \vdash $} for any \mbox{$ \beta \in {\cal L} $}, i.e.,
\mbox{$ \alpha \, \bot \, \beta $} for any $\beta$
and property {\bf Z} is satisfied.
\end{enumerate}
\end{proof}
Lemma~\ref{the:pre-space} shows that $\Omega$ satisfies the claims 
of Lemma~\ref{the:symmetric}. 
Some effort is needed in order to prove that $\Omega$ satisfies properties {\bf F}, 
{\bf O} and {\bf A}.
The next lemma characterizes the elements of $\cal F$, showing that the structure 
$\Omega$ satisfies {\bf F}, {\bf O} and {\bf A}.

\begin{lemma} \label{the:Omega-flats}
For any \mbox{$ \alpha \in {\cal L} $},
\begin{enumerate}
\item \label{single-flat}
\mbox{$ D_{\alpha} = $} \mbox{$ \overline{ \{ \alpha \} } $}, which shows that
any element of $\cal F$ is a flat and that, for any \mbox{$ \alpha \in {\cal L} $}, 
\mbox{$ \overline{\{ \alpha \}} \in {\cal F} $},
\item \label{Dneg}
\mbox{$ D_{\neg \alpha} = D_{\alpha}^\bot $}, which shows that $\cal F$ 
is closed under orthogonal complementation,
\item \label{DZ}
\mbox{$ D_{\neg ( \alpha \wedge \neg \alpha )} = Z $}, 
which shows that Z is a member of $\cal F$,
\item \label{Dotimes}
\mbox{$ D_{ \alpha \wedge \beta} = D_{\alpha} \otimes D_{\beta} $}, which shows
that $\cal F$ is closed under projection,
\item \label{Doplus}
\mbox{$ D_{\alpha \vee \beta} = D_{\alpha} \oplus D_{\beta} $},
\item \label{DO}
property {\bf O} is satisfied,
\item \label{DA}
property {\bf A} is satisfied.
\end{enumerate}
\end{lemma}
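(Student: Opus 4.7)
The seven items split naturally into a foundational identification (item~1), three quick consequences (items~2, 3, 5), the main substantive step (item~4), and two reductions to item~4 (items~6, 7). I will prove them in order.

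For item~1, I show $D_\alpha = \overline{\{\alpha\}}$ by mutual inclusion. For $D_\alpha \subseteq \overline{\{\alpha\}}$: given $\beta$ with $\beta, \neg\alpha \vdash$ and any $\gamma$ with $\gamma, \alpha \vdash$, Exchange followed by Left-Weakening on $\gamma, \alpha \vdash$ yields $\beta, \alpha, \gamma \vdash$; Right-Weakening on $\beta, \neg\alpha \vdash$ yields $\beta, \neg\alpha, \gamma \vdash$; and Cut on $\alpha$ concludes $\beta, \gamma \vdash$. The reverse inclusion instantiates the universal quantifier in the definition of $\overline{\{\alpha\}}$ at $\gamma = \neg\alpha$, which is in $\{\alpha\}^\bot$ by the Logical Axiom and Exchange. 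Item~2 follows from Lemma~\ref{the:symmetric} item~\ref{closure} via $D_\alpha^\bot = \{\alpha\}^\bot$, then unwinding $\{\alpha\}^\bot = \{\beta : \beta, \alpha \vdash\} = D_{\neg\alpha}$. Item~3: the Logical Axiom followed by Left-$\wedge$ gives $\alpha\wedge\neg\alpha \vdash$, placing $\alpha\wedge\neg\alpha \in Z$, and $\overline{\{\alpha\wedge\neg\alpha\}}$ is pinched between $Z$ (Lemma~\ref{the:symmetric} item~\ref{empty-full}) and $\overline{Z} = Z$, giving $Z = D_{\alpha\wedge\neg\alpha} \in {\cal F}$ (the printed subscript seems to carry an extraneous negation). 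Item~5 chains items~2 and 4 via de~Morgan: $D_{\alpha\vee\beta} = (D_{\neg\beta} \otimes D_{\neg\alpha})^\bot = (D_\beta^\bot \otimes D_\alpha^\bot)^\bot = D_\alpha \oplus D_\beta$.

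The technical core is item~4. Combining item~1 with Lemma~\ref{the:symmetric} item~\ref{bar2} reduces the claim to $D_\beta \cap (D_\beta \cap D_{\neg\alpha})^\bot = \overline{\{\alpha \wedge \beta\}}$. For the $\supseteq$ inclusion I verify that $\alpha \wedge \beta$ lies in the left-hand side: $\alpha\wedge\beta \in D_\beta$ since $\alpha\wedge\beta \rightarrow \beta$ (Lemma~\ref{the:equiv-ppties} item~\ref{imp-and1}); and for each $\gamma \in D_\beta \cap D_{\neg\alpha}$, the equivalence $\gamma \simeq \gamma \wedge \beta$ (Lemma~\ref{the:equiv-ppties} item~\ref{imp-and2}, since $\gamma \rightarrow \beta$) lets me substitute inside $\gamma, \alpha \vdash$ (by Lemma~\ref{the:implication} item~\ref{equiv-sequent}) to get $\gamma \wedge \beta, \alpha \vdash$; then $\wedge$-Left-Elim gives $\gamma, \beta, \alpha \vdash$, Circ gives $\alpha, \beta, \gamma \vdash$, and Left-$\wedge$ yields $\alpha \wedge \beta, \gamma \vdash$. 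For the $\subseteq$ inclusion, given $\delta \in D_\alpha \otimes D_\beta$, I produce $\delta, \neg\beta \vee \neg\alpha \vdash$ via $\vee$-Intro from $\delta, \neg\beta \vdash$ (immediate from $\delta \in D_\beta$) together with $\delta, \beta, \neg\alpha \vdash$; the latter is obtained by exploiting $\delta \simeq \delta \wedge \beta$ (from $\delta \rightarrow \beta$) to reduce the required sequent to an orthogonality witnessed by an appropriate member of $D_\beta \cap D_{\neg\alpha}$.

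Items~6 and 7 then reduce to short derivations via the identity $\{\gamma\} \otimes D_\alpha = D_{\gamma \wedge \alpha}$ (immediate from items~1 and 4). Property~\textbf{O}(i) becomes: whenever $\eta, \gamma \wedge \alpha \vdash$ and $\eta, \gamma \wedge \neg\alpha \vdash$, then $\eta, \gamma \vdash$; applying $\wedge$-Right-Elim to each hypothesis gives $\eta, \alpha, \gamma \vdash$ and $\eta, \neg\alpha, \gamma \vdash$, and Cut on $\alpha$ concludes. Property~\textbf{O}(ii) is dual, relying on $\vee$-Intro. Property~\textbf{A} is immediate: since $\alpha \in D_\alpha$, $D_{\alpha \wedge \beta} = \{\alpha\} \otimes D_\beta$ already appears as a summand of $\bigcup_{a \in D_\alpha} \{a\} \otimes D_\beta$. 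The main obstacle throughout is the $\subseteq$ direction of item~4 (and its counterpart inside property~\textbf{O}(ii)): the restricted Exchange rule and the non-associativity of $\wedge$ force one to route arguments through Circ, $\vee$-Intro, Stuttering, and the substitution principle of Lemma~\ref{the:implication} item~\ref{equiv-sequent}.
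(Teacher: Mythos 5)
Most of your proposal is sound and close to the paper's own argument: items 1, 2, 5 and 7 match the paper essentially step for step; your observation that the subscript in item 3 carries an extraneous negation (it should read \mbox{$D_{\alpha\wedge\neg\alpha}=Z$}, with \mbox{$D_{\neg(\alpha\wedge\neg\alpha)}={\cal L}$}) is correct; your reduction of the first inclusion of item 4 to the single membership \mbox{$\alpha\wedge\beta\in D_\beta\cap{(D_\beta\cap D_{\neg\alpha})}^\bot$} followed by taking closures is a legitimate and slightly slicker variant of the paper's element-by-element derivation; and your two-line \textbf{Cut} argument for property \textbf{O}(i) is cleaner than the derivation the paper gives.

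The genuine gap is in the other inclusion of item 4, \mbox{$D_\alpha\otimes D_\beta\subseteq D_{\alpha\wedge\beta}$}. You propose to derive \mbox{$\delta,\neg\beta\vee\neg\alpha\vdash$} by \textbf{$\vee$-Intro} from \mbox{$\delta,\neg\beta\vdash$} and \mbox{$\delta,\beta,\neg\alpha\vdash$}; the instance of the rule is forced, so the second premise is unavoidable, and it is not derivable for a general \mbox{$\delta\in D_\alpha\otimes D_\beta$} --- it is not even sound. Take \mbox{$\delta=\alpha\wedge\beta$} (which lies in \mbox{$D_\alpha\otimes D_\beta$} by your own first half) and interpret in \mbox{$\mathbb{R}^2$} with \mbox{$v(\alpha)=A$} the $x$-axis and \mbox{$v(\beta)=B$} the line spanned by \mbox{$(1,1)$}: then \mbox{$v(\delta)=A\otimes B=B$}, \mbox{$(v(\delta)\otimes B)\otimes A^\bot=B\otimes A^\bot=A^\bot\neq Z$}, so \mbox{$\delta,\beta,\neg\alpha\models$} fails and by soundness the sequent is underivable. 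Semantically, the two premises together would force \mbox{$\overline{\{\delta\}}\subseteq D_\alpha\cap D_\beta$}, whereas membership in \mbox{$D_\alpha\otimes D_\beta$} only gives the strictly weaker \mbox{$\overline{\{\delta\}}\subseteq D_\beta\cap{(D_\beta\cap D_\alpha^\bot)}^\bot$}; no manipulation via \mbox{$\delta\simeq\delta\wedge\beta$} can bridge this. The paper avoids the problem by never introducing the disjunction on $\delta$'s side: it shows \mbox{$\delta\in\overline{\{\alpha\wedge\beta\}}$} directly, testing $\delta$ against an arbitrary $\epsilon$ with \mbox{$\epsilon,\alpha\wedge\beta\vdash$} and exhibiting \mbox{$\epsilon\wedge\beta$} as a member of \mbox{$D_\beta\cap D_{\neg\alpha}$} (this is where the hypothesis on $\epsilon$ enters, via \mbox{$\epsilon\wedge\beta,\alpha\vdash$}); the orthogonality \mbox{$\delta\,\bot\,\epsilon\wedge\beta$} plus \mbox{$\delta,\neg\beta\vdash$} then yields \mbox{$\delta,\epsilon\vdash$} by \textbf{Cut}. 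You need this (or an equivalent) argument. The same worry attaches to your one-word treatment of property \textbf{O}(ii): it is not formally dual to \textbf{O}(i) in this system, and the required derived rule genuinely needs the \textbf{$\vee$Right-Elim}/\textbf{Cut} machinery the paper deploys rather than a single \textbf{$\vee$-Intro}.
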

\pagebreak[2]
\begin{proof}
\begin{enumerate}
\item
Assume, first, that \mbox{$ \gamma \in \overline{\{ \alpha \}} $}.
For any \mbox{$ \beta \in \{ \alpha \}^\bot $}, we have 
\mbox{$ \gamma , \beta \vdash $}.
But \mbox{$ \neg \alpha \in \{ \alpha \}^\bot $} by {\bf Logical Axiom}.
Therefore \mbox{$ \gamma , \neg \alpha \vdash $}.
We must now show that, if \mbox{$ \beta , \neg \alpha \vdash $} 
we have \mbox{$ \beta \in \overline{\{\alpha\}} $}.
Assume \mbox{$ \gamma \in \{ \alpha \}^\bot $}.
Consider the following derivation.

\( \begin{array}{rlrl}
& & {\bf Assumption} & \beta , \neg \alpha \vdash \\
\cline{4-4}
{\bf Assumption} & \gamma , \alpha \vdash 
& {\bf Exchange} & \neg \alpha , \beta \vdash \\
\cline{2-2} \cline{4-4}
{\bf Right-Weakening} & \gamma , \alpha , \beta \vdash 
& {\bf Left-Weakening} & \gamma , \neg \alpha , \beta \vdash \\
\cline{2-4}
& {\bf Cut} & \gamma , \beta \vdash
\end{array} \)

We see that \mbox{$ \beta \, \bot \, \{ \alpha \}^\bot $},
i.e., \mbox{$ \beta \in \overline{ \{ \alpha \} } $}.
\item
Making use of Lemma~\ref{the:symmetric}, item~\ref{closure} and 
item~\ref{single-flat} above, we see that:
\[
D_{\neg \alpha} = \{ \beta \mid \beta , \neg \neg \alpha \vdash \} =
\{ \beta \mid \beta , \alpha \vdash \} = \{ \alpha \}^\bot = 
\overline{ \{ \alpha \} }^\bot = D_{\alpha}^\bot.
\]
\item
By {\bf Logical Axiom} we have \mbox{$ \alpha , \neg \alpha \vdash $} and,
by {\bf Left-$\wedge$}, \mbox{$ \alpha \wedge \neg \alpha \vdash $}.
By {\bf Left-Weakening}, \mbox{$ \beta , \alpha \wedge \neg \alpha \vdash $}
for any \mbox{$ \beta \in {\cal L} $}, i.e., 
\mbox{$ D_{\alpha \wedge \neg \alpha} = {\cal L} $}.
By Lemma~\ref{the:symmetric}, item~\ref{Xempty} and item~\ref{Dneg} above, 
we have \mbox{$ D_{\neg ( \alpha \wedge \neg \alpha )} = $}
\mbox{$ {\cal L}^\bot = Z $}.
\item
\mbox{$ D_{\alpha \wedge \beta} = $}
\mbox{$ \{ \gamma \mid \gamma , \neg ( \alpha \wedge \beta ) \vdash \}$}.
The following derivation shows that 
\mbox{$ D_{\alpha \wedge \beta} \subseteq D_{\beta} $}.

\( \begin{array}{rlrl}
& & {\bf Logical\  Axiom} & \beta , \neg \beta \vdash \\
\cline{4-4}
& & {\bf Left-Weakening} & \alpha , \beta , \neg \beta \vdash \\
\cline{4-4}
{\bf Assumption} & \gamma , \neg ( \alpha \wedge \beta ) \vdash 
& {\bf Right-\wedge} & \alpha \wedge \beta , \neg \beta \vdash \\
\cline{2-2} \cline{4-4}
{\bf Right-Weakening} & \gamma , \neg ( \alpha \wedge \beta ) , \neg \beta \vdash
& {\bf Left-Weakening} & \gamma , \alpha \wedge \beta , \neg beta \vdash \\
\cline{2-4}
& {\bf Cut} & \gamma , \neg \beta \vdash
\end{array} \)

Now, we want to prove that \mbox{$ D_{\alpha \wedge \beta} \subseteq $}
\mbox{$ {( D_{\beta} \cap D_{\neg \alpha} )}^\bot $}.
We have to show that from \mbox{$ \gamma , \neg ( \alpha \wedge \beta ) \vdash $},
\mbox{$ \delta , \neg \beta \vdash $} and \mbox{$ \delta , \alpha \vdash $}, one can
derive \mbox{$ \gamma , \delta \vdash $}.

Consider the following derivation:

\( \begin{array}{rlrl}
{\bf Assumption} & \delta , \neg \beta \vdash & {\bf Assumption} & \delta , \alpha \vdash \\
\cline{2-4}
{\bf Stuttering} & \delta , \beta , \alpha \vdash \\
\cline{2-2}
{\bf Right-\wedge} & \delta , \alpha \wedge \beta \vdash \\
\cline{2-2}
{\bf Exchange} & \alpha \wedge \beta , \delta \vdash 
& {\bf Assumption} & \gamma , \neg ( \alpha \wedge \beta ) \vdash \\
\cline{2-2} \cline{4-4}
{\bf Left-Weakening} & \gamma , \alpha \wedge \beta , \delta \vdash 
& {\bf Right-W...} & \gamma , \neg ( \alpha \wedge \beta ) , \delta \vdash \\
\cline{2-4}
& {\bf Cut} & \gamma , \delta \vdash
\end{array} \)

We have proved that \mbox{$ D_{ \alpha \wedge \beta } \subseteq $} 
\mbox{$ D_{\alpha} \otimes D_{\beta} $}.

Let us, now, assume that \mbox{$ \gamma \in D_{\alpha} \otimes D_{\beta} = $}
\mbox{$ D_{\beta} \cap {( D_{\beta} \cap D_{\neg \alpha} )}^\bot $}.
We want to prove that \mbox{$ \gamma \in D_{\alpha \wedge \beta} $}.
We see that \mbox{$ \gamma , \neg \beta \vdash $} and  
\mbox{$ \gamma , \delta \vdash $} for any $\delta$ such that 
\mbox{$ \delta , \neg \beta \vdash $} and \mbox{$ \delta , \alpha \vdash $}.
Consider, now, any \mbox{$ \epsilon \in \{ \alpha \wedge \beta \}^\bot $}.
We have \mbox{$ \epsilon , \alpha \wedge \beta \vdash $} and, 
by {\bf$\wedge$Right-Elim} \mbox{$ \epsilon , \beta , \alpha \vdash $}.
By {\bf$\wedge$-Left}, we have \mbox{$ \epsilon \wedge \beta , \alpha \vdash $}. 
The following derivation shows that 
\mbox{$ \epsilon \wedge \beta , \neg \beta \vdash $}.

\( \begin{array}{rl}
{\bf Logical Axiom} & \beta , \neg \beta \vdash \\
\cline{2-2}
{\bf Exchange} & \neg \beta , \beta \vdash \\
\cline{2-2} 
{\bf Right-Weakening} & \neg \beta , \beta , \epsilon \vdash \\
\cline{2-2}
{\bf Right-\wedge} & \neg \beta , \epsilon \wedge \beta \vdash 
\end{array} \)

We see that $\epsilon \wedge \beta$ is a suitable $\delta$, i.e., 
\mbox{$ \epsilon \wedge \beta , \neg \beta \vdash $} and 
\mbox{$ \epsilon \wedge \beta , \neg \beta \vdash $}.
Therefore \mbox{$ \gamma , \epsilon \wedge \beta \vdash $}.
By {\bf $\wedge$Right-Elim}, we have \mbox{$ \gamma , \beta , \epsilon \vdash $}.
But \mbox{$ \gamma , \neg \beta \vdash $} and one easily derives
\mbox{$ \gamma , \epsilon \vdash $}.
We have shown that \mbox{$ \gamma \in \{ \epsilon \}^\bot $} for any 
\mbox{$ \epsilon \in \{ \alpha \wedge \beta \}^\bot $} and therefore
\mbox{$ \gamma \in \overline{ \{ \alpha \wedge \beta \} } $}.
We conclude by item~\ref{single-flat} above.
\item
We have:
\[
D_{\alpha \vee \beta} = \{ \gamma \in {\cal L} \mid \gamma , 
\neg ( \alpha  \vee \beta ) \vdash \} = 
\{ \gamma \in {\cal L} \mid \gamma , \neg \beta \wedge \neg \alpha \vdash \} = 
\]
\[
D_{\neg ( \neg \beta \wedge \neg \alpha )} = 
D_{\neg \beta \wedge \neg \alpha}^{\bot} = 
{( D_{\beta}^\bot \otimes D_{\alpha}^\bot )}^\bot = 
D_{\alpha} \oplus D_{\beta}.
\]
\item
Since \mbox{$ A \in {\cal F} $}, \mbox{$ A = D_{\alpha} $} for some
\mbox{$ \alpha \in {\cal L} $}.
For any \mbox{$ \beta \in {\cal L} $}, 
\mbox{$ D_{\beta} \otimes D_{\alpha} \, \bot \, D_{\beta} \otimes D_{\neg \alpha} $} 
and, by Lemma~\ref{the:symmetric}, item~\ref{bot-union}, 
\mbox{$ ( D_{\beta} \otimes D_{\alpha} ) \oplus 
( D_{\beta} \otimes D_{\neg \alpha} ) = $}
\mbox{$ \overline{ ( D_{\beta} \otimes D_{\alpha} ) \cup 
( D_{\beta} \otimes D_{\neg \alpha} ) } $}.
By items~\ref{Dotimes} and~\ref{Doplus} above and, then, by item~\ref{single-flat} 
and Lemma~\ref{the:symmetric}, item~\ref{bar-cup}, 
we have \mbox{$ D_{( \beta \wedge \alpha ) \vee ( \beta \wedge \neg \alpha )} = $}
\mbox{$ \overline{D_{\beta \wedge \alpha} \cup D_{\beta \wedge \neg \alpha}} = $}
\mbox{$ \overline{ \{ \beta \} \otimes A \cup 
\{ \beta \} \otimes A^\bot } $}.
We shall now show that
\mbox{$ \beta \in D_{ ( \beta \wedge \alpha ) \vee ( \beta \wedge \neg \alpha ) } $} 
i.e., \mbox{$ \beta , \neg ( ( \beta \wedge \alpha ) \vee ( \beta \wedge \neg \alpha ) ) \vdash $}.
First, consider the derivation:

\( \begin{array}{rlrl}
{\bf Logical Axiom} & \alpha ,  \neg \alpha \vdash \\
\cline{2-2}
{\bf Exchange} & \neg \alpha , \alpha \vdash \\
\cline{2-2}
{\bf Right-Weakening} & \neg \alpha , \alpha , \beta \vdash \\
\cline{2-2}
{\bf Right-\wedge} & \neg \alpha , \beta \wedge \alpha \vdash \\ 
\cline{2-2}
{\bf Left-Weakening} & \beta , \neg \alpha , \beta \wedge \alpha \vdash \\
\cline{2-2}
& \beta , \neg \alpha , \neg ( \neg \alpha \vee \neg \beta ) \vdash 
& {\bf \neg\vee 1} & \beta , \neg \alpha , \alpha \vee \neg \beta \vdash \\
\cline{2-4}
{\bf Stuttering} 
& \beta , \neg \alpha , \neg \alpha \vee \neg \beta , \alpha \vee \neg \beta \vdash
\end{array} \)

Then consider:

\( \begin{array}{rl}
{\bf \neg\vee 1} & \beta , \alpha , \neg \alpha \vee \neg \beta \vdash \\
\cline{2-2}
{\bf Right-Weakening} & 
\beta , \alpha , \neg \alpha \vee \neg \beta , \alpha \vee \neg \beta \vdash 
\end{array} \)

By {\bf Cut} we obtain
\mbox{$ \beta , \neg \alpha \vee \neg \beta , \alpha \vee \neg \beta \vdash $}.
By {\bf Right-$\wedge$}, then
\[
\beta , ( \alpha \vee \neg \beta ) \wedge ( \neg \alpha \vee \neg \beta ) \vdash ,
{\rm \ i.e., \ }
\beta , \neg ( ( \beta \wedge \alpha ) \vee ( \beta \wedge \neg \alpha ) ) \vdash.
\]
We have shown that \mbox{$ \beta \in D_{( \beta \wedge \alpha) \vee 
(\beta \wedge \neg \alpha ) } $}.

For the second claim, we need to show that 
\mbox{$ D_{\beta \wedge \alpha} \subseteq $} 
\mbox{$ \overline{ D_{\beta}  \cup D_{\beta \wedge \neg \alpha } } $}, 
equivalently, to show the validity of the rule

\( \begin{array}{ccc}
\gamma , \neg ( \beta \wedge \alpha ) \vdash & \delta , \neg \beta \vdash 
& \delta , \neg ( \beta \wedge \neg \alpha ) \vdash \\
\hline
& \gamma , \delta \vdash
\end{array} \)

Derivation1:

\( \begin{array}{rl}
{\bf Assumption} & \delta , \alpha \vee \neg \beta \vdash \\
\cline{2-2}
{\bf \vee Right-Elim} & \delta , \alpha \vdash \\
\cline{2-2}
{\bf Right-Weakening} & \delta , \alpha , \gamma \vdash 
\end{array} \)

Derivation2:

\( \begin{array}{rlrl}
{\bf Assumption} & \gamma , \neg \alpha \vee \neg \beta \vdash \\
\cline{2-2} 
{\bf \vee Right-Elim} & \gamma , \neg \alpha \vdash \\
\cline{2-2}
{\bf Exchange} & \neg \alpha , \gamma 
& {\bf Assumption} & \delta , \neg \beta \vdash\\
\cline{2-2} \cline{4-4}
{\bf Left-W.} & \delta , \beta , \neg \alpha , \gamma \vdash 
& {\bf Right-W.} & \delta , \neg \beta , \neg \alpha , \gamma \vdash \\
\cline{2-4}
{\bf Cut} & \delta , \neg \alpha , \gamma \vdash 
& {\bf Derivation1} & \delta , \alpha , \gamma \vdash \\
\cline{2-4}
& {\bf Cut} & \delta , \gamma \vdash
\end{array} \)
\item
Since \mbox{$ A \in {\cal F} $}, \mbox{$ A = D_{\alpha}$} for some 
\mbox{$ \alpha \in {\cal L} $} and, by item~\ref{single-flat} above,
\mbox{$ A = \overline{\{ \alpha \}} $}.
By Lemma~\ref{the:symmetric}, \mbox{$ A \otimes B = $} 
\mbox{$ \{ \alpha \} \otimes B $}, proving our claim.
\end{enumerate}
\end{proof}

We may, now, conclude.
\begin{theorem} \label{the:logical-O-space}
The structure $\Omega$ is an O-space.
\end{theorem}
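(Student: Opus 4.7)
The plan is to verify the five defining conditions of an O-space from Definition~\ref{def:O-space} for the structure $\Omega$, by simply harvesting what has already been established in Lemmas~\ref{the:pre-space} and~\ref{the:Omega-flats}. At this point in the paper there is no new conceptual work to do: the soundness of the deduction rules, the tight correspondence $D_\alpha = \overline{\{\alpha\}}$, and the analysis of how $\otimes$ and $\oplus$ interact with the connectives have already been carried out. The proof is therefore a short checklist.

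First I would dispatch \textbf{S} and \textbf{Z} by quoting Lemma~\ref{the:pre-space} verbatim: symmetry of $\bot$ follows from the \textbf{Exchange} rule, and the identification $Z = \{\alpha \in \mathcal{L} \mid {\alpha \vdash}\}$ together with \textbf{Right-Weakening} gives that any $\alpha$ with $\alpha \bot \alpha$ is orthogonal to every proposition.

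Next I would assemble \textbf{F}. Item~\ref{single-flat} of Lemma~\ref{the:Omega-flats} gives $D_\alpha = \overline{\{\alpha\}}$, hence every member of $\mathcal{F}$ is a flat and $\overline{\{\alpha\}} \in \mathcal{F}$ for each $\alpha \in \mathcal{L}$. Closure of $\mathcal{F}$ under orthogonal complement is item~\ref{Dneg} via $D_\alpha^\bot = D_{\neg \alpha}$, membership of $Z$ is item~\ref{DZ} via $Z = D_{\neg(\alpha \wedge \neg\alpha)}$, and closure under $\otimes$ is item~\ref{Dotimes} via $D_\alpha \otimes D_\beta = D_{\alpha \wedge \beta}$. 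Thus every closure condition required by \textbf{F} is realised as a syntactic construction on propositions.

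Finally, \textbf{O} is precisely item~\ref{DO} of Lemma~\ref{the:Omega-flats} (noting that any $A \in \mathcal{F}$ has the form $D_\alpha = \overline{\{\alpha\}}$, so the restriction of property \textbf{O} in the Definition to singletons and arbitrary $A \in \mathcal{F}$ is exactly what item~\ref{DO} verifies), and \textbf{A} is item~\ref{DA}, which reduces $A \otimes B$ with $A = \overline{\{\alpha\}}$ to $\{\alpha\} \otimes B$ and then concludes via Lemma~\ref{the:symmetric}. The only subtlety worth flagging is that all of the genuinely delicate proof-theoretic work, in particular the two directions needed for \textbf{O} (showing $\beta \in D_{(\beta \wedge \alpha)\vee(\beta\wedge\neg\alpha)}$ and $D_{\beta\wedge\alpha} \subseteq \overline{D_\beta \cup D_{\beta\wedge\neg\alpha}}$), has already been discharged by the explicit derivations in Lemma~\ref{the:Omega-flats}; the present theorem is just the act of collecting those facts and stating that the five clauses of Definition~\ref{def:O-space} are thereby satisfied. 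Hence I expect the proof itself to be a one-paragraph citation.
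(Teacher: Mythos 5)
Your proposal is correct and matches the paper exactly: the paper gives no separate proof for this theorem, stating it as an immediate consequence of Lemma~\ref{the:pre-space} (for \textbf{S} and \textbf{Z}) and Lemma~\ref{the:Omega-flats} (for \textbf{F}, \textbf{O} and \textbf{A}), which is precisely the checklist you assemble.
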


\subsubsection{An assignment of flats to propositions} \label{sec:assignment}
In the O-space $\Omega$ we define an assignment $v$ that assigns an element of
$\cal F$ to any atomic proposition. 
For any \emph{atomic} proposition $\sigma$, let
\begin{equation} \label{eq:vdef}
v(\sigma) \eqdef D_{\sigma}.
\end{equation}
Note that $v(\sigma)$ is indeed an element of $\cal F$.
As explained in Definition~\ref{def:interpretation} the assignment function $v$ 
generalizes to arbitrary elements of $\cal L$. 
Note that the assignment function $v$ assigns a set of propositions to any proposition.

\begin{lemma} \label{the:v}
For any proposition \mbox{$\alpha \in {\cal L} $}, \mbox{$ v ( \alpha ) = $}
\mbox{$ D_{\alpha} $}.
\end{lemma}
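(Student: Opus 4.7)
The plan is to proceed by structural induction on the proposition $\alpha \in \mathcal{L}$, following the inductive structure of Definition~\ref{def:propositions-restricted}. The base case consists of atomic propositions, and the inductive step splits into three cases corresponding to the three clauses producing compound formulas in the restricted language.

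For the base case, if $\alpha = \sigma$ is atomic, then $v(\sigma) = D_\sigma$ holds immediately by Equation~(\ref{eq:vdef}). The second base-like case, $\alpha = \neg\sigma$ for atomic $\sigma$, is handled by Definition~\ref{def:interpretation} together with Lemma~\ref{the:Omega-flats}, item~\ref{Dneg}: $v(\neg\sigma) = v(\sigma)^\bot = D_\sigma^\bot = D_{\neg\sigma}$.

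For the inductive step, suppose the claim holds for $\beta$ and $\gamma$. If $\alpha = \beta \wedge \gamma$, then by Definition~\ref{def:interpretation}, the induction hypothesis, and Lemma~\ref{the:Omega-flats}, item~\ref{Dotimes}, we have $v(\beta \wedge \gamma) = v(\beta) \otimes v(\gamma) = D_\beta \otimes D_\gamma = D_{\beta \wedge \gamma}$. Similarly, if $\alpha = \beta \vee \gamma$, then by Definition~\ref{def:interpretation}, the induction hypothesis, and Lemma~\ref{the:Omega-flats}, item~\ref{Doplus}, we obtain $v(\beta \vee \gamma) = v(\beta) \oplus v(\gamma) = D_\beta \oplus D_\gamma = D_{\beta \vee \gamma}$.

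There is no substantial obstacle here: all the real work has already been done in Lemma~\ref{the:Omega-flats}, which established that the family $\mathcal{F} = \{D_\alpha \mid \alpha \in \mathcal{L}\}$ is closed under the syntactic-semantic correspondence connective by connective. The present lemma simply assembles these facts by induction. One technical subtlety worth stating cleanly is that the inductive structure of $\mathcal{L}$ in Definition~\ref{def:propositions-restricted} does not allow arbitrary nested negations (only negations of atoms), so the induction is well-founded with exactly four clauses, and no appeal to Lemma~\ref{the:equivalence} (pushing negations inward) is needed to close the argument.
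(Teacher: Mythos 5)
Your proof is correct and follows essentially the same route as the paper's: structural induction over the four clauses of Definition~\ref{def:propositions-restricted}, with the atomic case settled by Equation~(\ref{eq:vdef}) and the remaining cases by items~\ref{Dneg}, \ref{Dotimes} and~\ref{Doplus} of Lemma~\ref{the:Omega-flats}. No discrepancies to report.
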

\begin{proof}
Note that the claim implies that \mbox{$v ( \alpha ) $} is an element of $\cal F$, 
by Lemma~\ref{the:Omega-flats}.
We reason by induction on the structure of $\alpha$. 
\begin{itemize}
\item
For an atomic proposition $\sigma$, the claim about \mbox{$v ( \sigma ) $} 
holds by construction, see Equation~(\ref{eq:vdef}).
\item
For the negation of an atomic proposition, \mbox{$ v ( \neg \sigma ) =$} 
\mbox{$ { v ( \sigma ) }^\bot $} by definition and we conclude by 
Lemma~\ref{the:Omega-flats}, item~\ref{Dneg}.
\item
For a proposition \mbox{$ \alpha \wedge \beta $}, by definition and the induction hypothesis,
\mbox{$ v ( \alpha \wedge \beta ) = $} 
\mbox{$ v( \alpha ) \otimes v ( \beta ) = $} \mbox{$ D_{\alpha} \otimes D_{\beta} $}
and we conclude by Lemma~\ref{the:Omega-flats}, item~\ref{Dotimes}.
\item
For a proposition \mbox{$ \alpha \vee \beta $}, similarly, 
by Lemma~\ref{the:Omega-flats}, item~\ref{Doplus}.
\end{itemize}
\end{proof}

\begin{theorem} [Completeness] \label{the:omega}
Any sequent valid in the O-space $\Omega$ under the assignment $v$ can be derived in
the deductive system $\cal R$ defined in Section~\ref{sec:soundness}.
\end{theorem}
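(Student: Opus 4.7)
The plan is to reduce the theorem to a short consequence of Lemma~\ref{the:v} (which says $v(\alpha) = D_{\alpha}$ for every proposition) and Lemma~\ref{the:pre-space} (which identifies $Z$ with $\{\alpha \in {\cal L} \mid \alpha \vdash \}$). By the reductions of Section~\ref{sec:language-restricted} and Lemma~\ref{the:passing} it suffices to treat sequents of the form $\alpha_{0} , \ldots , \alpha_{n} \vdash$ with empty right-hand side. The case of an empty left-hand side is vacuous, since $X \not\subseteq Z$ in $\Omega$ (e.g., for an atomic $\sigma$ the sequent $\sigma \vdash$ is not derivable, so $\sigma \in X \setminus Z$).

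Suppose, then, that $\Gamma \vdash \, = \, \alpha_{0} , \ldots , \alpha_{n} \vdash$ is valid in $\Omega$ under $v$. The semantic clause of Equation~(\ref{eq:seq-interpretation}) asserts $v(\phi(\Gamma)) \subseteq Z$, which by Lemma~\ref{the:v} rewrites as $D_{\phi(\Gamma)} \subseteq Z$. The decisive observation is that $\phi(\Gamma)$ lies in $D_{\phi(\Gamma)}$: the derived {\bf Logical Axiom} (established in Section~\ref{sec:derived}) gives $\phi(\Gamma) , \neg \phi(\Gamma) \vdash$, which is exactly the defining condition of Definition~\ref{def:D-def}. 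Therefore $\phi(\Gamma) \in Z$, and Lemma~\ref{the:pre-space} then produces an $\cal R$-derivation of the single-formula sequent $\phi(\Gamma) \vdash$.

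It remains to peel the left-associated conjunctions. Since $\phi(\Gamma) = \phi(\alpha_{0} , \ldots , \alpha_{n - 1}) \wedge \alpha_{n}$, one application of the derived rule {\bf $\wedge$ Left-Elim} converts $\phi(\Gamma) \vdash$ into $\phi(\alpha_{0} , \ldots , \alpha_{n - 1}) , \alpha_{n} \vdash$; the left-most formula is itself of the form $\psi \wedge \alpha_{n - 1}$, so a further {\bf $\wedge$ Left-Elim} applies, and so on. Iterating $n$ times yields $\alpha_{0} , \alpha_{1} , \ldots , \alpha_{n} \vdash$, which is $\Gamma \vdash$ itself.

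The substantive obstacles have already been dispatched in the preceding subsections: verifying that $\Omega$ is an O-space (Theorem~\ref{the:logical-O-space}, resting on the delicate Lemma~\ref{the:Omega-flats}, in particular properties {\bf O} and {\bf A}) and showing that $v$ extends to the identity $v(\alpha) = D_{\alpha}$ on all of $\cal L$ (Lemma~\ref{the:v}). Given that groundwork, the theorem itself is essentially bookkeeping; the only point worth flagging is the recognition that $\phi(\Gamma)$ is its own canonical witness forcing the inclusion $D_{\phi(\Gamma)} \subseteq Z$ to collapse into derivability of $\phi(\Gamma) \vdash$, after which {\bf $\wedge$ Left-Elim} does the rest.
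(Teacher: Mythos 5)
Your proof is correct and follows essentially the same route as the paper's: identify $v(\phi(\Gamma))$ with $D_{\phi(\Gamma)}$ via Lemma~\ref{the:v}, use the derived \textbf{Logical Axiom} to place $\phi(\Gamma)$ in $D_{\phi(\Gamma)}$ and hence in $Z$, read off derivability of $\phi(\Gamma)\vdash$ from Lemma~\ref{the:pre-space}, and then iterate \textbf{$\wedge$Left-Elim}. Your explicit treatment of the empty left-hand side and the observation that the semantic clause gives an inclusion rather than an equality are minor refinements the paper leaves implicit, but the argument is the same.
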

\begin{proof}
Assume \mbox{$ \alpha_{0} , \ldots , \alpha_{n - 1} \models $} is valid in $\Omega$ under $v$.
We have 
\[
v ( ( \ldots ( \alpha_{0} \wedge \alpha_{1} ) \wedge \ldots ) \wedge \alpha_{n - 1} ) = Z
\]
and, by Lemma~\ref{the:v},
\mbox{$ D_{ ( \ldots ( \alpha_{0} \wedge \alpha_{1} ) \wedge \ldots ) \wedge \alpha_{n - 1} } = Z $}.
By {\bf Logical Axiom}, now, for any \mbox{$ \alpha \in {\cal L} $},
\mbox{$ \alpha \in D_{\alpha} $} and therefore we have
\mbox{$ ( \ldots ( \alpha_{0} \wedge \alpha_{1} ) \wedge \ldots ) \wedge 
\alpha_{n - 1} \in Z $}. 
We see that
\mbox{$ ( \ldots ( \alpha_{0} \wedge \alpha_{1} ) \wedge \ldots ) \wedge \alpha_{n - 1} \vdash $} and, by {\bf $\wedge$Left-Elim},
\mbox{$ \alpha_{0} , \ldots , \alpha_{n - 1} \vdash $}.
\end{proof}
We may now summarize.
\begin{theorem} \label{the:completeness}
The deductive system $\cal R$ described in Section~\ref{sec:soundness} is sound and
complete for O-spaces:
a sequent is valid ( in any O-space, under any assignment ) iff it can be derived 
in the system $\cal R$.
\end{theorem}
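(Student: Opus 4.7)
The plan is to obtain the theorem immediately as a corollary of results already established. The statement is a biconditional, so I would treat the two directions separately and cite the appropriate prior results rather than repeat any of the technical work.

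For the forward direction, that every derivable sequent is valid in every O-space under every assignment, I would proceed by a routine induction on the length of the derivation. The base cases are the axioms {\bf $\neg$Atomic} and {\bf $\neg\vee 1$}, whose validity was already checked in Section~\ref{sec:soundness}. The inductive step appeals to the ten soundness verifications collected in Theorem~\ref{the:soundness}: if the premises of an applied rule are valid in a given O-space under a given assignment, so is its conclusion. Nothing new is needed here.

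For the converse direction, I would instantiate the notion of validity at the single logical O-space $\Omega$ constructed in Section~\ref{sec:logical-O-space}. By Theorem~\ref{the:logical-O-space}, $\Omega$ really is an O-space, and the assignment $v$ of Equation~(\ref{eq:vdef}) is well defined. Hence any sequent valid in every O-space under every assignment is, in particular, valid in $\Omega$ under $v$. Theorem~\ref{the:omega} then delivers a derivation of the sequent in $\cal R$, which closes the chain.

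The real obstacle has been overcome in the previous sections rather than in this final wrap-up: showing that $\Omega$ is an O-space and that $v(\alpha) = D_{\alpha}$ for every proposition $\alpha$ required the full strength of the derived rules of Section~\ref{sec:derived} (notably the logical-equivalence machinery of Lemmas~\ref{the:implication} and~\ref{the:equiv-ppties}) and the item-by-item verification of properties {\bf F}, {\bf O}, {\bf A} in Lemma~\ref{the:Omega-flats}. Granted those, the proof of Theorem~\ref{the:completeness} is only a short assembly; I would simply write one sentence invoking Theorem~\ref{the:soundness} for one direction and one sentence invoking Theorems~\ref{the:logical-O-space} and~\ref{the:omega} for the other.
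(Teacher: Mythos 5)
Your proposal is correct and matches the paper's own proof, which is exactly the two-line citation of Theorem~\ref{the:soundness} for soundness and Theorem~\ref{the:omega} for completeness. Your additional remarks (the induction on derivation length behind soundness, and the role of Theorem~\ref{the:logical-O-space} in legitimizing $\Omega$) merely make explicit what the paper leaves implicit.
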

\begin{proof}
By Theorems~\ref{the:soundness} and~\ref{the:omega}.
\end{proof}

\section{Future research} \label{sec:future}
Here are a number of topics that deserve further attention.
\begin{itemize}
\item
Is system $\cal R$ complete for Hilbert spaces and projection, i.e., are there
rules of inference that are valid in Hilbert spaces (and therefore in Quantum Logic) 
but cannot be derived in $\cal R$?
\item
Can one prove a Cut elimination theorem and/or some normalization theorem for 
the proof system of Section~\ref{sec:rules}?
\item
Morphisms in O-spaces: what are the families of morphisms that generalize
linear and auto-adjoint functions?
\item
Mixed states: can a mixed state be modeled by an element of an O-space?
\item
Since the structures presented above include classical, quantic and all sorts of 
intermediate structures, they seem to be ideal for studying the boundaries between 
classical and quantic behavior.
\item
The $\otimes$ connective is some kind of \emph{and then} connective. 
Can this non-commutative conjunction help model conjunction in natural language? 
\item
In an effort to understand the concepts of Quantum Physics, the description of its 
logic underpinnings can only be a small start.
It can only be expected to provide to QP what is provided to classical physics
by the logic of elementary set theory, or boolean algebras, not much more than
a basic language.
Structures richer than O-spaces are necessary for doing Quantum Physics.
Such structures must provide a measure of the probability of the different orthogonal
possible results of a measurement.
They must also be capable of expressing the dynamics of a quantic system that is
described by the Schr\"{o}dinger equation.
Very preliminary ideas can be found in Section 7 of~\cite{Lehmann-metalinear1:2022}.
\end{itemize}

\section{Acknowledgments} \label{sec:ack}
My deepest thanks to Kurt Engesser who directed me to the volume of 
John von Neumann's letters~\cite{vNeumann_letters} edited by Mikl\'{o}s R\'{e}dei.
Von Neumann's letter dated Nov. 13, Wednesday, 1935, addressed to Garret
Birkhoff during the elaboration of~\cite{BirkvonNeu:36} convinced me that my intuitions
were worth pursuing.
The letter is quoted in~\cite{Lehmann-metalinear1:2022}.
Thanks also to Jean-Marc Levy-Leblond who straightened me out on QM and Michael Freund, 
Daniel Weil and Menachem Magidor who commented on drafts of this paper.
\bibliographystyle{plain}

\end{document}